\newtheorem{lemma}{Lemma}
\newtheorem{observation}{Observation}
\newtheorem*{assumptions}{Assumptions}
\newcommand{\tr}{{\mathrm{tr}}}
\newcommand{\eins}{\mathbbm{1}}
\renewcommand{\vr}{\ensuremath{\varrho}}
\begin{document}
\title{Real randomized measurements for analyzing properties of quantum states}
\author{Jin-Min Liang}
\affiliation{State Key Laboratory for Mesoscopic Physics, School of Physics, Frontiers Science Center for Nano-optoelectronics, $\&$ Collaborative Innovation Center of Quantum Matter, Peking University, Beijing 100871, China}
\author{Satoya Imai}
\email{satoyaimai@yahoo.co.jp}
\affiliation{QSTAR, INO-CNR, and LENS, Largo Enrico Fermi, 2, 50125 Firenze, Italy}
\affiliation{Institute of Systems and Information Engineering, University of Tsukuba, Tsukuba, Ibaraki 305-8573, Japan}
\affiliation{Center for Artificial Intelligence Research (C-AIR), University of Tsukuba, Tsukuba, Ibaraki 305-8577, Japan}
\author{Shuheng Liu}
\affiliation{State Key Laboratory for Mesoscopic Physics, School of Physics, Frontiers Science Center for Nano-optoelectronics, $\&$ Collaborative Innovation Center of Quantum Matter, Peking University, Beijing 100871, China}
\author{Shao-Ming Fei}
\affiliation{School of Mathematical Sciences, Capital Normal University, Beijing 100048, China}
\author{Otfried G\"uhne}
\affiliation{Naturwissenschaftlich-Technische Fakult\"at, Universit\"at Siegen, Walter-Flex-Stra{\ss}e~3, 57068 Siegen, Germany}
\author{Qiongyi He}
\email{qiongyihe@pku.edu.cn}
\affiliation{State Key Laboratory for Mesoscopic Physics, School of Physics, Frontiers Science Center for Nano-optoelectronics, $\&$ Collaborative Innovation Center of Quantum Matter, Peking University, Beijing 100871, China}
\affiliation{Collaborative Innovation Center of Extreme Optics, Shanxi University, Taiyuan, Shanxi 030006, China}
\affiliation{Hefei National Laboratory, Hefei 230088, China}
\date{\today}
	
\begin{abstract}
    Randomized measurements are useful for analyzing quantum systems especially when quantum control is not fully perfect. However, their practical realization typically requires multiple rotations in the complex space due to the adoption of random unitaries. Here, we introduce two simplified randomized measurements that limit rotations in a subspace of the complex space. The first is \textit{real randomized measurements} (RRMs) with orthogonal evolution and real local observables. The second is \textit{partial real randomized measurements} (PRRMs) with orthogonal evolution and imaginary local observables. We show that these measurement protocols exhibit different abilities in capturing correlations of bipartite systems. We explore various applications of RRMs and PRRMs in different quantum information tasks such as characterizing high-dimensional entanglement, quantum imaginarity, and predicting properties of quantum states with classical shadow.
\end{abstract}
	
\maketitle
	
\section{Introduction}
Recent experimental progress in controlling and manipulating large quantum systems has enabled quantum advantages beyond classical regimes~\cite{arute2019quantum,wu2021strong,madsen2022quantum}. Extracting useful quantum information via measurements is essential for characterizing and analyzing large quantum systems. However, the curse of dimensionality poses challenges in reconstructing large density matrices by state tomography~\cite{mohseni2008quantum,flammia2012quantum}. Although several computationally and experimentally efficient tomography methods have been designed~\cite{torlai2018neural,quek2021adaptive,ahmed2021quantum,cramer2010efficient,toth2010permutationally,gross2010quantum}, these methods either still require complex optimization or are limited to specific states.
	
A feasible strategy is randomized measurements (RMs), which rotate measurement directions arbitrarily on each subsystem and collect measured data to analyze the properties of quantum states~\cite{elben2023randomized,cieslinski2024analysing}. Applications have emerged in various domains, including quantum simulation~\cite{kokail2021entanglement,notarnicola2023randomized}, state overlaps~\cite{elben2020cross,zhu2022cross,joshi2023exploring}, classical shadow tomography~\cite{huang2020predicting,akhtar2023scalable}, and the estimation of R{\'e}nyi entropies~\cite{elben2018renyi,brydges2019probing,elben2019statistical}, partial transpose moments~\cite{zhou2020single,elben2020mixed,neven2021symmetry,yu2021optimal,carrasco2024entanglement}, and quantum Fisher information~\cite{yu2021experimental,rath2021quantum,vitale2024estimation}.
	
In RMs, Alice and Bob apply local random unitaries $U_A \otimes U_B$ to a shared quantum state $\vr$, followed by local measurements $M_A \otimes M_B$. By sampling from the Haar measure on unitary group $\boldsymbol{\mathrm{U}}(d)$, they obtain the $t$-order moment as follows:
\begin{equation}\label{eq:RAB}
    R_{AB}^{(t)}\!=\!\! \int \! dU_A \! \! \int \! dU_B \,[\tr(\vr U_AM_AU_A^{\dag}\otimes U_BM_BU_B^{\dag})]^{t}.
\end{equation}
These moments provide an approach to characterize the quantum entanglement ~\cite{tran2015quantum,tran2016correlations,ketterer2019characterizing}. The techniques have been extended to high-dimensional systems~\cite{imai2021bound,liu2023characterizing,wyderka2023probing} and multiqubit systems~\cite{knips2020multipartite,ketterer2022statistically,wyderka2023complete}.

\begin{figure}[t]
    \centering
    \includegraphics[scale=0.3]{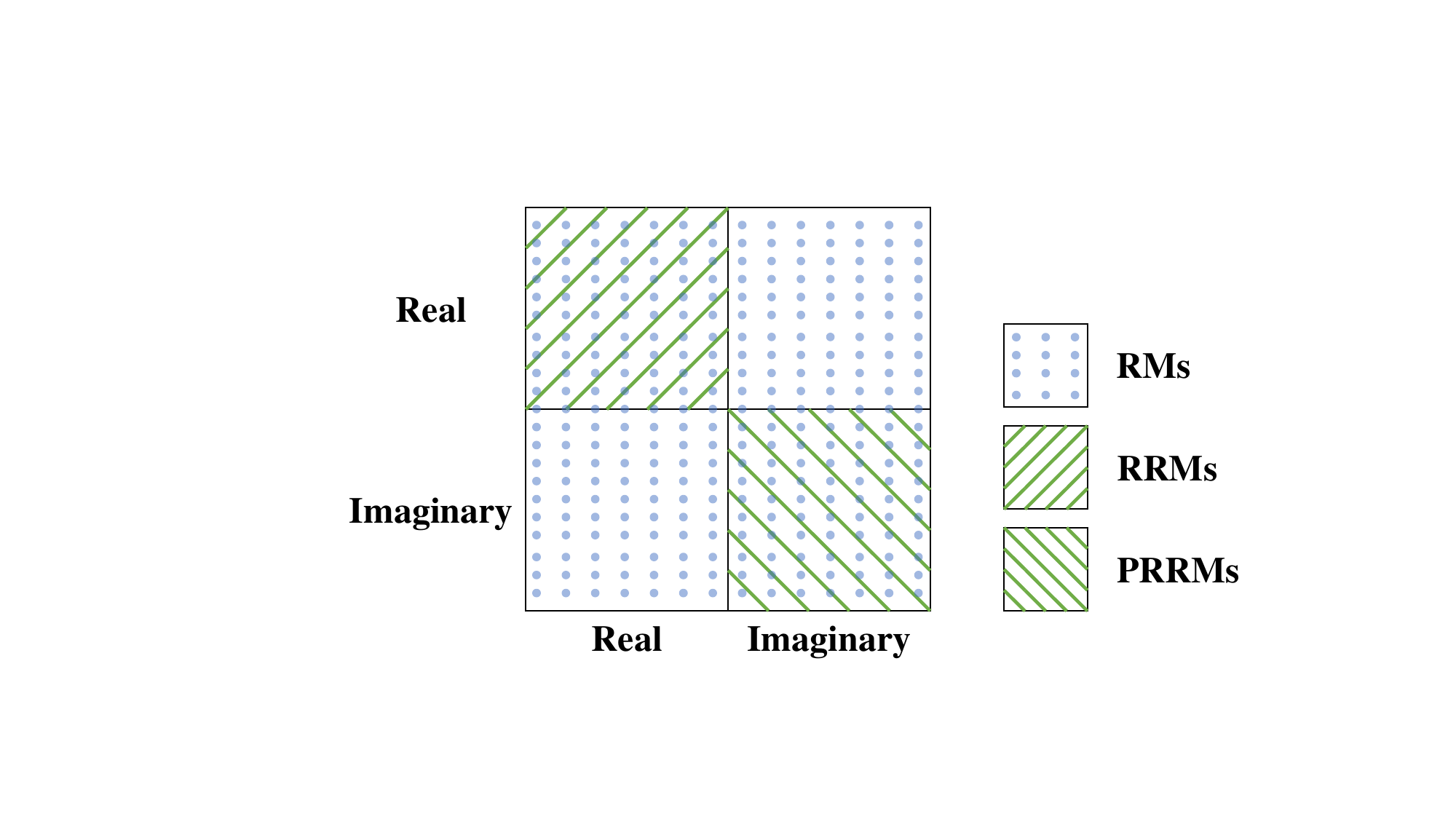}
    \caption{Layout for characterizing the correlations of bipartite quantum states from RMs, RRMs, and PRRMs, respectively, as given by Eqs.~(\ref{eq:RAB},\ref{RRMs},\ref{PRRMs}), formulated in Observation~\ref{Ob:correlation}. The horizontal and vertical axes correspond to one of the two subsystems each.}
    \label{Fig1}
\end{figure}

Although RMs are effective tools for analyzing quantum states, they still face several challenges. First, the practical implementation of RMs often requires multiple rotations in the whole complex space by successive tuning $d^2-1$ parameters for a $d$-dimension space. This may increase the susceptibility to environmental interactions and decoherence. Second, for frequently involved real quantum states~\cite{hardy2012limited,wootters2015optimal,ulyanov1992new,mckague2009simulating,wootters2012entanglement,aleksandrova2013real}, employing RMs designed for the complex full space as in Eq.~(\ref{eq:RAB}) may lead to unnecessary consumption of experimental resources. This insight was used for real randomized benchmarking to characterize the average gate fidelity over real-valued operators~\cite{hashagen2018real}. Finally, the moments in Eq.~(\ref{eq:RAB}) identify quantum correlations from both real and imaginary parts of quantum states, see Fig.~\ref{Fig1}, which hinders the analysis of the imaginarity of quantum states, known as a kind of quantum resource~\cite{hickey2018quantifying,wu2021operational,wu2021resource,kondra2023real,wu2024resource,fernandes2024unitary}.

In this paper, we generalize standard RMs to address these issues. In the following, we call a Hermitian operator $X$ real or imaginary if $X = X^\top$ or $X = - X^\top$, respectively, where $\top$ denotes the transposition. In our protocols, we consider assumptions of the following type:
\begin{assumptions}
    (i) The random unitaries of the form $U = e^{i\theta}O$ for $\theta \in [0, 2\pi)$ and orthogonal $O$ with $OO^{\top}=O^{\top}O = \eins$. (ii) The observables $M_A$ and $M_B$ are real. (iii) The observables $M_A$ and $M_B$ are imaginary.
\end{assumptions}
	
Here we call RMs \textit{real randomized measurements} (RRMs) if the RMs meet conditions (i) and (ii) in {Assumptions}, and \textit{partial real randomized measurements} (PRRMs) if the RMs meet (i) and (iii). First, we define the $t$-order moments of RRMs and PRRMs and discuss their relation with RMs. Specifically, RRMs and PRRMs extract real-real and imaginary-imaginary partial correlations of a bipartite state, while RMs extract all correlations as a whole. Then, we show that RRMs can characterize bipartite high-dimensional entanglement, including bound entanglement~\cite{peres1996separability,horodecki1997separability,horodecki1998mixed,bennett1999unextendible} and the Schmidt dimensionality~\cite{terhal2000schmidt,sanpera2001schmidt,krebs2024high}. We demonstrate that RRMs and RMs have equivalent entanglement detection strength for states with real-real correlations, but RRMs need fewer experimental steps than RMs. Next, combining RMs, RRMs, and PRRMs, we formulate a framework to detect the imaginarity of high-dimensional states and provide a lower bound for the existing imaginarity measure~\cite{hickey2018quantifying}. Our strategy does not require well-controlled measurement settings and maintains robustness against local orthogonal noise. Finally, we apply RRMs to classical shadow tomography~\cite{chen2021robust,nguyen2022optimizing}, requiring fewer random operations up to a fixed estimation precision than RMs.

\section{Real randomized measurements}
Suppose that Alice and Bob share a two-qudit state $\vr$. Following (i) in Assumptions, they apply orthogonal matrices $O_A$ and $O_B$ sampled via the Haar measure on the orthogonal group $\boldsymbol{\mathrm{O}}(d)$. The RRMs allow them to measure real observables $M_A$ and $M_B$, while the PRRMs allow them to measure imaginary observables $\hat{M}_A$ and $\hat{M}_B$. Here and hereafter, we use the notation $\hat{\square}$ to denote quantities associated with imaginarity. Similar to $t$-order moments of RMs~\cite{cieslinski2024analysing}, the $t$-order moments of RRMs and PRRMs are respectively defined as
\begin{align}
    Q_{AB}^{(t)}\!&=\!\! \int \! dO_A \! \! \int \! dO_B \,[\tr(\vr O_AM_AO_A^{\top}\otimes O_BM_BO_B^{\top})]^{t},\label{RRMs}\\
    \hat{Q}_{AB}^{(t)}\!&=\!\! \int \! dO_A \! \! \int \! dO_B \,[\tr(\vr O_A\hat{M}_AO_A^{\top}\otimes O_B\hat{M}_BO_B^{\top})]^{t}.
    \label{PRRMs}
\end{align}
	
Note that the moments $Q_{AB}^{(t)}$ and $\hat{Q}_{AB}^{(t)}$ are invariant under any local orthogonal operations, while the moment $R_{AB}^{(t)}$ is invariant under local unitary operations. That is, $Q_{AB}^{(t)}(\vr) = Q_{AB}^{(t)}(W_A \otimes W_B \vr W_A^{\top} \otimes W_B^{\top})$ for any $W_A, W_B \in \boldsymbol{\mathrm{O}}(d)$, and similarly for $\hat{Q}_{AB}^{(t)}$, while $R_{AB}^{(t)}(\vr) = R_{AB}^{(t)}(V_A \otimes V_B \vr V_A^{\dag} \otimes V_B^{\dag})$ for any $V_A, V_B \in \boldsymbol{\mathrm{U}}(d)$. In other words, the moments $Q_{AB}^{(t)}$ and $\hat{Q}_{AB}^{(t)}$ characterize two-qudit states in a real-reference-frame-independent manner~\cite{bartlett2007reference,klockl2015characterizing}.
	
Let us discuss the difference among RMs, RRMs, and RRMs. Recall that any two-qudit state $\vr$ can be written as
\begin{equation}\label{TwoQudit}
    \vr=\frac{1}{d^2}\sum_{j,k=0}^{d^2-1}T_{jk}\lambda_j\otimes\lambda_k,
\end{equation}
where $\lambda_0=\eins_d$ denotes the $d$-dimension identity and $\lambda_j$ are the generalized Gell-Mann matrices (GGMs) satisfying $\lambda_{j}=\lambda_{j}^{\dag}$, $\tr(\lambda_{j}\lambda_{k})=d\delta_{jk}$, and $\tr(\lambda_{j})=0$ for $j>0$ \cite{kimura2003bloch,bertlmann2008bloch}. Note that each of $(d^2-1)$ GGMs can be seen as an observable, belonging to one of the two types: (a) \textit{real} GGMs consisting of $L$ real observables: $\lambda_j^\top = \lambda_j$, associating with condition (ii) in Assumptions; or (b) \textit{imaginary} GGMs consisting of $\hat{L}$ imaginary observables: $\lambda_j^\top = - \lambda_j$, associating with condition (iii) in Assumptions. Here, we let first $L=(d-1)(d+2)/2$ GGMs be real GGMs and latter $\hat{L}=d(d-1)/2$ GGMs be imaginary GGMs, where $L + \hat{L} = d^2-1$, see Appendix~\ref{A_Preliminaries}.
	
The bipartite correlations in $\vr$ can then be divided into four types: real-real, imaginary-imaginary, real-imaginary, and imaginary-real ones, as shown in Fig.~\ref{Fig1}. We have the following observation.
	
\begin{observation}\label{Ob:correlation}
    The RMs given by Eq.(\ref{eq:RAB}) extract all four types of correlations of the $\vr$ as a whole, while RRMs and PRRMs given by Eqs.(\ref{RRMs}) and (\ref{PRRMs}) only extract the real-real and the imaginary-imaginary correlations of $\vr$, respectively, as shown in Fig.~\ref{Fig1}. The following relation holds for all $\vr$,
    \begin{equation}\label{ob1:eq1}
        Q_{AB}^{(t)}(\vr) + \hat{Q}_{AB}^{(t)}(\vr) \leq R_{AB}^{(t)}(\vr),\quad\forall \vr,
    \end{equation}
    where the equality is attained when $\vr$ is real, $\vr = \vr^{\top}$.
\end{observation}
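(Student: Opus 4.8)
The plan is to reduce all three moments to sums over the correlation tensor $T$ by expanding the twirled observables in the GGM basis, and then to read off which blocks of $T$ survive under each integration. Writing $O_AM_AO_A^{\top}=\tfrac1d\sum_j m_j^A(O_A)\lambda_j$ with $m_j^A(O_A)=\tr(O_AM_AO_A^{\top}\lambda_j)$, and using $\tr(\vr\,\lambda_j\otimes\lambda_k)=T_{jk}$, each integrand becomes $[\tfrac1{d^2}\sum_{j,k\ge1}m_j^Am_k^BT_{jk}]^{t}$. The first step is an elementary vanishing lemma: if $M$ is real ($M=M^{\top}$) and $\lambda_j$ is imaginary ($\lambda_j^{\top}=-\lambda_j$), then $m_j(O)=\tr(OMO^{\top}\lambda_j)=0$, because $OMO^{\top}$ is symmetric, $\lambda_j$ is antisymmetric, and $\tr(SA)=0$ for any symmetric $S$ and antisymmetric $A$ (since $\tr(SA)=\tr((SA)^{\top})=-\tr(SA)$); the mirror statement holds for imaginary $M$ against real $\lambda_j$. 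This immediately gives the ``which correlations'' claim for every $t$: in $Q_{AB}^{(t)}$ only real GGMs survive on both sides, so only the real–real block of $T$ enters, in $\hat Q_{AB}^{(t)}$ only the imaginary–imaginary block, whereas the unitary conjugation in $R_{AB}^{(t)}$ annihilates nothing and keeps all four blocks.

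For the inequality I would carry out the second-moment integration explicitly via the standard twirl identities. Over $\boldsymbol{\mathrm U}(d)$ the commutant of $U\otimes U$ is spanned by $\{\eins,\swap\}$, giving $\int dU\,(UMU^{\dag})^{\otimes2}=\alpha\eins+\beta\swap$; over $\boldsymbol{\mathrm O}(d)$ the commutant of $O\otimes O$ is three-dimensional, $\{\eins,\swap,K\}$ with $K=\sum_{ij}\ket{ii}\!\bra{jj}$, giving $\int dO\,(OMO^{\top})^{\otimes2}=a\eins+b\swap+cK$. Fixing the coefficients by tracing against $\eins,\swap,K$ and contracting with $\lambda_j\otimes\lambda_{j'}$ yields $\int dU\,m_j m_{j'}=\beta d\,\delta_{jj'}$, while $\int dO\,m_j m_{j'}=d\,\delta_{jj'}(b+c\,\epsilon_j)$ with $\epsilon_j=\pm1$ the parity of $\lambda_j$; the identity $\tr(K(OMO^{\top})^{\otimes2})=\tr(M^{\top}M)=\pm\tr(M^2)$ forces $c=b$ for real $M$ and $c=-b$ for imaginary $M$, reproducing the selection rule above. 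Assembling the pieces gives, up to observable-dependent prefactors, $R_{AB}^{(2)}\propto\sum_{j,k\ge1}T_{jk}^2$ over all blocks, $Q_{AB}^{(2)}\propto\sum_{\mathrm{real,real}}T_{jk}^2$, and $\hat Q_{AB}^{(2)}\propto\sum_{\mathrm{imag,imag}}T_{jk}^2$. With the canonical observables normalized so that the three prefactors coincide, the difference $R_{AB}^{(2)}-Q_{AB}^{(2)}-\hat Q_{AB}^{(2)}$ is proportional to the leftover real–imaginary and imaginary–real blocks $\sum_{\mathrm{real,imag}}T_{jk}^2+\sum_{\mathrm{imag,real}}T_{jk}^2\ge0$, which proves the inequality.

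The equality clause then follows from the transpose structure of the tensor: since $\vr^{\top}=\tfrac1{d^2}\sum_{j,k}T_{jk}\,\epsilon_j\epsilon_k\,\lambda_j\otimes\lambda_k$, the condition $\vr=\vr^{\top}$ is equivalent to $T_{jk}=0$ whenever $\epsilon_j\epsilon_k=-1$, i.e.\ to the vanishing of exactly the two cross blocks that constitute the gap; hence $\vr$ real forces $R_{AB}^{(2)}=Q_{AB}^{(2)}+\hat Q_{AB}^{(2)}$. I expect the main obstacle to live entirely in the inequality rather than the selection rule, and to be two-fold. First, matching the prefactors requires a deliberate normalization of the real, imaginary, and full canonical observables: the phase freedom $U=e^{i\theta}O$ drops out of the conjugation and is harmless, but the distinct Weingarten weights of $\boldsymbol{\mathrm O}(d)$ versus $\boldsymbol{\mathrm U}(d)$ do not cancel automatically. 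Second, promoting the argument from $t=2$ to general $t$ is delicate: the relevant averages are governed by the symmetric group for $\boldsymbol{\mathrm U}(d)$ and by the larger Brauer algebra for $\boldsymbol{\mathrm O}(d)$, and the $t$-th power couples the correlation blocks nonlinearly, so one must show that the surplus of $R_{AB}^{(t)}$ over the real–real and imaginary–imaginary contributions carried by $Q_{AB}^{(t)}+\hat Q_{AB}^{(t)}$ has a definite sign — most cleanly by a coupling or majorization argument at the level of the integrand, or by tracking the signs of the contributing Weingarten terms.
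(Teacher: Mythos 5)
Your proposal is correct, and for everything the paper actually establishes it follows essentially the same skeleton as the paper's proof: the GGM block decomposition of the correlation tensor $T$, the commutant structure ($\{\eins\otimes\eins,\swap\}$ for the unitary twirl versus $\{\eins\otimes\eins,\swap,\Pi\}$ for the orthogonal one, your $K$ being the paper's $\Pi$), the identification of the three second moments with the full, real--real, and imaginary--imaginary blocks of $T$, the non-negative cross-block gap, and the equivalence between $\vr=\vr^{\top}$ and the vanishing of the two cross blocks. Where you genuinely improve on the paper is the selection rule: the paper derives ``real observables only see real GGMs'' from the pseudo-Bloch-sphere computation of Appendix~\ref{C_Estimating}, whereas your lemma $\tr(SA)=0$ for symmetric $S$ and antisymmetric $A$, applied pointwise to $S=OMO^{\top}$, kills the unwanted coefficients $m_j(O)$ inside the integrand for every single realization of $O_A,O_B$, and therefore settles the ``which correlations'' claim for all orders $t$ in one line. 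Your insistence on matching prefactors is also warranted rather than pedantic: with the honest normalizations $Q^{(2)}_{AB}=L^{-2}\sum_{j,k\le L}T_{jk}^2$, $\hat Q^{(2)}_{AB}=\hat L^{-2}\sum_{j,k> L}T_{jk}^2$, and $R^{(2)}_{AB}=(d^2-1)^{-2}\sum_{j,k\ge 1}T_{jk}^2$, Eq.~(\ref{ob1:eq1}) is violated by any state supported on the real--real block alone (since $L<d^2-1$); the paper sidesteps this by ``ignoring normalization constants,'' and its own Observation~\ref{Ob:ImagTwo} quietly switches to the correctly weighted gap $(d^2-1)^2R^{(2)}_{AB}-L^2Q^{(2)}_{AB}-\hat L^2\hat Q^{(2)}_{AB}\ge0$.

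The obstacle you flag at general $t$ is real, but it is not a deficiency of your proposal relative to the paper: the paper's Appendix~\ref{B_Proofs} treatment of general $t$ consists of the selection-rule remark only and never addresses the inequality, or the equality clause, beyond $t=2$. The difficulty is in fact structural rather than a matter of bookkeeping: the fourth unitary moment has the form $a\,\tr[(TT^{\top})^2]+b\,[\tr(TT^{\top})]^2$ with $b\neq0$, so for a real state with both blocks nonzero it contains a mixed term proportional to $\bigl(\sum_{j,k\le L}T_{jk}^2\bigr)\bigl(\sum_{j,k>L}T_{jk}^2\bigr)$ that has no counterpart in $Q^{(4)}_{AB}+\hat Q^{(4)}_{AB}$; hence the equality claim for real states does not survive beyond $t=2$ under any fixed normalization, and only the inequality (which the cross term actually helps) could be salvaged by the kind of sign-tracking or coupling argument you sketch. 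In short, your proposal proves exactly as much as the paper's own proof does --- the selection rule for all $t$, and the inequality with its equality case at $t=2$ --- and it is more explicit than the paper about where the argument stops.
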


To prove Observation~\ref{Ob:correlation}, we consider the simplest case of $t=2$. General cases for any $t$ are given in Appendix~\ref{B_Proofs}. Based on the Schur–Weyl duality in unitary and orthogonal groups (see Refs.~\cite{goodman2000representations,audenaert2002asymptotic,collins2006integration,mele2024introduction,garcia2025quantum} or Appendix~\ref{A_Preliminaries}), the second commutant in the unitary group is spanned by identity $\eins_{d}\otimes\eins_{d}$ and the SWAP operator $\mathbb{S} =\sum_{j,k}|jk\rangle\langle kj|$. In contrast, the one in the orthogonal group is spanned by the identity, the SWAP operator, and additionally, the operator $\Pi=\sum_{j,k}|jj\rangle\langle kk|$, where $\{|j\rangle\}$ denotes the computational basis in $d$ dimensions. Hence, the Haar unitary and orthogonal integrals read, respectively,
\begin{align}
    &\int dU \, U^{\otimes2}\mathcal{A}U^{\dag \otimes2}=\alpha \eins_{d}\otimes\eins_{d}+ \beta \mathbb{S},\label{eq:unitarysecond}\\
    &\int dO \, O^{\otimes2}\mathcal{A}O^{\top\otimes2}=\gamma_{1}\eins_{d}\otimes\eins_{d}+\gamma_{2}\mathbb{S}+\gamma_{3}\Pi,\label{eq:orthogonalsecond}
\end{align}
where the parameters $\alpha, \beta, \gamma_{1}, \gamma_{2}$, and $\gamma_{3}$ depend on a two-qudit operator $\mathcal{A}$, see Appendix~\ref{A_Preliminaries}. Thus, the three second-order moments of RRMs, PRRMs, and RMs are, respectively, $Q_{AB}^{(2)}(\vr):=\sum_{j,k=1}^{L}T_{jk}^2$, $\hat{Q}_{AB}^{(2)}(\vr):=\sum_{j,k=L+1}^{d^2-1}T_{jk}^2$, and $R_{AB}^{(2)}(\vr):=\sum_{j,k=1}^{d^2-1}T_{jk}^2$ by ignoring normalization constants. Note that the real-imaginary correlations $\sum_{j=1}^{L}\sum_{k=L+1}^{d^2-1}T_{jk}^2$ and imaginary-real correlations $\sum_{j=L+1}^{d^2-1}\sum_{k=1}^{L}T_{jk}^2$ are missing. Thus, Eq.(\ref{ob1:eq1}) is true for any $\vr$. Moreover, real states (i.e. $\vr=\vr^{\top}$) exhibit real-real and imaginary-imaginary correlations, which guarantees that equality in Eq.(\ref{ob1:eq1}) holds for real states.

From Observation~\ref{Ob:correlation}, RRMs and PRRMs can extract and distinguish partial correlations. This feature determines their irreplaceable roles in quantum information tasks. We show in the following that real-real correlations are still useful for entanglement detection.

\section{Entanglement detection via RRMs}
We here employ RRMs to characterize high-dimensional bipartite entanglement. Recall that the Schmidt number (\textrm{SN}) of mixed state $\vr$ is defined by the Schmidt rank (\textrm{SR}) of pure state $|\psi_j\rangle$~\cite{terhal2000schmidt}, $\textrm{SN}(\vr)=\min\{r:\vr=\sum_{j}p_j|\psi_j\rangle\langle\psi_j|,\textrm{SR}(|\psi_j\rangle)\leq r\}$, where $p_j\geq0$ and $\sum_{j}p_j=1$. We aim to find an analytical criterion to detect \textrm{SN}$(\vr)$ using RRMs.
	
The first step is to evaluate the orthogonal integrals of the moments $Q_{AB}^{(t)}(\vr)$ in Eq.~(\ref{RRMs}) and find their analytical expressions. For $t=2,4$, we obtain
\begin{align}
    \label{eq:secondmoment}
    Q_{AB}^{(2)} &= \frac{1}{L^2} \sum_{j=1}^{L}\tau_j^2,\\\label{eq:fourthmoment}
    Q_{AB}^{(4)} &=W\left[2\sum_{j=1}^{L}\tau_j^4+L^4\left(Q_{AB}^{(2)}\right)^2\right],
\end{align}
where $\tau_j$ are singular values of the submatrix $T_R=(T_{jk})$ of the correlation matrix with the entries given in Eq.~(\ref{TwoQudit}); $j,k >0$ running over all real GGMs; $L=(d-1)(d+2)/2$ is the number of real GGMs; and the normalization $W=\frac{3\Gamma^2(\frac{L}{2})}{16\Gamma^2\left(\frac{L+4}{2}\right)}$, $\Gamma(z)$ denotes Euler's gamma function with $\Gamma(z)=(z-1)!$ for positive integer $z$~\cite{folland2001integrate}. See details in Appendix~\ref{C_Estimating}.
	
It is important to note that for higher-order moments, not all observables give rise to the same results, since orthogonal matrices do not imply sphere rotations, due to the lack of the Bloch sphere for $d>2$. Nevertheless, one can construct observables to obtain the moments in Eqs.~(\ref{eq:secondmoment} and \ref{eq:fourthmoment}) for $t=2,4$ and any dimension $d$~\cite{imai2021bound,wyderka2023probing}, for details see Appendix~\ref{C_Estimating}. For example, the observables for $d=3$ are $M_A=M_B=\mathrm{diag}(\sqrt{3/2},0,-\sqrt{3/2})$~\cite{wyderka2023probing}.

Next, we propose our entanglement dimensionality criterion by adopting the second and fourth moments of RRMs. Our approach to detect \textrm{SN}$(\vr)$ is to consider a plane spanned by $(Q_{AB}^{(2)},Q_{AB}^{(4)})$ defined in Eqs.~(\ref{eq:secondmoment} and \ref{eq:fourthmoment}) by employing the strategies in Refs~\cite{ketterer2019characterizing,imai2021bound,wyderka2023probing,liu2023characterizing}. The task is to obtain the lower boundary on this plane, so as to reveal a hierarchy of Schmidt numbers. To do so, we use the criterion~\cite{wyderka2023probing,liu2023characterizing} that any bipartite quantum state $\vr$ with $\textrm{SN}(\vr)\leq r$ satisfies $\|T\|_{\tr}\leq rd-1$ for $r=1,\cdots,d$, where $\|\cdot\|_{\tr}$ denotes the trace norm, i.e., the sum of the singular values of $T$ as in Eq.~(\ref{TwoQudit}). In the case of $r=1$, the inequality reduces to the de Vicente criterion~\cite{de2006separability}. Recalling that $T_R$ is the submatrix of $T$, we have that $\|T_\mathrm{R}\|_{\tr}=\sum_{j=1}^{L}\tau_j\leq\|T\|_{\tr}\leq rd-1$~\cite{thompson1972principal}. We summarize our criteria as follows:
\begin{observation}\label{Ob:EnDetect}
    Let $C_2$ and $C_4$ be the second and fourth moments of RRMs given in Eqs.~(\ref{eq:secondmoment} and \ref{eq:fourthmoment}) for a two-qudit state $\vr$. Consider the optimization problem
    \begin{align}
        \min_{\tau_j}\quad & Q_{AB}^{(4)},\quad
        \mathrm{s.t.}\quad 
		\begin{cases}
		  Q_{AB}^{(2)}=C_2,\\
		  \sum_{j=1}^{L}\tau_j\leq rd-1,\\
		  \tau_j\in[0,d-1].
		\end{cases}
    \end{align}
    If $F_{\min}(r,C_2)\leq C_4<F_{\min}(r-1,C_2)$, then $\textrm{SN}(\vr)=r$ for $r=2,\cdots,d$. Here, we denote the minimum of the above optimization problem as
    \begin{equation}
	F_{\min}\left(x,y\right)\!=\!
	\begin{cases}
		\! WL^3y^2(2+L), & 0 \!\leq\! y \!\leq\! \frac{(dx-1)^2}{L^3},\\
		\!F_{\min}^{(n_{g})}\left(x,y\right), & \frac{(dx-1)^2}{L^2\left(n_{g}+1\right)} \!\leq\! y\!<\!\frac{(dx-1)^2}{L^2n_{g}},
	\end{cases}\nonumber
\end{equation}
where $x=1,\cdots,d$, $n_g=1,\cdots,L-1$, $F_{\min}^{(n_{g})}\left(x,y\right)=2Wf\left(n_{g},y\right)/\left(n_{g}+1\right)^4+WL^4y^2$, $f\left(n_{g},y\right)=\left[\mathcal{B}-(dx-1)\right]^4+\left[\mathcal{B}+n_{g}(dx-1)\right]^4/n_{g}^3$, and $\mathcal{B}=\sqrt{n_{g}\left(n_{g}+1\right)L^2y-n_{g}\left(dx-1\right)^2}$.
\end{observation}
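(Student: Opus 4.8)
The plan is to split the statement into a logical skeleton, which turns the detection claim into an optimization lower bound, and the technical core, which is solving that optimization to recover the piecewise $F_{\min}$. For the skeleton, I would first invoke the trace-norm criterion: any $\vr$ with $\mathrm{SN}(\vr)\le r$ obeys $\|T\|_{\tr}\le rd-1$, so by the principal-submatrix inequality the singular values $\tau_j$ of $T_R$ satisfy $\sum_{j=1}^L\tau_j\le rd-1$ together with $\tau_j\in[0,d-1]$ and, since $C_2$ is the measured value, $\sum_j\tau_j^2=L^2C_2$. Every such state therefore feeds a $\tau$ that is feasible for the optimization in the Observation, whence $C_4\ge F_{\min}(r,C_2)$. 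Because increasing $r$ only relaxes the bound $\sum_j\tau_j\le rd-1$, the map $r\mapsto F_{\min}(r,C_2)$ is nonincreasing, so the branches are nested; reading $F_{\min}(r,C_2)\le C_4<F_{\min}(r-1,C_2)$, the right inequality contradicts $\mathrm{SN}\le r-1$ and certifies $\mathrm{SN}\ge r$, while the left inequality shows the data remain compatible with some $\mathrm{SN}\le r$ state, so $r$ is the detected Schmidt number.

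It then remains to compute $F_{\min}(r,C_2)$. Since $Q_{AB}^{(4)}=W[2\sum_j\tau_j^4+L^4C_2^2]$ with the last term fixed, this reduces to minimizing $\sum_j\tau_j^4$ under $\sum_j\tau_j^2=S_2:=L^2C_2$, $\sum_j\tau_j\le B:=rd-1$, and $\tau_j\in[0,d-1]$. In the first regime the trace-norm constraint is slack: convexity of $t\mapsto t^2$ gives $\sum_j\tau_j^4\ge(\sum_j\tau_j^2)^2/L=L^3C_2^2$, with equality at the uniform point $\tau_j^2=LC_2$; this point is feasible exactly when $\sum_j\tau_j=L^{3/2}\sqrt{C_2}\le B$, i.e.\ $C_2\le B^2/L^3$, reproducing the branch $F_{\min}=WL^3C_2^2(2+L)$.

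For $C_2>B^2/L^3$ the uniform point violates $\sum_j\tau_j\le B$, so the constraint is active and I would minimize $\sum_j\tau_j^4$ on the intersection of the hyperplane $\sum_j\tau_j=B$ with the sphere $\sum_j\tau_j^2=S_2$. The key claim is that the minimizer is two-level: $n_g$ equal entries $\beta$, one smaller entry $\alpha$, and the rest zero. Granting this, the two constraints $\alpha+n_g\beta=B$ and $\alpha^2+n_g\beta^2=S_2$ collapse to a quadratic with discriminant $\mathcal{B}^2=n_g(n_g+1)L^2C_2-n_g B^2$, giving $\alpha=(B-\mathcal{B})/(n_g+1)$ and $\beta=(n_gB+\mathcal{B})/[n_g(n_g+1)]$, so that $\sum_j\tau_j^4=\alpha^4+n_g\beta^4=f(n_g,C_2)/(n_g+1)^4$, which is exactly $F_{\min}^{(n_g)}$. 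Optimizing over the integer $n_g$, the conditions $\alpha\ge0$ (equivalently $\mathcal{B}\le B$, i.e.\ $C_2\le B^2/(n_gL^2)$) and $\mathcal{B}^2\ge0$ (i.e.\ $C_2\ge B^2/[L^2(n_g+1)]$) pin the admissible window to $B^2/[L^2(n_g+1)]\le C_2<B^2/(L^2n_g)$, matching the stated branch boundaries; at the lower end $\mathcal{B}=0$ and $\alpha=\beta$, so a new nonzero entry activates as one crosses into the next branch, which guarantees continuity.

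The main obstacle is justifying the two-level form of the minimizer. Here I would use a smoothing/KKT argument: any \emph{pair} of coordinates is frozen once $\sum_j\tau_j$ and $\sum_j\tau_j^2$ are fixed, so one must vary \emph{triples}, and along such a one-parameter family the stationarity condition reads $4\tau_j^3=2\mu\tau_j+\nu$; its positive roots, together with the second-order minimality condition and complementary slackness on the face $\tau_j=0$, force at most a single nonzero interior level besides one remainder entry. I expect the two delicate checks to be that the upper box bound $\tau_j\le d-1$ is never active in these regimes, so it does not enter $F_{\min}$, and that the $n_g$ selected branchwise is the true global minimizer over all integers rather than merely a local one; both follow from comparing $F_{\min}^{(n_g)}$ at adjacent $n_g$ and using the continuity at the branch points noted above.
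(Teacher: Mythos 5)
Your proposal is correct, and its first paragraph is essentially the paper's entire proof: the appendix argument for this observation consists only of the detection logic (fix $Q_{AB}^{(2)}=C_2$, solve the optimization for each $r$, and conclude from $C_4<F_{\min}(r-1,C_2)$ that the state cannot have $\mathrm{SN}(\vr)\leq r-1$), with the trace-norm criterion $\|T\|_{\tr}\leq rd-1$ and the principal-submatrix inequality imported from the main text exactly as you do. Where you genuinely go beyond the paper is the technical core: the paper never derives the closed form of $F_{\min}$ anywhere --- it states the formula and leans on the strategies of the cited works of Ketterer, Imai, Wyderka, and Liu --- whereas you reconstruct it, and your reconstruction checks out. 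The Cauchy--Schwarz bound $\sum_j\tau_j^4\geq\bigl(\sum_j\tau_j^2\bigr)^2/L$, with the uniform point feasible iff $C_2\leq (rd-1)^2/L^3$, reproduces the first branch $WL^3y^2(2+L)$; and your two-level ansatz $\alpha=(B-\mathcal{B})/(n_g+1)$, $\beta=(n_gB+\mathcal{B})/[n_g(n_g+1)]$ satisfies both constraints and yields $\sum_j\tau_j^4=f(n_g,y)/(n_g+1)^4$, with the window $0\leq\mathcal{B}\leq B$ giving precisely the stated branch boundaries and continuity at their endpoints. Two caveats, both of which you flag yourself: (i) the KKT justification that the minimizer has exactly one remainder entry below the common level is only sketched (the cubic stationarity condition admits two positive roots, so the multiplicity-one claim still needs the second-order or exchange argument you allude to); (ii) the box bound $\tau_j\leq d-1$ can in fact matter for large $r$ and small $n_g$ (e.g.\ $r=d$, $n_g=1$ puts one entry near $rd-1>d-1$), in which case the displayed $F_{\min}$ is the minimum of the relaxed problem rather than of the stated one. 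Neither caveat damages the criterion, since dropping a constraint only lowers the minimum, so $F_{\min}$ remains a valid lower bound on $C_4$ for any state with $\mathrm{SN}(\vr)\leq r$, which is all the detection logic uses; and your reading of the conclusion is the right one --- the inequalities certify $\mathrm{SN}(\vr)\geq r$, with ``$\mathrm{SN}(\vr)=r$'' understood as the detected dimensionality, a point the paper's own proof also glosses over.
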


\begin{figure}[t]
    \centering
    \includegraphics[scale=0.5]{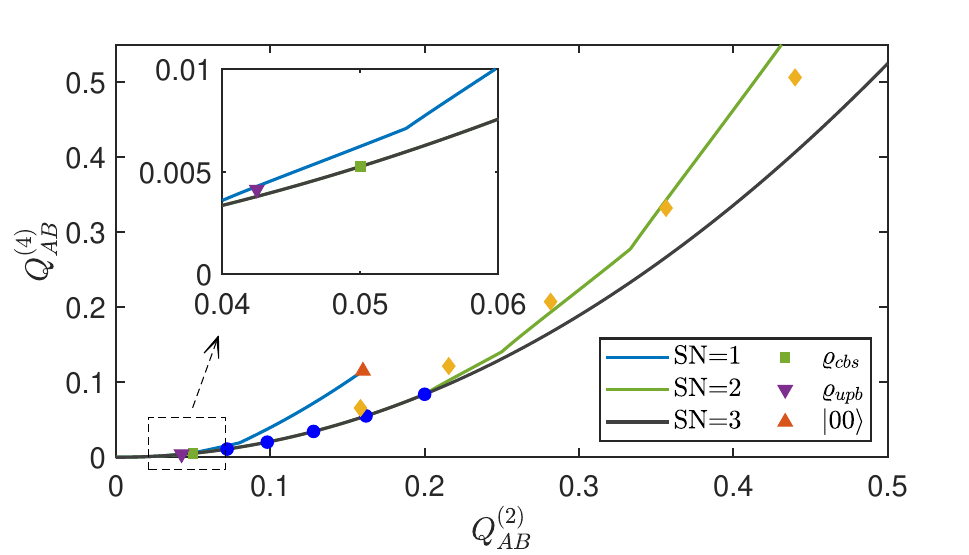}
    \caption{Entanglement dimensionality criterion based on second and fourth moments of RRMs for two-qutrit systems, formulated in Observation~\ref{Ob:EnDetect}. The blue dots from right to left denote the two-qutrit isotropic state $\vr_{\mathrm{iso}}$ with $p=1,0.9,0.8,0.7,0.6$. The orange diamonds from right to left represent the two-qutrit state $\vr(u)=u\vr_0+(1-u)\eins^{\otimes2}/3^2$ with $u=1,0.9,0.8,0.7,0.6$, where the state $\vr_0=(\eins^{\otimes2}+2\lambda_1\otimes\lambda_1+\lambda_1\otimes\lambda_3+\lambda_3\otimes\lambda_3+\lambda_4\otimes\lambda_4+2\lambda_6\otimes\lambda_6)/3^2$. The bound entangled $\vr_{upb}$ and $\vr_{cbs}$ are outside the separable area meaning that our criterion detects these states.} 
    \label{Fig2}
\end{figure}

The proof of Observation~\ref{Ob:EnDetect} is given in Appendix~\ref{B_Proofs}. From Observation~\ref{Ob:EnDetect}, we can reveal the entanglement and Schmidt dimensionality from the measured moments $(Q_{AB}^{(2)}, Q_{AB}^{(4)})$. To see this, Fig.~\ref{Fig2} shows the detection results of our method for the two-qutrit isotropic state $\vr_{\mathrm{iso}}=p|\Phi_{3}^{+}\rangle\langle\Phi_{3}^{+}|+(1-p)\eins^{\otimes2}/3^2$ for $p\geq0.6$ and the parametrized state $\vr(u)=u\vr_0+(1-u)\eins^{\otimes2}/3^2$ for $u\geq0.6$, where $|\Phi_{3}^{+}\rangle=({1}/{\sqrt{3}})\sum_{j=0}^{2}|jj\rangle$ is the maximally entangled state and $\vr_0=(\eins^{\otimes2}+2\lambda_1\otimes\lambda_1+\lambda_1\otimes\lambda_3+\lambda_3\otimes\lambda_3+\lambda_4\otimes\lambda_4+2\lambda_6\otimes\lambda_6)/3^2$. Bound entangled states with positive under partial transposition (PPT)~\cite{peres1996separability,horodecki1997separability,horodecki1998mixed,bennett1999unextendible} exhibit weak entanglement and are usually difficult to detect. Surprisingly, Observation~\ref{Ob:EnDetect} for $r=1$ can detect the bound entangled states such as the chessboard state $\vr_{cbs}$~\cite{bruss2000construction} and the unextendible product base state $\vr_{upb}$~\cite{bennett1999unextendible}, as displayed in Fig.~\ref{Fig2}.

We remark that entanglement detection approaches based on RRMs and RMs~\cite{imai2021bound} share a similar framework but employ different random operations. Since RMs extract more correlations than RRMs as formulated in Observation~\ref{Ob:correlation}, RMs can naturally detect more entangled states than RRMs. For example, the bound entangled Horodecki state~\cite{horodecki1998mixed} is detectable by RMs~\cite{imai2021bound}, but not by RRMs. Note that RMs and RRMs have an equivalent detection strength for the state that only has real-real correlations since both RMs and RRMs can extract their total correlations. As a subset of the PPT states, the partial transpose invariant (PTI) state~\cite{kraus2000separability,huber2018high}, $\vr_{AB}=\vr_{AB}^{\top_A}$, has only real-real correlations. It is easy to check that the chessboard state $\vr_{cbs}$~\cite{bruss2000construction} and the unextendible product base state $\vr_{upb}$~\cite{bennett1999unextendible} are the PTI states. This implies that our criteria can detect bound entangled states $\vr_{cbs}$ and $\vr_{upb}$. However, in this scenario, RRMs need fewer experimental operations than RMs. For linear optics, implementing a $d\times d$ unitary requires at least $d^2-1$ unset wave plates, while $(d^2-d)/2$ unset wave plates are required for an orthogonal matrix~\cite{wu2021resource,wu2024resource}. For large $d$, RRMs reduce the number of unset wave plates by $1/2$ compared to RMs, providing a more economical way to analyze high-dimensional entangled states.

\section{Detection and quantification of imaginarity}
RMs are currently used to analyze entanglement~\cite{cieslinski2024analysing,elben2023randomized} and magic resources~\cite{oliviero2022measuring}. Here, we show that RMs together with RRMs and PRRMs can be used to characterize imaginarity~\cite{hickey2018quantifying,wu2021operational,wu2021resource}, another important resource in channel discrimination~\cite{wu2021strong,wu2024resource}, weak-value theory~\cite{kedem2012using}, and quantum multiparameter metrology~\cite{miyazaki2022imaginarity}. From the imaginarity resource theory~\cite{hickey2018quantifying,wu2021operational,wu2021resource}, the resource states are identified as imaginary states that have imaginary entries on a given basis $\{|j\rangle\}$. Our task is to detect and quantify imaginary states without a common reference frame. However, this is impossible if one only uses the moments $R_{AB}^{(t)}$ in Eq.(\ref{eq:RAB}), as illustrated in Fig.~\ref{Fig1}. To proceed, we consider the combination of RRMs and PRRMs, and summarize our results below:
\begin{observation}\label{Ob:ImagTwo}
    Consider the second moments $R_{AB}^{(2)}$ of RMs, $Q_{AB}^{(2)}$ of RRMs, and $\hat{Q}_{AB}^{(2)}$ of PRRMs for a two-qudit state $\vr_{AB}$. Define the gap
    \begin{equation}
        G_{AB}^{(2)}=(d^2-1)^2R_{AB}^{(2)}-L^2Q_{AB}^{(2)}-\hat{L}^2\hat{Q}_{AB}^{(2)},
    \end{equation}
    where $L=(d-1)(d+2)/2$, $\hat{L}=d(d-1)/2$. Let $\hat{Q}_X^{(2)}$ be the marginal second moments of the reduced states $\vr_X$ $(X=A,B)$. We have the following:
		
    (i) $\vr_{AB}$ is a real (free) state if and only if the following conditions hold: $\hat{Q}_{A}^{(2)} = \hat{Q}_{B}^{(2)} = G_{AB}^{(2)} = 0$. Conversely, $\vr_{AB}$ is an imaginary (resource) state if and only if at least one of the following conditions holds: (a) $\hat{Q}_{A}^{(2)}>0$, (b) $\hat{Q}_{B}^{(2)}>0$, (c) $G_{AB}^{(2)}>0$.    
		
    (ii) $\hat{Q}_{A}^{(2)}$, $\hat{Q}_{B}^{(2)}$, and $G_{AB}^{(2)}$ induce a lower bound of the robustness of the imaginarity measure $\mathcal{F}_{R}(\vr_{AB})=\min_{\tau}\{\mu\geq0: \frac{\vr_{AB}+\mu\tau}{1+\mu}\in\mathscr{R}\}$~\cite{wu2021operational}, where $\mathscr{R}$ denotes the set of all two-qudit real states and the minimum is taken over all quantum states $\tau$,
    \begin{equation}
        \mathcal{F}_{\mathrm{LB}}(\vr_{AB}) \leq \mathcal{F}_{R}(\vr_{AB}),
    \end{equation}
    where $\mathcal{F}_{\mathrm{LB}}=\frac{1}{d}\sqrt{\hat{L}\left(\hat{Q}_A^{(2)}+\hat{Q}_B^{(2)}\right)+G_{AB}^{(2)}}$.
\end{observation}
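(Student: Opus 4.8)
The plan is to recognize the quantity under the square root in $\mathcal{F}_{\mathrm{LB}}$ as, up to the prefactor $1/d$, the Hilbert--Schmidt norm of the antisymmetric (imaginary) part of the state. Write $\vr_{AB}=\vr_{R}+i\vr_{I}$ with $\vr_{R}=(\vr_{AB}+\vr_{AB}^{\top})/2$ real symmetric and $\vr_{I}=(\vr_{AB}-\vr_{AB}^{\top})/(2i)$ real antisymmetric. First I would expand $i\vr_{I}$ in the GGM basis of Eq.~(\ref{TwoQudit}). Since real GGMs are real-symmetric and imaginary GGMs are imaginary-antisymmetric, the only surviving terms are the imaginary local blocks $T_{j0}\lambda_{j}\otimes\eins$ and $T_{0k}\eins\otimes\lambda_{k}$ with $j,k>L$, together with the real--imaginary and imaginary--real cross blocks $T_{jk}\lambda_{j}\otimes\lambda_{k}$. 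Using $\tr(\lambda_{j}\lambda_{k})=d\delta_{jk}$ to evaluate the norm, this gives $\norm{\vr_{I}}_{\mathrm{HS}}^{2}=d^{-2}[\sum_{j>L}T_{j0}^{2}+\sum_{k>L}T_{0k}^{2}+\sum_{\mathrm{RI}\cup\mathrm{IR}}T_{jk}^{2}]$. Identifying the first two sums with $\hat{L}\hat{Q}_{A}^{(2)}$ and $\hat{L}\hat{Q}_{B}^{(2)}$ (the single-party marginal moments carry one factor $1/\hat{L}$) and the third with $G_{AB}^{(2)}$ (which, by its definition and the $t=2$ normalizations of Observation~\ref{Ob:correlation}, equals the total cross correlation) yields the clean identity $\mathcal{F}_{\mathrm{LB}}=\norm{\vr_{I}}_{\mathrm{HS}}$.

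Part~(i) is then immediate. A state is real exactly when $\vr_{AB}=\vr_{AB}^{\top}$, i.e. $\vr_{I}=0$; because $\hat{Q}_{A}^{(2)}$, $\hat{Q}_{B}^{(2)}$, and $G_{AB}^{(2)}$ are sums of squares, their simultaneous vanishing is equivalent to the vanishing of all imaginary local and cross GGM coefficients, which is precisely $\vr_{I}=0$. The resource case is the logical negation: $\vr_{AB}$ is imaginary iff $\vr_{I}\neq0$ iff at least one of (a)--(c) is strictly positive.

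For part~(ii) I would invoke the operational definition of the robustness directly. Suppose a scalar $\mu\ge0$ and a state $\tau$ achieve $\sigma:=(\vr_{AB}+\mu\tau)/(1+\mu)\in\mathscr{R}$. Taking imaginary parts and using $\sigma_{I}=0$ gives $\vr_{I}=-\mu\,\tau_{I}$, hence $\norm{\vr_{I}}_{\mathrm{HS}}=\mu\,\norm{\tau_{I}}_{\mathrm{HS}}$. It then remains to bound $\norm{\tau_{I}}_{\mathrm{HS}}\le1$ uniformly over all states $\tau$. This follows from the purity identity $\tr(\tau^{2})=\norm{\tau_{R}}_{\mathrm{HS}}^{2}+\norm{\tau_{I}}_{\mathrm{HS}}^{2}$ --- where the cross term $\tr(\tau_{R}\tau_{I})$ vanishes since $\tau_{R}$ is symmetric and $\tau_{I}$ antisymmetric --- combined with $\tr(\tau^{2})\le1$ and $\norm{\tau_{R}}_{\mathrm{HS}}^{2}\ge0$. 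Therefore $\norm{\vr_{I}}_{\mathrm{HS}}=\mu\,\norm{\tau_{I}}_{\mathrm{HS}}\le\mu$ for every feasible pair, and minimizing over $\mu$ gives $\mathcal{F}_{\mathrm{LB}}=\norm{\vr_{I}}_{\mathrm{HS}}\le\mathcal{F}_{R}(\vr_{AB})$.

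The main obstacle is the bookkeeping in the first paragraph: one must check that the marginal-moment normalizations ($1/\hat{L}$ each) and the definition of $G_{AB}^{(2)}$ assemble exactly into $d^{2}\norm{\vr_{I}}_{\mathrm{HS}}^{2}$, with no residual real--real or imaginary--imaginary contribution leaking in. Everything downstream is short; the only genuinely quantitative input is the uniform estimate $\norm{\tau_{I}}_{\mathrm{HS}}\le1$, which the purity identity supplies immediately.
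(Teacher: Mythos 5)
Your proposal is correct, and its part (ii) takes a genuinely different route from the paper's. The bookkeeping you flag as the main obstacle does close: since real GGMs are symmetric and imaginary GGMs antisymmetric, the only blocks of $\vr_{AB}-\vr_{AB}^{\top}$ that survive are the imaginary local terms and the real--imaginary/imaginary--real cross blocks, and with the normalizations $\hat{L}\hat{Q}_{A}^{(2)}=\sum_{j>L}T_{j0}^{2}$, $\hat{L}\hat{Q}_{B}^{(2)}=\sum_{k>L}T_{0k}^{2}$, and $G_{AB}^{(2)}=\sum_{\mathrm{RI}\cup\mathrm{IR}}T_{jk}^{2}$ one gets exactly $\mathcal{F}_{\mathrm{LB}}=\norm{\vr_I}_{\mathrm{HS}}$; this identity, and your part (i), coincide in substance with the paper's computation. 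The divergence is in how the inequality of part (ii) is obtained. The paper imports the closed-form characterization $\mathcal{F}_{R}(\vr_{AB})=\tfrac{1}{2}\trnorm{\vr_{AB}-\vr_{AB}^{\top}}$ from Ref.~\cite{wu2021operational} and then relaxes the trace norm to the Hilbert--Schmidt norm via $\trnorm{\cdot}\geq\norm{\cdot}_{\mathrm{HS}}$. You instead argue from the bare operational definition: feasibility of a pair $(\mu,\tau)$ forces $\vr_I=-\mu\tau_I$ upon taking antisymmetric parts, and the purity identity $\tr(\tau^{2})=\norm{\tau_R}_{\mathrm{HS}}^{2}+\norm{\tau_I}_{\mathrm{HS}}^{2}\leq 1$ (the cross term vanishing by symmetry) gives the uniform bound $\norm{\tau_I}_{\mathrm{HS}}\leq 1$, hence $\mu\geq\norm{\vr_I}_{\mathrm{HS}}$ for every feasible $\mu$. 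Your route is more elementary and self-contained --- it needs no external theorem about $\mathcal{F}_{R}$, only its definition and $\tr(\tau^{2})\leq1$ --- and the same mechanism transfers directly to variants of the robustness over other free sets. What the paper's route buys in exchange is sharper structural information: starting from an exact equality, it localizes all slack of the bound in the single inequality between trace and Hilbert--Schmidt norms, which is what lets the paper also discuss when the bound is saturated. Both proofs rest on the identical moment-to-norm identity, which you verified correctly.
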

	
The proof of Observation~\ref{Ob:ImagTwo} is mainly based on the relation of Eq.(\ref{ob1:eq1}) for $t=2$. Conditions (a) and (b) identify the imaginarity of the reduced states, while condition (c) implies the imaginarity of bipartite correlation. All three RMs, RRMs, and PRRMs are required to fully characterize the imaginarity of two-qudit states. On the other hand, to detect the imaginary of single-qudit states, we only need PRRMs or a combination of RMs and RRMs. This follows from the fact that the second moment of PRRMs $\hat{Q}_X^{(2)}$ coincides with the single-qudit gap $G_{X}=(d^2-1)R_{X}^{(2)}-LQ_{X}^{(2)}$ based on RMs and RRMs. Table~\ref{ImagResults} shows examples of three two-qutrit states. In Appendix~\ref{E_Generalized}, observation~\ref{Ob:ImagTwo} can be generalized to multipartite states.

\begin{table}
    \caption{Two-qutrit states: (1) $(|00\rangle+i|22\rangle)/\sqrt{2}$, (2) $(i|02\rangle+i|12\rangle+|10\rangle+|12\rangle)/2$, and (3) $(|00\rangle+|22\rangle)/\sqrt{2}$.}
    \begin{tabular}{c|c|c|c|c|c|c}
        \hline\hline
        $\vr_{AB}$ & $\hat{Q}_A^{(2)}$ & $\hat{Q}_B^{(2)}$ & $G_{AB}^{(2)}$ & $\mathcal{F}_{\mathrm{LB}}(\vr_{AB})$ & $\mathcal{F}_{R}(\vr_{AB})$ & Result\\
        \hline
        (1) & $0$ & $0$ & $4.5$ & $0.7071$ & $1$ & Imaginary\\
        \hline
        (2) & $0.1250$ & $0.1250$ & $2.6250$ & $0.6124$ & $0.8660$ & Imaginary\\
        \hline
        (3) & $0$ & $0$ & $0$ & $0$ & $0$ & Real\\
        \hline\hline
    \end{tabular}
    \label{ImagResults}
\end{table}

The traditional characterization methods for measuring imaginarity involving complicated optimization and quantum state tomography are fragile to noise due to the requirement for well-controlled measurement settings~\cite{hickey2018quantifying}. Notably, our strategy performs random measurements with a real-reference-frame-independent manner and is naturally robust to local orthogonal noise. This result motivates experimental approaches to analyze the imaginarity of quantum theory.

\section{Classical shadows with RRMs}
The classical shadow is a powerful tool in certifying properties of quantum systems \cite{huang2020predicting,chen2021robust,nguyen2022optimizing}. The standard classical shadow is based on RMs by employing random unitary and computational basis measurements.
	
Next, we develop the classical shadow with RRMs following (i) and (ii) in Assumptions. Applying orthogonal operation $O$ for $\vr\rightarrow O\vr O^{\top}$ and performing projective measurements in the basis $\{|b\rangle\}$, we store the classical description of $O^{\top}|b\rangle\langle b|O$. The overall procedure is associated with the quantum channel $\mathcal{M}(\rho)=\mathbb{E}\left[O^{\top}|b\rangle\langle b|O\right]$. The learned classical representation (classical shadows) is then given by $\tilde{\vr}=\mathcal{M}^{-1}\left(O^{\top}|b\rangle\langle b|O\right)$, where $\mathcal{M}^{-1}$ is the inverse of $\mathcal{M}(\vr)$. We derive the following result:
\begin{observation}\label{Ob:CS}
    By performing a global orthogonal matrix $O$, we have for an $N$-qudit real state $\vr$:
    \begin{equation}\label{MC}
        \mathcal{M}(\vr)=\frac{\eins_{d^N}+2\vr}{d^N+2},\quad
        \mathcal{M}^{-1}(\vr)=\frac{d^N+2}{2}\vr-\frac{1}{2}\eins_{d^N}.
    \end{equation}
\end{observation}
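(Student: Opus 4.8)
The plan is to evaluate the Haar average defining $\mathcal{M}$ directly, reducing it to the second-moment (twirl) of the orthogonal group already recorded in Eq.~\eqref{eq:orthogonalsecond}. First I would make the Born weighting explicit: writing $P_b=\ket{b}\bra{b}$ and $\langle b|O\vr O^{\top}|b\rangle=\tr(O^{\top}P_bO\,\vr)$, the channel becomes $\mathcal{M}(\vr)=\int dO\sum_b \tr\!\big(O^{\top}P_bO\,\vr\big)\,O^{\top}P_bO$, which is quadratic in the random projector $O^{\top}P_bO$. The key structural remark is that for each fixed computational-basis vector $\ket{b}$ and Haar-random $O\in\boldsymbol{\mathrm{O}}(D)$ with $D=d^N$, the vector $O^{\top}\ket{b}$ is uniformly distributed on the real unit sphere $S^{D-1}$; by orthogonal invariance the sum over the $D$ basis outcomes collapses to $D$ times a single average over a uniformly random real unit vector $\ket{\psi}$, so that $\mathcal{M}(\vr)=D\,\mathbb{E}_{\psi}\,\tr(\ket{\psi}\!\bra{\psi}\vr)\,\ket{\psi}\!\bra{\psi}$.

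Next I would lift this to the doubled space and apply the orthogonal twirl. Using $\tr(A\vr)\,A=\tr_2\!\big[(A\otimes A)(\eins\otimes\vr)\big]$ with $A=O^{\top}P_bO$, the object to average is $\sum_b (O^{\top}P_bO)^{\otimes2}=O^{\top\otimes2}\big(\sum_b\ket{bb}\!\bra{bb}\big)O^{\otimes2}$, which is precisely the integrand in Eq.~\eqref{eq:orthogonalsecond} with $\mathcal{A}=\sum_b\ket{bb}\!\bra{bb}$. Equivalently, I would invoke the real-sphere fourth-moment identity $\mathbb{E}_\psi[\psi_a\psi_b\psi_c\psi_d]=(\delta_{ab}\delta_{cd}+\delta_{ac}\delta_{bd}+\delta_{ad}\delta_{bc})/[D(D+2)]$, i.e. $\mathbb{E}_\psi(\ket{\psi}\!\bra{\psi})^{\otimes2}=(\eins\otimes\eins+\mathbb{S}+\Pi)/[D(D+2)]$, where $\mathbb{S}$ and $\Pi$ are the two non-trivial operators spanning the orthogonal second commutant in Eq.~\eqref{eq:orthogonalsecond}. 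The appearance of the extra operator $\Pi$, absent in the unitary case, is exactly what will feed the transpose into the answer.

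It then remains to carry out the partial trace over the second factor term by term, using $\tr_2[(\eins\otimes\eins)(\eins\otimes\vr)]=\tr(\vr)\eins$, $\tr_2[\mathbb{S}(\eins\otimes\vr)]=\vr$, and $\tr_2[\Pi(\eins\otimes\vr)]=\vr^{\top}$. Combining with the prefactor $D$ gives the reference-frame-independent form $\mathcal{M}(\vr)=[\tr(\vr)\eins+\vr+\vr^{\top}]/(D+2)$ valid for any operator; specializing to a normalized real state, $\tr(\vr)=1$ and $\vr=\vr^{\top}$, collapses the two transpose-related terms and yields $\mathcal{M}(\vr)=(\eins_{d^N}+2\vr)/(d^N+2)$. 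Inverting the resulting affine relation on trace-one inputs, $\sigma=(\eins+2\vr)/(D+2)\Leftrightarrow \vr=\tfrac{D+2}{2}\sigma-\tfrac12\eins$, produces $\mathcal{M}^{-1}$ as stated.

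I expect the only genuinely delicate point to be the measure argument in the first step --- justifying that $O^{\top}\ket{b}$ is Haar-uniform on the real sphere and that summing over $b$ contributes the factor $D$ without double counting --- together with pinning down that the relevant real second moment carries the $\Pi$ term with equal weight. Once the coefficient $1/(D+2)$ in front of $\eins+\mathbb{S}+\Pi$ is fixed (e.g.\ by tracing against $\eins$, $\mathbb{S}$, and $\Pi$ to solve the $3\times3$ linear system for $\gamma_1,\gamma_2,\gamma_3$), the remaining partial-trace algebra and the reality specialization are routine.
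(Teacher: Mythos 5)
Your proposal is correct and follows essentially the same route as the paper's proof: both reduce the channel to the second-moment twirl over the orthogonal group (the paper applies Lemma~\ref{Lemma1} to $|b\rangle\langle b|^{\otimes 2}$ and sums over $b$, while you equivalently collapse the sum via sphere-uniformity and invoke the real-sphere fourth-moment identity, yielding the same operator $(\eins+\mathbb{S}+\Pi)/(d^N+2)$), then use the partial-trace identities $\tr_2[\mathbb{S}(\eins\otimes\vr)]=\vr$ and $\tr_2[\Pi(\eins\otimes\vr)]=\vr^{\top}$ and specialize to $\tr(\vr)=1$, $\vr=\vr^{\top}$. The inversion of the resulting affine map matches the paper's verification that $\mathcal{M}^{-1}[\mathcal{M}(\vr)]=\vr$.
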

	
The proof of Observation~\ref{Ob:CS} is given in Appendix~\ref{B_Proofs}, where we use the second commutant in the orthogonal group~\cite{goodman2000representations,audenaert2002asymptotic,collins2006integration,mele2024introduction,garcia2025quantum}. Employing $N_O$ global orthogonal matrices, we obtain a dataset of classical shadows, $\{\tilde{\vr}^{(m)}\}_{m=1}^{N_O}$, which can be used to predict the linear and nonlinear functions of $\vr$. For local random orthogonal matrix $O=\bigotimes_{j=1}^{N}O_j$, the classical shadows of $\vr$ are given by $\tilde{\vr}=\bigotimes_{j=1}^{N}\left[\frac{d+2}{2}O_j^{\top}|b_j\rangle\langle b_j|O_j-\frac{1}{2}\eins_{d}\right]$, where $\vr$ is the PTI states, i.e. $\vr=\vr^{\top_j}$ for any $j$. See details in Appendix~\ref{E_Generalized}.

To test Observation~\ref{Ob:CS}, we consider the five-qubit noisy state $\vr_{p}=(1-p)|{\mathrm{GHZ}}_{+}\rangle\langle {\mathrm{GHZ}}_{+}|+p|{\mathrm{GHZ}}_{-}\rangle\langle{\mathrm{GHZ}}_{-}|$ with $p\in[0,1]$, where $|{\mathrm{GHZ}}_{\pm}\rangle=(|0\rangle^{\otimes 5}\pm|1\rangle^{\otimes 5})/\sqrt{2}$. We estimate the exact fidelity $f_{\mathrm{ex}} = \tr({\vr}_p|{\mathrm{GHZ}}_{+}\rangle\langle{\mathrm{GHZ}}_{+}|) = 1-p$ by calculating the quantity $f_{\mathrm{es}}=\tr(\tilde{\vr}_p|{\mathrm{GHZ}}_{+}\rangle\langle{\mathrm{GHZ}}_{+}|)$, where $\tilde{\vr}_p$ is the classical shadow of $\vr_{p}$. In Figs.~\ref{Fig3}(a) and~\ref{Fig3}(b), we plot the average of error $\mathcal{E}=|f_{\mathrm{es}}-f_{\mathrm{ex}}|$ for a five-qubit $\vr_p$ with classical shadows obtained from RMs and RRMs by using (a) global and (b) local orthogonal and unitary evolutions. The simulation results show that RRMs obtain a more accurate fidelity estimation than RMs with a fixed number of random operations.

\begin{figure}[t]
    \centering
    \includegraphics[scale=0.5]{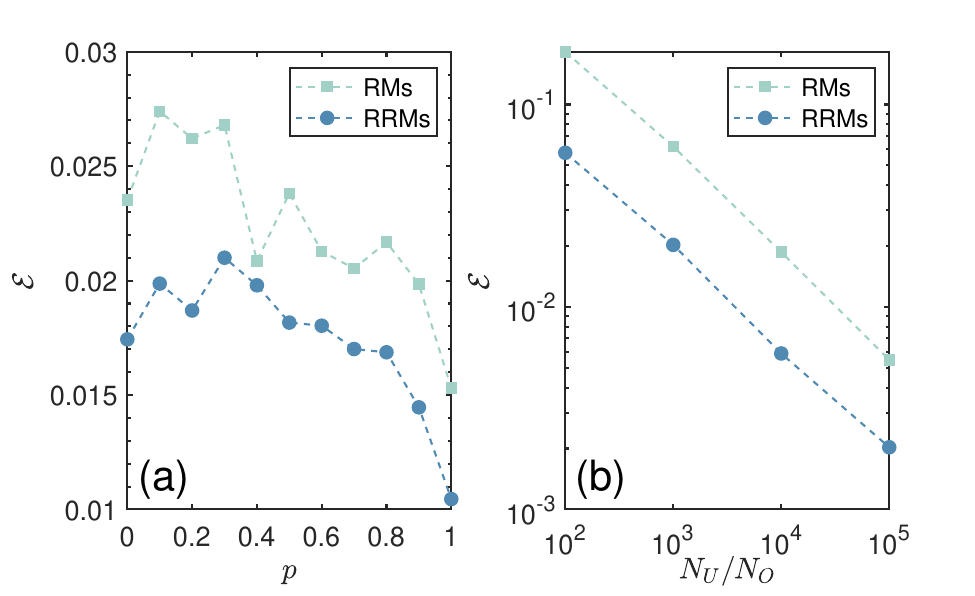}
    \caption{(a) The average of error $\mathcal{E}$ as a function of probability $p$ under $N_U=N_O=2000$ global unitary/orthogonal matrices by averaging over $100$ simulated experimental runs. (b) The average of error $\mathcal{E}$ as a function of the number of random local unitary or orthogonal matrices by averaging over $100$ simulated experimental runs and setting $p=0.5$.}
    \label{Fig3}
\end{figure}

\section{Experimental considerations}
Our protocols have advantages related to the experimental settings compared to standard RMs. One is that fewer experimental steps are required for performing an orthogonal matrix. In photonic systems, $1/2$ fewer wave plates are used compared with unitary operation~\cite{wu2021resource}. Another is that fewer random operations are enough for certain tasks. Fig.~\ref{Fig3}(b) shows that RRMs only require $10^4$ orthogonal matrices for attaining the estimation precision $0.005$, while $10^5$ unitaries are required for RMs.

An important step in implementing our protocols is to generate Haar random orthogonal matrices following the known algorithms~\cite{kothari2015almost,o2023explicit}. To check whether the sample matrices are indeed Haar random, one can apply the frame potential~\cite{renes2004symmetric,renes2004frames,gross2007evenly} for orthogonal matrices~\cite{hashagen2018real}.

\section{Conclusion}
We introduced the concepts of RRMs and PRRMs to characterize high-dimensional quantum states. We showed that partial correlations obtained by RRMs can detect bound entanglement and dimensionality, while the combination of RMs, RRMs, and PRRMs can characterize bipartite imaginary states. The latter allowed us to measure imaginarity without having shared reference frames. Furthermore, we extended the classical shadow by using random orthogonal matrices.
	
Several potential avenues for further investigation arise from our research. First, while we considered the orthogonal versions of RMs, developing RMs within different subsets of unitaries would be interesting. This could motivate further investigations of quantum resources beyond entanglement and imaginarity. Second, it would be worthwhile to find stronger entanglement criteria than the criterion by RMs itself, e.g., using linear or nonlinear combination in the space of $Q_{AB}^{(t)}$ and $\hat{Q}_{AB}^{(t)}$, or by taking into account an extension to non-product observables~\cite{wyderka2023complete}. Finally, it would be desirable to verify our results in different experimental platforms such as trapped ions~\cite{brydges2019probing}, Rydberg atoms~\cite{notarnicola2023randomized}, and photonic systems~\cite{wyderka2023complete,zhang2023experimental}.

\textit{Note added}. After finishing the work, we became aware that related results on real classical shadows were derived in Ref.~\cite{west2025real}.

\section*{ACKNOWLEDGMENTS}
We thank Dmitry Grinko, Jiajie Guo, Zhenhuan Liu, Feng-Xiao Sun, Nikolai Wyderka, Yu Xiang, and Xiao-Dong Yu for useful discussions. This work is supported by the National Natural Science Foundation of China (Grant No. 12125402, No. 12347152, No. 12405005, No. 12405006, No. 12075159, and No. 12171044), Beijing Natural Science Foundation (Grant No. Z240007), the Innovation Program for Quantum Science and Technology (No. 2024ZD0302401), the Postdoctoral Fellowship Program of CPSF (No. GZC20230103), the China Postdoctoral Science Foundation (No. 2023M740118 and 2023M740119), the specific research fund of the Innovation Platform for Academicians of Hainan Province, the Deutsche Forschungsgemeinschaft (DFG, German Research Foundation, Projects No. 447948357 and No. 440958198), the Sino-German Center for Research Promotion (Project M-0294), and the German Ministry of Education and Research (Project QuKuK, BMBF Grant No.\ 16KIS1618K). Satoya Imai acknowledges the European Commission through 397 the H2020 QuantERA ERA-NET Cofund in Quantum Technologies projects “MENTA”, Horizon Europe programme HORIZON-CL4-2022-QUANTUM-02-SGA via the project 101113690 (PASQuanS2.1), and JST ASPIRE (JPMJAP2339).

\section*{DATA AVAILABILITY}
The data that support the findings of this article are openly available: https://github.com/jmliang24/RRMs.git.

\appendix
\section{Preliminaries and calculation of second-order moments}\label{A_Preliminaries}
\subsection{Moments over orthogonal group}\label{Pre:moments}
Let us start by introducing the notation used in this work. We denote the set of linear operators that act on the $d$-dimensional complex vector space $\mathds{C}^{d}$ as $\mathcal{L}(\mathds{C}^{d})$. The identity operator with dimension $d$ is represented by $\eins_d$. The orthogonal group, denoted as $\boldsymbol{\mathrm{O}}(d)$, consists of operators $O\in\mathcal{L}(\mathds{C}^{d})$ that satisfy the conditions $OO^{\top}=O^{\top}O=\eins_d$. It is worth noting that the orthogonal group $\boldsymbol{\mathrm{O}}(d)$ is compact, which implies the existence of a unique probability measure $\mu$ that remains invariant under left and right group multiplication. This measure is commonly referred to as the \textit{Haar measure}, and we will denote integrals to the Haar measure as $\int dO$.

By introducing two important operators, the SWAP operator, $\mathbb{S}=\sum_{j,k=0}^{d-1}|jk\rangle\langle kj|$, and the operator $\Pi=\sum_{j,k=0}^{d-1}|jj\rangle\langle kk|$, we here present the calculation of the Haar integrals over the orthogonal group.
\begin{lemma}\label{Lemma1}
(First and second moment). Given $\mathcal{A}_1\in\mathcal{L}(\mathds{C}^{d})$ and $\mathcal{A}_2\in\mathcal{L}(\mathds{C}^{d}\otimes\mathds{C}^{d})$, we have
\begin{align}
    \int dO O\mathcal{A}_1O^{\top}&=\frac{\tr(\mathcal{A}_1)}{d}\eins_d,\label{FirstOrder}\\
    \int dO \,O^{\otimes2}\mathcal{A}_2O^{\top\otimes2}&=\gamma_1\eins_d\otimes\eins_d+\gamma_2\mathbb{S}+\gamma_3\Pi,\label{SecondOrder}
\end{align}
where the coefficients are
\begin{align}
    \gamma_1&=\frac{(d+1)\tr(\mathcal{A}_2)-\tr(\mathbb{S}\mathcal{A}_2)-\tr(\Pi\mathcal{A}_2)}{d(d-1)(d+2)},\\
    \gamma_2&=\frac{-\tr(\mathcal{A}_2)+(d+1)\tr(\mathbb{S}\mathcal{A}_2)-\tr(\Pi\mathcal{A}_2)}{d(d-1)(d+2)},\\
    \gamma_3&=\frac{-\tr(\mathcal{A}_2)-\tr(\mathbb{S}\mathcal{A}_2)+(d+1)\tr(\Pi\mathcal{A}_2)}{d(d-1)(d+2)}.
\end{align}
\end{lemma}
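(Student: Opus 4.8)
The plan is to read both Haar integrals as twirls, i.e. as the orthogonal projections onto the commutant of the representations $O\mapsto O$ and $O\mapsto O^{\otimes 2}$ of $\boldsymbol{\mathrm{O}}(d)$, and then to pin down the resulting operators by testing them against a basis of the relevant commutant. Left- and right-invariance of the Haar measure is what makes this work: for any $\mathcal{A}$, the operator $P(\mathcal{A})=\int dO\, O^{\otimes t}\mathcal{A}\,O^{\top\otimes t}$ satisfies $O^{\otimes t}P(\mathcal{A})\,O^{\top\otimes t}=P(\mathcal{A})$ for every orthogonal $O$, so it lies in the commutant; and if $\mathcal{A}$ already commutes with all $O^{\otimes t}$ then $P(\mathcal{A})=\mathcal{A}$, so $P$ is idempotent. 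Hence $P(\mathcal{A})$ is determined once I know the commutant and the pairings of $P(\mathcal{A})$ with a basis of it.

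For the first moment ($t=1$), the defining representation of $\boldsymbol{\mathrm{O}}(d)$ on $\mathds{C}^d$ is irreducible, so by Schur's lemma the commutant is $\mathds{C}\eins_d$ and $\int dO\, O\mathcal{A}_1 O^{\top}=c\,\eins_d$ for some scalar $c$. I would fix $c$ by taking the trace of both sides: cyclicity together with $O^{\top}O=\eins_d$ gives $\tr\!\big(\int dO\,O\mathcal{A}_1 O^{\top}\big)=\tr(\mathcal{A}_1)$, while $\tr(c\,\eins_d)=cd$, so $c=\tr(\mathcal{A}_1)/d$, which is Eq.~(\ref{FirstOrder}).

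For the second moment the essential input is the Schur–Weyl duality for the orthogonal group (referenced above): the commutant of $\{O^{\otimes 2}\}$ is three-dimensional and spanned by $\eins_d\otimes\eins_d$, $\mathbb{S}$, and $\Pi$. I would therefore write $\int dO\, O^{\otimes 2}\mathcal{A}_2 O^{\top\otimes 2}=\gamma_1\,\eins_d\otimes\eins_d+\gamma_2\,\mathbb{S}+\gamma_3\,\Pi$ and extract the three coefficients by pairing both sides, in the Hilbert–Schmidt inner product, with each $B\in\{\eins_d\otimes\eins_d,\mathbb{S},\Pi\}$. The left-hand pairings collapse cleanly: since each such $B$ commutes with $O^{\otimes 2}$, cyclicity and $O^{\top\otimes 2}B\,O^{\otimes 2}=(O^{\top}O)^{\otimes 2}B=B$ reduce $\int dO\,\tr(B\,O^{\otimes 2}\mathcal{A}_2 O^{\top\otimes 2})$ to $\tr(B\mathcal{A}_2)$. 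This produces the linear system $G\,(\gamma_1,\gamma_2,\gamma_3)^{\top}=\big(\tr\mathcal{A}_2,\ \tr(\mathbb{S}\mathcal{A}_2),\ \tr(\Pi\mathcal{A}_2)\big)^{\top}$, where $G$ is the Gram matrix of the three operators. A short computation using $\mathbb{S}^2=\eins_d\otimes\eins_d$, $\Pi^2=d\Pi$, $\mathbb{S}\Pi=\Pi$, and $\tr\mathbb{S}=\tr\Pi=d$ shows that $G$ has diagonal entries $d^2$ and all off-diagonal entries $d$.

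The last step is to invert $G$. Writing $G=d\big[(d-1)I+J\big]$ with $J$ the $3\times 3$ all-ones matrix, the standard identity $\big[(d-1)I+J\big]^{-1}=\tfrac{1}{d-1}I-\tfrac{1}{(d-1)(d+2)}J$ yields $G^{-1}$ in closed form, and after collecting terms over the common denominator $d(d-1)(d+2)$ one recovers exactly the stated $\gamma_1,\gamma_2,\gamma_3$. The only genuine obstacle is the representation-theoretic fact that the $t=2$ orthogonal commutant is precisely $\mathrm{span}\{\eins_d\otimes\eins_d,\mathbb{S},\Pi\}$; once this is granted, everything downstream is linear algebra. I would also record that these operators are linearly independent exactly when $\det G=d^3(d+2)(d-1)^2\neq 0$, i.e. for $d\geq 2$, so the linear system has a unique solution throughout the regime of interest.
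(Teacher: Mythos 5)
Your proof is correct and follows essentially the same route as the paper's: posit that the twirl lies in the span of $\{\eins_d\otimes\eins_d,\ \mathbb{S},\ \Pi\}$, pair both sides with these three operators via traces, and solve the resulting $3\times 3$ linear system whose Gram matrix is $d\left[(d-1)I+J\right]$. The extra material you supply—the Haar-invariance argument placing the twirl in the commutant, Schur's lemma for the first moment, and the closed-form inversion of the Gram matrix—only makes explicit steps the paper leaves implicit (it cites the commutant structure and states "solving this linear system" without details), so it is a refinement of, not a departure from, the same argument.
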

\begin{proof}
The first-order moment is proportional to the identity operator, i.e. $\int dOO\mathcal{A}_1O^{\top}=\alpha\eins_d$ with $\alpha\in\mathds{C}$. Taking the trace on both sides, we deduce that $\alpha=\frac{\tr(\mathcal{A}_1)}{d}$.

The second moment is a linear combination of three operators $\eins_d\otimes\eins_d$, $\mathbb{S}$, and $\Pi$ \cite{audenaert2002asymptotic},
\begin{align}
    \int\!dO(O\!\otimes\!O)\mathcal{A}_2(O^{\top}\!\otimes\!O^{\top})
    \!=\!\gamma_1\eins_d\!\otimes\!\eins_d\!+\!\gamma_2\mathbb{S}\!+\!\gamma_3\Pi.
\end{align}
To find the coefficients, we left-multiply both sides by $\eins_{d}\otimes\eins_d$, $\mathbb{S}$, $\Pi$ respectively, and take a trace, which gives us the following linear system of equations
\begin{align}
\begin{pmatrix}
    \tr(\mathcal{A}_2)\\
    \tr(\mathbb{S}\mathcal{A}_2)\\
    \tr(\Pi\mathcal{A}_2)
\end{pmatrix}=
    d\begin{pmatrix}
    d	&	1	&	1\\
    1	&	d	&	1\\
    1	&	1	&	d
    \end{pmatrix}
    \begin{pmatrix}
    \gamma_1\\
    \gamma_2\\
    \gamma_3
    \end{pmatrix}.
\end{align}
During the calculation, we use the fact that $\tr(\eins_d\otimes\eins_d)=d^2$ and $\tr(\mathbb{S})=\tr(\Pi)=d$. Solving this linear system, we obtain the coefficients and complete the proof.
\end{proof}
		
In general, the $t$-th moment operator can be expressed as a linear combination of $\frac{(2t)!}{2^tt!}$ operators, which are derived from the Brauer algebra \cite{brauer1937algebras} and the Schur-Weyl duality for orthogonal groups \cite{collins2006integration}. For instance, the fourth moment can be represented as a linear combination of $105$ operators.
		
\begin{lemma}\label{Lemma2}
    For any two operators $\mathcal{A},\mathcal{B}\in\mathcal{L}(\mathds{C}^{d})$, we have $\tr(\Pi\mathcal{A}\otimes\mathcal{B})=\tr(\mathbb{S}\mathcal{A}\otimes\mathcal{B}^{\top})$.
\end{lemma}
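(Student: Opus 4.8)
The plan is to show that both sides of the claimed identity collapse to the same bilinear expression in the matrix entries of $\mathcal{A}$ and $\mathcal{B}$, and I would favor the route that makes the role of the transpose transparent by realizing $\Pi$ as a partial transpose of $\mathbb{S}$.

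First I would record the single observation that drives the proof: the operator $\Pi$ is the partial transpose of $\mathbb{S}$ on the second tensor factor. Writing $\mathbb{S}=\sum_{j,k}|j\rangle\langle k|\otimes|k\rangle\langle j|$ and transposing only the second factor sends $|k\rangle\langle j|\mapsto|j\rangle\langle k|$, so that $\mathbb{S}^{\top_B}=\sum_{j,k}|j\rangle\langle k|\otimes|j\rangle\langle k|=\sum_{j,k}|jj\rangle\langle kk|=\Pi$, where $\top_B$ denotes partial transposition on the second system.

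Second I would invoke the self-duality of the partial transpose under the Hilbert--Schmidt pairing, namely $\tr(M^{\top_B}N)=\tr(M N^{\top_B})$ for all $M,N\in\mathcal{L}(\mathds{C}^{d}\otimes\mathds{C}^{d})$, which is just the factor-wise version of the elementary identity $\tr(X^{\top}Y)=\tr(XY^{\top})$, together with $(\mathcal{A}\otimes\mathcal{B})^{\top_B}=\mathcal{A}\otimes\mathcal{B}^{\top}$. Combining these with the first step yields $\tr(\Pi\,\mathcal{A}\otimes\mathcal{B})=\tr(\mathbb{S}^{\top_B}\mathcal{A}\otimes\mathcal{B})=\tr(\mathbb{S}\,(\mathcal{A}\otimes\mathcal{B})^{\top_B})=\tr(\mathbb{S}\,\mathcal{A}\otimes\mathcal{B}^{\top})$, which is exactly the assertion. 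As a self-contained cross-check I would also expand both sides in components: with $\mathcal{A}=\sum_{a,b}A_{ab}|a\rangle\langle b|$ and $\mathcal{B}=\sum_{c,d}B_{cd}|c\rangle\langle d|$, using $\langle k|\mathcal{A}=\sum_{b}A_{kb}\langle b|$ one finds $\tr(\Pi\,\mathcal{A}\otimes\mathcal{B})=\sum_{j,k}A_{kj}B_{kj}$, while $\tr(\mathbb{S}\,\mathcal{A}\otimes\mathcal{B}^{\top})=\tr(\mathcal{A}\mathcal{B}^{\top})=\sum_{j,k}A_{jk}B_{jk}$ via the standard $\tr(\mathbb{S}\,X\otimes Y)=\tr(XY)$, and the two agree after relabeling $j\leftrightarrow k$.

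There is essentially no serious obstacle here; the only points requiring care are the index bookkeeping --- in particular tracking which index of $\mathcal{B}$ is affected by the transpose --- and stating the self-duality $\tr(M^{\top_B}N)=\tr(M N^{\top_B})$ for the full bipartite operators rather than merely for the individual tensor factors. Neither is deep, so the proof is short in either formulation, and I would present the partial-transpose argument as the main line with the component computation as a verification.
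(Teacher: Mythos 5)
Your proof is correct and follows essentially the same route as the paper: both identify $\Pi=\mathbb{S}^{\top_2}$ (the partial transpose of the SWAP operator on the second factor) and then move the partial transpose onto $\mathcal{A}\otimes\mathcal{B}$ under the trace to obtain $\tr(\Pi\,\mathcal{A}\otimes\mathcal{B})=\tr(\mathbb{S}\,\mathcal{A}\otimes\mathcal{B}^{\top})$. Your explicit statement of the self-duality $\tr(M^{\top_B}N)=\tr(MN^{\top_B})$ and the component-wise cross-check are harmless additions that merely spell out what the paper uses implicitly.
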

\begin{proof}
    Note that the operator $\Pi$ can be expressed as
    \begin{align}
        \Pi&=\sum_{j,k=0}^{d-1}|jj\rangle\langle kk|
        =\sum_{j,k=0}^{d-1}|j\rangle\langle k|\otimes|j\rangle\langle k|\nonumber\\
        &=\sum_{j,k=0}^{d-1}|j\rangle\langle k|\otimes(|k\rangle\langle j|)^{\top}=\mathbb{S}^{\top_2},
    \end{align}
where $A^{\top}$ denotes the transpose of operator $A$ and $\top_2$ is the partial transpose for the second system. Thus, we complete the proof via
    \begin{align}
        \tr(\Pi\mathcal{A}\otimes\mathcal{B})&=\tr(\mathbb{S}^{\top_2}\mathcal{A}\otimes\mathcal{B})
        =\tr(\mathbb{S}\mathcal{A}\otimes\mathcal{B}^{\top}).
    \end{align}
\end{proof}
		
\subsection{Useful properties of the generalized Gell-Mann matrices}\label{Pre:useful}
The generalized Gell-Mann matrices (GGMs) are extensions of the Gell-Mann matrices, which are originally defined for qutrit systems. The GGMs are denoted as $\{\lambda_j\}_{j=0}^{d^2-1}$, where $\lambda_0=\eins_d$ and $\lambda_j$ are the normalized Gell-Mann matrices. These matrices satisfy the following properties, $\lambda_{j}=\lambda_{j}^{\dag}$, $\tr(\lambda_{j}\lambda_{k})=d\delta_{jk}$, $\tr(\lambda_{j})=0$ for $j>0$ \cite{gell2018symmetries}. The GGMs consist of three different types of matrices \cite{bertlmann2008bloch},
		
(i) $\frac{d(d-1)}{2}$ symmetric GGMs,
\begin{align}
    \lambda_{jk}^{(s)}=\sqrt{\frac{d}{2}}\left(|j\rangle\langle k|+|k\rangle\langle j|\right),\!0\leq j<k\leq d-1.
\end{align}
		
(ii) $\frac{d(d-1)}{2}$ antisymmetric GGMs,
\begin{align}
    \lambda_{jk}^{(a)}\!=\!\sqrt{\frac{d}{2}}\left(-i|j\rangle\langle k|+i|k\rangle\langle j|\right),\!0\leq\!j<k\leq\!d-1.
\end{align}

(iii) $d-1$ diagonal GGMs,
\begin{align}
    \lambda_{l}^{(d)}=\mathcal{K}\Big[-(l+1)|l+1\rangle\langle l+1|+\sum_{j=0}^{l}|j\rangle\langle j|\Big],
\end{align}
where $\mathcal{K}=\sqrt{\frac{d}{(l+1)(l+2)}}$ and $0\leq l\leq d-2$.
	
It is important to note that all symmetric and diagonal Gell-Mann matrices have real elements and satisfy Assumption (ii) in the main text. On the other hand, all antisymmetric Gell-Mann matrices have imaginary elements and satisfy Assumption (iii) in the main text. As a result, we can divide the Gell-Mann matrices into two categories: real Gell-Mann matrices denoted as $\boldsymbol{\lambda}=\{\lambda_{1},\cdots,\lambda_{L}\}$ and imaginary Gell-Mann matrices denoted as $\hat{\boldsymbol{\lambda}}=\{\lambda_{L+1},\cdots,\lambda_{d^2-1}\}$. Here, $L=\frac{(d-1)(d+2)}{2}$ and $\hat{L}=\frac{d(d-1)}{2}$. It is worth mentioning that the total number of Gell-Mann matrices is given by $L+\hat{L}=d^2-1$. For the Gell-Mann matrices, we present the following lemma.
\begin{lemma}\label{Lemma3}
    For two GGMs $\mathcal{A}$ and $\mathcal{B}$ in Hilbert space $\mathcal{H}\in\mathbb{C}^{d\times d}$, we have
    \begin{align}
        \tr(\Pi\mathcal{A}\otimes\mathcal{B})=
        \begin{cases}
            d, & \mathcal{A}=\mathcal{B}\in\boldsymbol{\lambda}\\
            -d, & \mathcal{A}=\mathcal{B}\in\hat{\boldsymbol{\lambda}}\\
            0, & otherwise.\\
		\end{cases}.
    \end{align}
\end{lemma}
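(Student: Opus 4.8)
The plan is to reduce the $\Pi$-trace to an ordinary single-copy trace and then read off the three cases from the transposition behavior and orthogonality of the GGMs. The key observation is that Lemma~\ref{Lemma2}, already established above, rewrites the bipartite quantity in terms of the SWAP operator: $\tr(\Pi\,\mathcal{A}\otimes\mathcal{B}) = \tr(\mathbb{S}\,\mathcal{A}\otimes\mathcal{B}^{\top})$. This is the main structural input, so no fresh representation-theoretic machinery is needed.

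First I would collapse the SWAP-trace using the elementary identity $\tr[\mathbb{S}(\mathcal{X}\otimes\mathcal{Y})] = \tr(\mathcal{X}\mathcal{Y})$, which follows in one line by writing $\mathbb{S}=\sum_{j,k}|jk\rangle\langle kj|$ and summing the matrix elements. Taking $\mathcal{X}=\mathcal{A}$ and $\mathcal{Y}=\mathcal{B}^{\top}$ then yields $\tr(\Pi\,\mathcal{A}\otimes\mathcal{B}) = \tr(\mathcal{A}\mathcal{B}^{\top})$, reducing everything to a trace of a product of two $d\times d$ matrices. (As a cross-check, one can also compute $\tr(\Pi\,\mathcal{A}\otimes\mathcal{B})=\sum_{j,k}\mathcal{A}_{kj}\mathcal{B}_{kj}=\tr(\mathcal{A}\mathcal{B}^{\top})$ directly from the definition of $\Pi$, which confirms the reduction.)

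Next I would insert the defining transposition property of the two GGM families recorded in Appendix~\ref{Pre:useful}: $\mathcal{B}^{\top}=\mathcal{B}$ for the symmetric and diagonal (real) matrices $\mathcal{B}\in\boldsymbol{\lambda}$, and $\mathcal{B}^{\top}=-\mathcal{B}$ for the antisymmetric (imaginary) matrices $\mathcal{B}\in\hat{\boldsymbol{\lambda}}$. Combined with the orthogonality relation $\tr(\mathcal{A}\mathcal{B})=d$ when $\mathcal{A}=\mathcal{B}$ and $\tr(\mathcal{A}\mathcal{B})=0$ otherwise, this gives $\tr(\mathcal{A}\mathcal{B}^{\top})=+d$ when $\mathcal{A}=\mathcal{B}\in\boldsymbol{\lambda}$ and $\tr(\mathcal{A}\mathcal{B}^{\top})=-d$ when $\mathcal{A}=\mathcal{B}\in\hat{\boldsymbol{\lambda}}$, matching the first two branches of the claim.

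The only point requiring a little care is the \emph{otherwise} branch, but it resolves uniformly: if $\mathcal{A}\neq\mathcal{B}$ then orthogonality forces $\tr(\mathcal{A}\mathcal{B})=0$, so $\tr(\mathcal{A}\mathcal{B}^{\top})=\pm\tr(\mathcal{A}\mathcal{B})=0$ irrespective of the sign picked up from the transpose, and whenever $\mathcal{A}$ and $\mathcal{B}$ lie in different families they are automatically distinct and hence covered by the same argument. This exhausts all cases, so I expect no genuine obstacle here—the proof is essentially a one-step application of Lemma~\ref{Lemma2} followed by bookkeeping of the symmetric/antisymmetric split of the GGMs.
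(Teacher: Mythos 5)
Your proposal is correct and follows essentially the same route as the paper's own proof: apply Lemma~\ref{Lemma2} to convert the $\Pi$-trace into a SWAP-trace, collapse it via $\tr[\mathbb{S}(\mathcal{A}\otimes\mathcal{B}^{\top})]=\tr(\mathcal{A}\mathcal{B}^{\top})$, then use the symmetric/antisymmetric transposition property of the GGMs together with the orthogonality relation $\tr(\lambda_j\lambda_k)=d\delta_{jk}$. You merely spell out intermediate steps (the SWAP-trace identity and the treatment of the \emph{otherwise} branch) that the paper leaves implicit.
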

\begin{proof}
    Based on Lemma~\ref{Lemma2}, we immediately have
    \begin{align}
        \tr(\Pi\mathcal{A}\!\otimes\!\mathcal{B})
        \!=\!\tr(\mathbb{S}\mathcal{A}\!\otimes\!\mathcal{B}^{\top})
        \!=\!\begin{cases}
        \tr(\mathcal{A}\mathcal{B}),\!&\!\mathcal{B}\in\boldsymbol{\lambda}\\
        -\tr(\mathcal{A}\mathcal{B}),\!&\!
        \mathcal{B}\in\hat{\boldsymbol{\lambda}}\\
        \end{cases}.
    \end{align}
The proof is completed by using $\tr(\lambda_j\lambda_k)=d\delta_{jk}$ for two GGMs $\lambda_j$, $\lambda_k$.
\end{proof}
		
\subsection{Moments of RRMs and PRRMs}\label{Moments}
In this subsection, we derive the second moment for quantum states of single-qudit, two-qudit, and $N$-qudit systems.
		
\subsubsection{Single-qudit systems}\label{Moments:single}
Consider a single qudit state $\vr=\frac{1}{d}\sum_{j=0}^{d^2-1}T_j\lambda_j$. The second moment with real observable $M$, satisfying the conditions $\tr(M)=0$ and $\tr(M^2)=d$, is
\begin{align}
    Q^{(2)}&\!=\!\int\!dO\left[\tr\left(O\vr O^{\top}M\right)\right]^2
    \!=\!\frac{1}{d^2}\!\sum_{j,k=1}^{d^2-1}\!T_jT_k\tr(\Phi),
\end{align}
where the quantity
\begin{align}
    \tr(\Phi)&\!=\!\tr\left[\int dO\left(O\otimes O\right)(\lambda_j\otimes\lambda_k)\left(O^{\top}\otimes O^{\top}\right)\left(M\!\otimes\!M\right)\right]\nonumber\\
    &\!=\!\gamma_2\tr(\mathbb{S}M\otimes M)+\gamma_3\tr(\Pi M\otimes M)\nonumber\\
    &\!=\!d\left(\gamma_2+\gamma_3\right).
\end{align}
The third equation uses the fact that $\tr(\mathbb{S}M\otimes M)=\tr(\Pi M\otimes M)=\tr(M^2)=d$. The coefficients
\begin{align}
    \gamma_2&=\frac{(d+1)d\delta_{jk}}{d(d-1)(d+2)}-\frac{\tr(\Pi\lambda_j\otimes\lambda_k)}{d(d-1)(d+2)},\\
    \gamma_3&=-\frac{d\delta_{jk}}{d(d-1)(d+2)}+\frac{(d+1)\tr(\Pi\lambda_j\otimes\lambda_k)}{d(d-1)(d+2)}.
\end{align}
Following Lemma~\ref{Lemma1}, the quantity
\begin{align}
    \tr(\Phi) &=d\left(\gamma_2+\gamma_3\right)
    =\frac{d^2\delta_{jk}+d\tr(\Pi\lambda_j\otimes\lambda_k)}{(d-1)(d+2)}\nonumber\\
    &=\begin{cases}
        \frac{2d^2\delta_{jk}}{(d-1)(d+2)}, & \lambda_k\in\boldsymbol{\lambda}\\
        0, & \lambda_k\in\hat{\boldsymbol{\lambda}}\\
    \end{cases}.
\end{align}
Using Lemma~\ref{Lemma3}, we finally obtain the second moment of RRMs, 
    \begin{align}
        Q^{(2)}&=\frac{1}{d^2}\times\frac{2d^2}{(d-1)(d+2)}\sum_{j=1}^{L}T_j^2\nonumber\\
        &=\frac{2}{(d-1)(d+2)}\sum_{j=1}^{L}T_j^2=L^{-1}\sum_{j=1}^{L}T_j^2.
    \end{align}
		
If we utilize an imaginary observable $\hat{M}$, such that $\tr(\hat{M})=0$ and $\tr(\hat{M}^2)=d$, the quantity
\begin{align}
    \tr(\Phi) &=d\left(\gamma_2-\gamma_3\right)
    =\frac{d\delta_{jk}-\tr(\Pi\lambda_j\otimes\lambda_k)}{d-1}\nonumber\\
    &=\begin{cases}
        0, & \lambda_k\in\boldsymbol{\lambda}\\
        \frac{2d}{d-1}, & \lambda_k\in\hat{\boldsymbol{\lambda}}\\
    \end{cases}.
\end{align}
Using the same tricks as RRMs, the second moment of PRRMs is
\begin{align}
    \hat{Q}^{(2)}&=\int dO\left[\tr\left(O\vr O^{\top}\hat{M}\right)\right]^2 \nonumber\\
    &=\frac{1}{d^2}\times\frac{2d}{d-1}\sum_{j=L+1}^{L+\hat{L}}T_j^2\nonumber\\
    &=\frac{2}{d(d-1)}\sum_{j=L+1}^{L+\hat{L}}T_j^2\nonumber\\
    &=\hat{L}^{-1}\sum_{j=L+1}^{L+\hat{L}}T_j^2.
\end{align}
		
\subsubsection{Two-qudit system}\label{Moments:two}
Consider a two-qudit state $\vr=\frac{1}{d^2}\sum_{j,k=0}^{d^2-1}T_{jk}\lambda_j^A\otimes\lambda_k^B$. The second moment with real observables $M_A$ and $M_B$ satisfying $\tr(M_s)=0$ and $\tr(M_s^2)=d$ for $s=A,B$ is
\begin{align}
    &Q^{(2)}\nonumber\\
    &=\!\int\!dO_A\!\int\!dO_B\left\{\tr\left[(O_A\!\otimes\!O_B)\vr (O_A^{\top}\!\otimes\!O_B^{\top})(M_A\!\otimes\! M_B)\right]\right\}^{2}\nonumber\\
    &=\frac{1}{d^4}\sum_{j_{1},k_{1},j_{2},k_{2}=1}^{d^2-1}T_{j_{1}k_{1}}T_{j_{2}k_{2}}\tr(\Phi_A)\tr(\Phi_B).
\end{align}
where the quantity
\begin{align}
    \tr(\Phi_A)&=\tr\left[\int dO_AO_A^{\otimes2}(\lambda_{j_{1}}^A\otimes\lambda_{j_2}^A)O_A^{\top\otimes2}M_A^{\otimes2}\right]\nonumber\\
    &=\gamma_2\tr\left[\mathbb{S}(M_{A}\otimes M_{A})\right]+\gamma_3\tr\left[\Pi(M_{A}\otimes M_{A})\right]\nonumber\\
    &=d(\gamma_2+\gamma_3).
\end{align}
A similar result for $\tr(\Phi_B)$ can also be found. Following the derivation of the single-qubit case, the second moment is
\begin{align}
    Q^{(2)}&=\frac{1}{d^4}\left[\frac{2d^2}{(d-1)(d+2)}\right]^2\sum_{j,k=1}^{L}T_{jk}^2\nonumber\\
    &=\left[\frac{2}{(d-1)(d+2)}\right]^2\sum_{j,k=1}^{L}T_{jk}^2\nonumber\\
    &=L^{-2}\sum_{j,k=1}^{L}T_{jk}^2.
\end{align}

If we utilize imaginary observables $\hat{M}_A$, $\hat{M}_B$ satisfying $\tr(\hat{M}_s)=0$ and $\tr(\hat{M}_s^2)=d$ for $s=A,B$, the second moment of PRRMs turns to
\begin{align}
    &\hat{Q}^{(2)}\nonumber\\
    &\!=\!\int dO_A\!\int dO_B\left\{\tr\left[(O_A\!\otimes\!O_B)\vr (O_A^{\top}\!\otimes\!O_B^{\top})(\hat{M}_A\!\otimes\!\hat{M}_B)\right]\right\}^{2}\nonumber\\
    &\!=\!\frac{1}{d^4}\left[\frac{2d}{d-1}\right]^2\!\sum_{j,k=L+1}^{L+\hat{L}}T_{jk}^2\nonumber\\
    &\!=\!\hat{L}^{-2}\!\sum_{j,k=L+1}^{L+\hat{L}}T_{jk}^2.
\end{align}
		
\subsubsection{Multipartite systems}\label{Moments:multi}
Generalizing the above results to an $N$-qudit state $\vr=\frac{1}{d^N}\sum_{j_1,\cdots,j_N=0}^{d^2-1}T_{j_1\cdots j_N}\lambda_{j_1}\otimes\cdots\otimes\lambda_{j_N}$. The second moment with real observables $M_s$ satisfying $\tr(M_s)=0$ and $\tr(M_s^2)=d$ for $s=1,\cdots,N$ is
\begin{align}
    Q^{(2)}&=\frac{1}{d^{2N}}\left[\frac{2d^2}{(d-1)(d+2)}\right]^N\sum_{j_1,\cdots,j_N=1}^{L}T_{j_1\cdots j_N}^2\nonumber\\
    &=L^{-N}\sum_{j_1,\cdots,j_N=1}^{L}T_{j_1\cdots j_N}^2.
\end{align}
		
If we use imaginary observables $\hat{M}_s$ satisfying $\tr(\hat{M}_s)=0$ and $\tr(\hat{M}_s^2)=d$ for $s=1,\cdots,N$, the second moment of PRRMs turns to
\begin{align}
    \hat{Q}^{(2)}&=\frac{1}{d^{2N}}\left[\frac{2d}{d-1}\right]^N\sum_{j_1,\cdots,j_N=L+1}^{L+\hat{L}}T_{j_1\cdots j_N}^2\nonumber\\
    &=\left[\frac{2}{d(d-1)}\right]^N\sum_{j_1,\cdots,j_N=L+1}^{L+\hat{L}}T_{j_1\cdots j_N}^2\nonumber\\
    &=\hat{L}^{-N}\sum_{j_1,\cdots,j_N=L+1}^{L+\hat{L}}T_{j_1\cdots j_N}^2.
\end{align}
		
\section{Proofs of the observations in the main text}\label{B_Proofs}
In this section, we present the observations in the main text and give the corresponding proofs.

\subsection{Proof for observation 1}
For $t=2$, the proof is straightforward by combining the second moments of a bipartite state described in Appendix~\ref{Moments:two}.
			
For general $t$, the proof follows the definition of the $t$-order orthogonal moment $S^{(t)}$ over a generalized pseudo-Bloch sphere in Appendix~\ref{C_Estimating}. A similar computational approach can also be found in the proof of Theorem 1 in Ref.~\cite{wyderka2023probing}. For real observables, the rotations always occur on the real GMs and thereby only extract real-real correlations. Imaginary observables have similar results.
\subsection{Proof for observation 2}
For a fixed $Q_{AB}^{(2)}=C_2$, we find the minimum of the optimization problem, $f_{\min}(r,C_2)$ for $r=1,2,\cdots,d$. For all separable states, we set $r=1$ and find the corresponding minimum $f_{\min}(1,C_2)$. If the measured fourth-moment of a state $\vr$, $C_4$, satisfies, $C_4<f_{\min}(1,C_2)$, then $\vr$ is an entangled state. Using the same trick, we complete the proof by considering other $r=2,\cdots,d$.

\subsection{Proof for observation 3}
We express $\vr_{AB}$ as the following decomposition
\begin{align}
    \vr_{AB}&=\frac{1}{d^2}\sum_{j,k=0}^{d^2-1}T_{jk}\lambda_j^A\otimes\lambda_k^B\nonumber\\
    &=\frac{1}{d^2}\eins_d^{A}\otimes\eins_d^{B}+\frac{1}{d^2}\sum_{k=1}^{d^2-1}T_{0k}\eins_d^{A}\otimes\lambda_k^B\nonumber\\
    &+\frac{1}{d^2}\sum_{j=1}^{d^2-1}T_{j0}\lambda_j^A\otimes\eins_d^{B}+\frac{1}{d^2}\sum_{j,k=1}^{d^2-1}T_{jk}\lambda_j^A\otimes\lambda_k^B\nonumber\\
    &=\vr_A\otimes\frac{\eins_d^B}{d}+\frac{\eins_d^A}{d}\otimes\vr_B+\frac{1}{d^2}\sum_{j,k=1}^{d^2-1}T_{jk}\lambda_j^A\otimes\lambda_k^B\nonumber\\
    &-\frac{1}{d^2}\eins_d^{A}\otimes\eins_d^{B},
\end{align}
where the reduced states
\begin{align}
    \vr_A&=\tr_B(\vr_{AB})=\frac{1}{d}\Big(\eins_d^A+\sum_{j=1}^{d^2-1}T_{j0}\lambda_j^A\Big),\nonumber\\
    \vr_B&=\tr_A(\vr_{AB})=\frac{1}{d}\Big(\eins_d^B+\sum_{k=1}^{d^2-1}T_{0k}\lambda_k^B\Big).
\end{align}
The imaginarity of $\vr_{AB}$ is characterized by the imaginarity of reduced states $\vr_A$, $\vr_B$, and the correlation $\sum_{j,k=1}^{d^2-1}T_{jk}\lambda_j^A\otimes\lambda_k^B$. Thus, $\vr_{AB}$ is imaginary if and only if one of the following conditions is true,
			
(a) $\vr_A$ is imaginary,
			
(b) $\vr_B$ is imaginary,
			
(c) the correlation $\sum_{j,k=1}^{d^2-1}T_{jk}\lambda_j^A\otimes\lambda_k^B$ contains imaginarity. One of $\lambda_j^A$ and $\lambda_k^B$ in the correlation $\sum_{j,k=1}^{d^2-1}T_{jk}\lambda_j^A\otimes\lambda_k^B$ is real GGMs and the other is imaginary GGMs.
			
The above conditions correspond to the quantities (a) $\hat{Q}_A^{(2)}>0$, (b) $\hat{Q}_B^{(2)}>0$, and (c) $G_{AB}^{(2)}>0$, respectively. Conversely, $\vr_{AB}$ is a real state if and only if all conditions are not true. Thus, the result (i) is proved.
			
Next, we prove the lower bound of the robustness of imaginarity of $\vr_{AB}$. Based on Ref.~\cite{wu2021operational}, we have
\begin{align}
    \mathcal{F}_{R}(\vr_{AB})&=\min_{\tau}\left\{\mu\geq0:\frac{\vr_{AB}+\mu\tau}{1+\mu}\in\mathscr{R}\right\}\\
    &=\frac{1}{2}\|\vr_{AB}-\vr_{AB}^{\top}\|_{\tr}\\
    &\geq\frac{1}{2}\|\vr_{AB}-\vr_{AB}^{\top}\|_{\mathrm{HS}},
\end{align}
where $\|A\|_{\tr}=\tr\left(\sqrt{A^{\dag}A}\right)=\sum_{j}\sigma_j$ denotes the trace norm (the sum of singular values) of $A$ and $\|A\|_{\mathrm{HS}}=\sqrt{\tr\left(A^{\dag}A\right)}=\sqrt{\sum_j\sigma_j^2}$ is the Hilbert-Schmidt norm (the square root of the sum of the squares of its singular values) of $A$. Note that the inequality $\|A\|_{\tr}\geq\|A\|_{\mathrm{HS}}$ holds, and the equality is true if and only if the operator $A$ has all zero singular values.
			
Motivated by the Bloch decomposition of $\vr_{AB}$, we express the correlations in $\vr_{AB}$ as four parts, including real-real, imaginary-imaginary, real-imaginary, and imaginary-real parts. Here, we assume the first $L$ Gell-Mann matrices are real GMMs and next $\hat{L}$ are imaginary GMMs. We then obtain the following equations,
\begin{align}
    \frac{1}{d^2}\sum_{j,k=1}^{d^2-1}T_{jk}\lambda_j^A\otimes\lambda_k^B
    &=\frac{1}{d^2}\sum_{j,k=1}^{L}T_{jk}\lambda_j^A\otimes\lambda_k^B\nonumber\\
    &+\frac{1}{d^2}\sum_{j,k=L+1}^{L+\hat{L}}T_{jk}\lambda_j^A\otimes\lambda_k^B\nonumber\\
    &+\frac{1}{d^2}\sum_{j=1}^{L}\sum_{k=L+1}^{L+\hat{L}}T_{jk}\lambda_j^A\otimes\lambda_k^B\nonumber\\
    &+\frac{1}{d^2}\sum_{j=L+1}^{L+\hat{L}}\sum_{k=1}^{L}T_{jk}\lambda_j^A\otimes\lambda_k^B,
\end{align}
and its transposition
\begin{align}
    \Big(\frac{1}{d^2}\!\sum_{j,k=1}^{d^2-1}T_{jk}\lambda_j^A\!\otimes\!\lambda_k^B\Big)\!^{\top}
    &\!=\!\frac{1}{d^2}\sum_{j,k=1}^{L}T_{jk}\lambda_j^A\!\otimes\!\lambda_k^B\nonumber\\
    &\!+\!\frac{1}{d^2}\sum_{j,k=L+1}^{L+\hat{L}}T_{jk}\lambda_j^A\!\otimes\!\lambda_k^B\nonumber\\
    &\!-\!\frac{1}{d^2}\sum_{j=1}^{L}\sum_{k=L+1}^{L+\hat{L}}T_{jk}\lambda_j^A\!\otimes\!\lambda_k^B\nonumber\\
    &\!-\!\frac{1}{d^2}\sum_{j=L+1}^{L+\hat{L}}\sum_{k=1}^{L}T_{jk}\lambda_j^A\!\otimes\!\lambda_k^B.
\end{align}
The gap between the above quantities is given by
\begin{align}
    &\frac{1}{d^2}\sum_{j,k=1}^{d^2-1}T_{jk}\lambda_j^A\!\otimes\!\lambda_k^B-\Big(\frac{1}{d^2}\sum_{j,k=1}^{d^2-1}T_{jk}\lambda_j^A\!\otimes\!\lambda_k^B\Big)^{\top}\nonumber\\
    &=\sum_{j=1}^{L}\!\sum_{k=L+1}^{L+\hat{L}}\frac{2T_{jk}}{d^2}\lambda_j^A\!\otimes\!\lambda_k^B\!+\!\sum_{j=L+1}^{L+\hat{L}}\!\sum_{k=1}^{L}\frac{2T_{jk}}{d^2}\lambda_j^A\!\otimes\!\lambda_k^B.
\end{align}
We then obtain
\begin{align}
    \vr_{AB}\!-\!\vr_{AB}^{\top}&\!=\!\left(\vr_A-\vr_A^{\top}\right)\otimes\frac{\eins_d^B}{d}+\frac{\eins_d^A}{d}\otimes\left(\vr_B-\vr_B^{\top}\right)\nonumber\\
    &+\sum_{j=1}^{L}\sum_{k=L+1}^{L+\hat{L}}\frac{2T_{jk}}{d^2}\lambda_j^A\otimes\lambda_k^B\nonumber\\
    &+\sum_{j=L+1}^{L+\hat{L}}\sum_{k=1}^{L}\frac{2T_{jk}}{d^2}\lambda_j^A\otimes\lambda_k^B.
\end{align}
Thus, the Hilbert-Schmidt distance is given by
\begin{widetext}
    \begin{align}
    \|\vr_{AB}\!-\!\vr_{AB}^{\top}\|_{\mathrm{HS}}^2
    &\!=\!\|\left(\vr_A\!-\!\vr_A^{\top}\right)\!\otimes\!\frac{\eins_d^B}{d}
    \!+\!\frac{\eins_d^A}{d}\!\otimes\!\left(\vr_B-\vr_B^{\top}\right)
    \!+\!\frac{2}{d^2}(\sum_{j=1}^{L}\sum_{k=L+1}^{L+\hat{L}}T_{jk}\lambda_j^{A}\!\otimes\!\lambda_k^{B}
    \!+\!\sum_{j=L+1}^{L+\hat{L}}\sum_{k=1}^{L}T_{jk}\lambda_j^{A}\!\otimes\!\lambda_k^{B})\|_{\mathrm{HS}}^2\nonumber\\
    &\!=\!\left\|\left(\vr_A\!-\!\vr_A^{\top}\right)\!\otimes\!\frac{\eins_d^B}{d}\right\|_{\mathrm{HS}}^2\!+\!\left\|\frac{\eins_d^A}{d}\!\otimes\!\left(\vr_B\!-\!\vr_B^{\top}\right)\right\|_{\mathrm{HS}}^2\!+\!\|\frac{2}{d^2}(\sum_{j=1}^{L}\sum_{k=L+1}^{L+\hat{L}}T_{jk}\lambda_j^{A}\!\otimes\!\lambda_k^{B}
    \!+\!\sum_{j=L+1}^{L+\hat{L}}\sum_{k=1}^{L}T_{jk}\lambda_j^{A}\!\otimes\!\lambda_k^{B})\|_{\mathrm{HS}}^2\nonumber\\
    &\!=\!\frac{1}{d}(\left\|\vr_B\!-\!\vr_B^{\top}\right\|_{\mathrm{HS}}^2
    \!+\!\left\|\vr_A\!-\!\vr_A^{\top}\right\|_{\mathrm{HS}}^2)
    \!+\!\frac{4}{d^4}
    \|\sum_{j=1}^{L}\sum_{k=L+1}^{L+\hat{L}}T_{jk}\lambda_j^{A}\!\otimes\!\lambda_k^{B}
    \!+\!\sum_{j=L+1}^{L+\hat{L}}\sum_{k=1}^{L}T_{jk}\lambda_j^{A}\!\otimes\!\lambda_j^{B}\|_{\mathrm{HS}}^2.
\end{align}
\end{widetext}
where the last two equations are derived by the definition of the Hilbert-Schmidt norm.

Next, we divide $\vr_A$ into three parts in terms of real GGMs and imaginary GGMs
\begin{align}
    \vr_A=\frac{1}{d}\eins_d^A+\frac{1}{d}\sum_{j=1}^{L}T_{j0}\lambda_{j}^A
    +\frac{1}{d}\sum_{k=L+1}^{L+\hat{L}}T_{k0}\lambda_k^A
\end{align}
and find the matrix
\begin{align}
    \vr_A-\vr_A^{\top}=\frac{2}{d}\sum_{k=L+1}^{L+\hat{L}}T_{k0}\lambda_k^A,
\end{align}
The square of the Hilbert-Schmidt norm of matrix $\vr_A-\vr_A^{\top}$ is
\begin{align}
    \|\vr_A-\vr_A^{\top}\|_{\mathrm{HS}}^2
    &=\|\frac{2}{d}\sum_{k=L+1}^{L+\hat{L}}T_{k0}\lambda_k^A\|_{\mathrm{HS}}^2\nonumber\\
    &=\tr[\frac{4}{d^2}\sum_{j,k=L+1}^{L+\hat{L}}T_{j0}T_{k0}\lambda_j^A\lambda_k^A]\nonumber\\
    &=\frac{4}{d}\sum_{j=L+1}^{L+\hat{L}}T_{j0}^2.
\end{align}
The gap between the second moments of RMs and RRMs is defined as
\begin{align}
    G_{A}&=(d^2-1)R_{A}^{(2)}-LQ_{A}^{(2)}=\sum_{j=1}^{d^2-1}T_{j0}^2-\sum_{k=1}^{L}T_{k0}^2\nonumber\\
    &=\sum_{j=L+1}^{L+\hat{L}}T_{j0}^2=\hat{L}\hat{Q}_A^{(2)}.
\end{align}
The gap $G_A$ coincides with the second moment of PRRMs $\hat{Q}_A^{(2)}$. Therefore, we have $\|\vr_A-\vr_A^{\top}\|_{\mathrm{HS}}^2=\frac{4\hat{L}}{d}\hat{Q}_A^{(2)}$ and similarly $\|\vr_B-\vr_B^{\top}\|_{\mathrm{HS}}^2=\frac{4\hat{L}}{d}\hat{Q}_B^{(2)}$.
			
Note that the two-qudit gap $G_{AB}^{(2)}$ is associated with the real-imaginary and imaginary-real correlations such that
\begin{widetext}
    \begin{align}
    \left\|\frac{2}{d^2}\left(\sum_{j=1}^{L}\sum_{k=L+1}^{L+\hat{L}}T_{jk}\lambda_j^{A}\otimes\lambda_k^{B}
    +\sum_{j=L+1}^{L+\hat{L}}\sum_{k=1}^{L}T_{jk}\lambda_j^{A}\otimes\lambda_k^{B}\right)\right\|_{\mathrm{HS}}^2
    &=\frac{4}{d^2}\left(\sum_{j=1}^{L}\sum_{k=L+1}^{L+\hat{L}}T_{jk}^2+\sum_{j=L+1}^{L+\hat{L}}\sum_{k=1}^{L}T_{jk}^2\right)\nonumber\\
    &=\frac{4}{d^2}\left[(d^2-1)^2R_{AB}^{(2)}-L^2Q_{AB}^{(2)}-\hat{L}^2\hat{Q}_{AB}^{(2)}\right]\nonumber\\
    &=\frac{4}{d^2}G_{AB}^{(2)}.
\end{align}
\end{widetext}
As a result, we obtain
\begin{align}
    \|\vr_{AB}-\vr_{AB}^{\top}\|_{\mathrm{HS}}^2
    \!=\!\frac{4}{d^2}\left[\hat{L}\left(\hat{Q}_A^{(2)}+\hat{Q}_B^{(2)}\right)+G_{AB}^{(2)}\right].
\end{align}
Finally, the robustness of the imaginarity measure $\mathcal{F}_{R}(\vr_{AB})$ has a lower bound
\begin{align}
    \mathcal{F}_{R}(\vr_{AB})&\geq\frac{1}{2}\|\vr_{AB}-\vr_{AB}^{\top}\|_{\mathrm{HS}}\nonumber\\
    &=\frac{1}{d}\sqrt{\hat{L}\left(\hat{Q}_A^{(2)}+\hat{Q}_B^{(2)}\right)+G_{AB}^{(2)}}.
\end{align}
The equality holds if and only if $\vr_{AB}$ is a real state.

\subsection{Proof for observation 4}
For the measurement channel $\mathcal{M}(\vr)$, we have
\begin{align}
    \mathcal{M}(\vr)
    &\!=\!\sum_{b=1}^{d^N}\int dO\langle b|O\vr O^{\top}|b\rangle O^{\top}|b\rangle\langle b|O\nonumber\\
    &\!=\!\sum_{b=1}^{d^N}\int dO\tr(\vr O^{\top}|b\rangle\langle b|O)O^{\top}| b\rangle\langle b|O\nonumber\\
    &\!=\!\sum_{b=1}^{d^N}\int dO\tr_1\left[(\vr\otimes\eins_{d^N})O^{\top\otimes2}| b\rangle\langle b|^{\otimes2}O^{\otimes2}\right]\nonumber\\
    &\!=\!\sum_{b=1}^{d^N}\!\tr_1[(\vr\otimes\eins_{d^N})\!\int\! dOO^{\top\otimes2}| b\rangle\langle b|^{\otimes2}O^{\otimes2}].
\end{align}
Utilizing Lemma~\ref{Lemma1}, we have
\begin{align}
    \mathcal{M}(\vr)&=\sum_{b=1}^{d^N}\frac{\tr_1\left[\vr\otimes\eins_{d^N}+(\vr\otimes\eins_{d^N})\mathbb{S}+(\vr\otimes\eins_{d^N})\Pi\right]}{d^N(d^N+2)}\nonumber\\
    &=\frac{\tr(\vr)\eins_{d^N}+\vr+\vr^{\top}}{d^N+2},\nonumber\\
    &=\frac{\eins_{d^N}+2\vr}{d^N+2}.
\end{align}
The second equality utilizes the following tricks (see Eqs. (140 and 152) in Ref.~\cite{mele2024introduction}),
\begin{align}
    &\tr_1(\mathcal{A}\otimes\mathcal{B}\mathbb{S})=\mathcal{B}\mathcal{A},
    \quad\tr_1(\mathcal{A}\otimes\mathcal{B}\Pi)=\mathcal{B}\mathcal{A}^{\top}.
\end{align}
Finally, we have that $\mathcal{M}^{-1}(\vr)=\frac{d^N+2}{2}\vr-\frac{1}{2}\eins_{d^N}$, since it can be easily checked that $\mathcal{M}^{-1}\left[\mathcal{M}(\vr)\right]=\vr$.
		
\section{Estimating the moments of bipartite systems from random rotations}\label{C_Estimating}
To evaluate the $t$-order moments of RMs and PRRMs, we introduce the $t$-order orthogonal moment $S^{(t)}$, which is taken by an integral over a generalized pseudo-Bloch sphere,
\begin{align}
\footnotesize
    S^{(t)}&\!=\!\frac{1}{\mathcal{V}^2}\!\iint\! d\boldsymbol{u}^Ad\boldsymbol{u}^B\left\{\tr\left[\vr \left(\boldsymbol{u}^A\boldsymbol{\lambda}^A\otimes\boldsymbol{u}^B\boldsymbol{\lambda}^B\right)\right]\right\}^{t},\nonumber\\
    &\!=\!\frac{1}{\mathcal{V}^2}\!\iint\!d\boldsymbol{u}^Ad\boldsymbol{u}^B[\sum_{j,k=1}^{L}\!\frac{T_{jk}}{d^2}\tr(\lambda_j^{A}\boldsymbol{u}^A\boldsymbol{\lambda}^A)\tr(\lambda_k^{B}\boldsymbol{u}^B\boldsymbol{\lambda}^B)]^{t}\nonumber\\
    &\!=\!\frac{1}{\mathcal{V}^2}\iint d\boldsymbol{u}^Ad\boldsymbol{u}^B\left[\sum_{l_1,l_2=1}^{L}T_{l_1l_2}u^A_{l_1}u^B_{l_2}\right]^{t},
\end{align}
where $\boldsymbol{u}^A=\left(u_1^A,\cdots,u_L^A\right)$ and $\boldsymbol{u}^B=\left(u_1^B,\cdots,u_L^B\right)$ are $L$-dimensional unit vectors. Here, the integral spans all $L$-dimensional unit vectors and $\mathcal{V}=\frac{2\pi^{\frac{L}{2}}}{\Gamma(\frac{L}{2})}$ is the surface of the unit sphere in $L$ dimensions~\cite{wyderka2023probing}, where $\Gamma(z)$ denotes Euler's gamma function with $\Gamma(z)=(z-1)!$ for positive integer $z$~\cite{folland2001integrate}. Using the multinomial theorem
\begin{align}
    (\sum_{j,k=1}^{n}X_{j,k})^t\!=\!\sum_{t_{1,1}+\cdots+t_{n,n}=t}\frac{t!}{t_{1,1}!\cdots t_{n,n}!}\prod_{j,k=1}^{n}X_{j,k}^{t_{j,k}},
\end{align}
we have
\begin{widetext}
    \begin{align}
    S^{(t)}=&\frac{1}{\mathcal{V}^2}\iint d\boldsymbol{u}^Ad\boldsymbol{u}^B\sum_{t_{1,1}+\cdots+t_{L,L}=t}\frac{t!}{t_{1,1}!\cdots t_{L,L}!}\prod_{l_1,l_2=1}^{L}\left(T_{l_1l_2}u^A_{l_1}u^B_{l_2}\right)^{t_{l_1,l_2}}\\
    =&\frac{1}{\mathcal{V}^2}\sum_{t_{1,1}+\cdots+t_{L,L}=t}\frac{t!}{t_{1,1}!\cdots t_{L,L}!}\prod_{l_1,l_2=1}^{L}T_{l_1l_2}^{t_{l_1,l_2}}\int d\boldsymbol{u}^A\prod_{l_1=1}^{L}\left(u^A_{l_1}\right)^{\alpha_{l_1}}\int d\boldsymbol{u}^B\prod_{l_2=1}^{L}\left(u^B_{l_2}\right)^{\tilde{\alpha}_{l_2}}.
\end{align}
\end{widetext}
For all odd $\alpha_{l_1}$ and $\tilde{\alpha}_{l_2}$, the integrals are $\int d\boldsymbol{u}^A\prod_{l_1=1}^{L}\left(u^A_{l_1}\right)^{\alpha_{l_1}}=\int d\boldsymbol{u}^B\prod_{l_2=1}^{L}\left(u^B_{l_2}\right)^{\tilde{\alpha}_{l_2}}=0$ \cite{folland2001integrate}. Thus, we only consider even numbers and yield
\begin{widetext}
    \begin{align}
    S^{(t)}=\frac{4}{\mathcal{V}^2}\sum_{t_{1,1}+\cdots+t_{L,L}=t}\frac{t!}{t_{1,1}!\cdots t_{L,L}!}\prod_{l_1,l_2=1}^{L}T_{l_1l_2}^{t_{l_1,l_2}}B(\beta_{1},\cdots,\beta_{L})B(\tilde{\beta}_{1},\cdots,\tilde{\beta}_{L}).
\end{align}
\end{widetext}
The multi-variable beta function is defined as,
\begin{align}
    B(\beta_{1},\cdots,\beta_{L})
    &=\frac{\Gamma(\beta_{1})\cdots\Gamma(\beta_{L})}{\Gamma(\beta_{1}+\cdots+\beta_{L})},\nonumber\\
    B(\tilde{\beta}_{1},\cdots,\tilde{\beta}_{L})&=\frac{\Gamma(\tilde{\beta}_{1})\cdots\Gamma(\tilde{\beta}_{L})}{\Gamma(\tilde{\beta}_{1}+\cdots+\tilde{\beta}_{L})},
\end{align}
and parameters are
\begin{align}
    \alpha_{l_1}&=\sum_{l_2=1}^{L}t_{l_1,l_2},~
    \tilde{\alpha}_{l_2}=\sum_{l_1=1}^{L}t_{l_1,l_2},\nonumber\\
    \beta_{l_1}&=\frac{\alpha_{l_1}+1}{2},~
    \tilde{\beta}_{l_2}=\frac{\tilde{\alpha}_{l_2}+1}{2},
\end{align}
where the gamma function $\Gamma(k)=(k-1)!$ and $\Gamma(k+1/2)=(k-1/2)\cdots\frac{1}{2}\sqrt{\pi}$ for positive number $k$~\cite{folland2001integrate}.
		
For $t=2$, $t_{1,1}+\cdots+t_{L,L}=2$ indicates that (i) only one item is $2$ and other items are zero; (ii) only two items are equal to one and other items are zero. In condition (ii), the parameters $\alpha_{l_1}$ and $\alpha_{l_2}$ are always zero or one respectively. Thus, the integral vanishes in this case. In condition (i), we find $\frac{2!}{0!\cdots 2!\cdots0!}=1$ and
\begin{align}
    B(\beta_{1},\cdots,\beta_{L})
    &\!=\!B(\tilde{\beta}_{1},\cdots,\tilde{\beta}_{L})
    =B\left(\frac{1}{2},\cdots,\frac{3}{2},\cdots,\frac{1}{2}\right)\nonumber\\
    &\!=\!\frac{\left[\Gamma\left(\frac{1}{2}\right)\right]^{L-1}\Gamma\left(\frac{3}{2}\right)}{\Gamma\left(\frac{L+2}{2}\right)}=\frac{\pi^{\frac{L}{2}}}{2\Gamma\left(\frac{L+2}{2}\right)}.
\end{align}
Note that $\Gamma(\frac{1}{2})=\sqrt{\pi}$. As a result, the second moment is
\begin{align}
    S^{(2)}
    &=\frac{4}{\mathcal{V}^2}B^2(\beta_{1},\cdots,\beta_{L})\sum_{l_1,l_2=1}^{L}T_{l_1l_2}^2\nonumber\\
    &=\frac{4}{\mathcal{V}^2}\left[\frac{\pi^{\frac{L}{2}}}{2\Gamma\left(\frac{L+2}{2}\right)}\right]^2\sum_{l_1,l_2=1}^{L}T_{l_1l_2}^2\nonumber\\
    &=\frac{1}{\mathcal{V}^2}\frac{\pi^{L}}{\Gamma^2\left(\frac{L+2}{2}\right)}\sum_{l_1,l_2=1}^{L}T_{l_1l_2}^2
    =V_1\sum_{l_1,l_2=1}^{L}T_{l_1l_2}^2,
\end{align}
where the constant
\begin{align}
    V_1\!=\!\frac{1}{\mathcal{V}^2}\frac{\pi^{L}}{\Gamma^2\left(\frac{L+2}{2}\right)}
    \!=\!\frac{\Gamma^2(\frac{L}{2})\pi^{L}}{4\pi^L\Gamma^2\left(\frac{L+2}{2}\right)}
    \!=\!\frac{\Gamma^2(\frac{L}{2})}{4\Gamma^2(\frac{L+2}{2})}.
\end{align}
		
For $t=4$, we consider the following cases such that $t_{1,1+\cdots+t_{L,L}}=4$.
		
(i) Only one $t_{l_1,l_2}=4$ and other items are zero. In this case, we have
\begin{align}
    \alpha_{l_1}&\!=\!4,~
    \tilde{\alpha}_{l_2}=4,~
    \beta_{l_1}=\frac{5}{2},~
    \tilde{\beta}_{l_2}=\frac{5}{2},\nonumber\\
    B(\beta_{1},\cdots,\beta_{L})&=B(\tilde{\beta}_{1},\cdots,\tilde{\beta}_{L})
    =B\left(\frac{1}{2},\cdots,\frac{5}{2},\cdots,\frac{1}{2}\right)\nonumber\\
    &\!=\!\frac{\left[\Gamma\left(\frac{1}{2}\right))\right]^{L-1}\Gamma\left(\frac{5}{2}\right)}{\Gamma\left(\frac{L+4}{2}\right)}=\frac{3\pi^{\frac{L}{2}}}{4\Gamma\left(\frac{L+4}{2}\right)}.
\end{align}
The fourth moment is
\begin{align}
    S^{(4)}=&\frac{4}{\mathcal{V}^2}\times\frac{3^2\pi^L}{4^2\Gamma^2\left(\frac{L+4}{2}\right)}\sum_{l_1,l_2=1}^{L}T_{l_1l_2}^4.
\end{align}
		
(ii) Two of the elements are equal to $2$, and all other elements are zero. There are $3$ cases.
		
(a) $\alpha_{l_1}=4$ and $\tilde{\alpha}_{l_2}=\tilde{\alpha}_{l_3}=2$.
\begin{align}
    \alpha_{l_1}&=4,~
    \tilde{\alpha}_{l_2}=2,~\tilde{\alpha}_{l_3}=2,\nonumber\\
    \beta_{l_1}&=\frac{5}{2},~
    \tilde{\beta}_{l_2}=\frac{3}{2},~
    \tilde{\beta}_{l_2}=\frac{3}{2},\nonumber\\
    B(\beta_{1},\cdots,\beta_{L})&=B\left(\frac{1}{2},\cdots,\frac{5}{2},\cdots,\frac{1}{2}\right)\nonumber\\
    &=\frac{\left[\Gamma\left(\frac{1}{2}\right))\right]^{L-1}\Gamma\left(\frac{5}{2}\right)}{\Gamma\left(\frac{L+4}{2}\right)}
    =\frac{3\pi^{\frac{L}{2}}}{4\Gamma\left(\frac{L+4}{2}\right)},\nonumber\\
    B(\tilde{\beta}_{1},\cdots,\tilde{\beta}_{L})&=B\left(\frac{1}{2},\cdots,\frac{3}{2},\cdots,\frac{3}{2},\cdots,\frac{1}{2}\right)\nonumber\\
    &=\frac{\left[\Gamma\left(\frac{1}{2}\right))\right]^{L-2}\left[\Gamma\left(\frac{3}{2}\right)\right]^{2}}{\Gamma\left(\frac{L+4}{2}\right)}\nonumber\\
    &=\frac{\pi^{\frac{L}{2}}}{4\Gamma\left(\frac{L+4}{2}\right)}.
\end{align}
We find
\begin{align}
    S^{(4)}=&\frac{4}{\mathcal{V}^2}\times6\times\frac{1}{2}\times\frac{3\pi^{L}}{4^2\Gamma^2\left(\frac{L+4}{2}\right)}\sum_{\substack{l_1,l_2,l_3=1\\l_2\neq l_3}}^{L}T_{l_1l_2}^2T_{l_1l_3}^2.
\end{align}
		
(b) $\alpha_{l_1}=\alpha_{l_2}=2$ and $\tilde{\alpha}_{l_3}=4$.
\begin{align}
    \alpha_{l_1}&=2,~
    \alpha_{l_2}=2,~\tilde{\alpha}_{l_3}=4,~\nonumber\\
    \beta_{l_1}&=\frac{3}{2},~
    \beta_{l_2}=\frac{3}{2},~
    \tilde{\beta}_{l_2}=\frac{5}{2},\nonumber\\
    B(\beta_{1},\cdots,\beta_{L})&=B\left(\frac{1}{2},\cdots,\frac{3}{2},\cdots,\frac{3}{2},\cdots,\frac{1}{2}\right)\nonumber\\
    &=\frac{\left[\Gamma\left(\frac{1}{2}\right))\right]^{L-2}\left[\Gamma\left(\frac{3}{2}\right)\right]^{2}}{\Gamma\left(\frac{L+4}{2}\right)}\nonumber\\
    &=\frac{\pi^{\frac{L}{2}}}{4\Gamma\left(\frac{L+4}{2}\right)},\nonumber\\
    B(\tilde{\beta}_{1},\cdots,\tilde{\beta}_{L})&=B\left(\frac{1}{2},\cdots,\frac{5}{2},\cdots,\frac{1}{2}\right)\nonumber\\
    &=\frac{\left[\Gamma\left(\frac{1}{2}\right))\right]^{L-1}\Gamma\left(\frac{5}{2}\right)}{\Gamma\left(\frac{L+4}{2}\right)}\nonumber\\
    &=\frac{3\pi^{\frac{L}{2}}}{4\Gamma\left(\frac{L+4}{2}\right)}.
\end{align}
We find a similar result
\begin{align}
    S^{(4)}=&\frac{4}{\mathcal{V}^2}\times6\times\frac{1}{2}\times\frac{3\pi^{L}}{4^2\Gamma^2\left(\frac{L+4}{2}\right)}\sum_{\substack{l_1,l_2,l_3=1\\l_1\neq l_2}}^{L}T_{l_1l_3}^2T_{l_2l_3}^2.
\end{align}
		
(c) $\alpha_{l_1}=\alpha_{l_2}=2$ and $\tilde{\alpha}_{l_3}=\tilde{\alpha}_{l_4}=2$.
\begin{align}
    \alpha_{l_1}&=2,~
    \alpha_{l_2}=2,~
    \tilde{\alpha}_{l_3}=2,~\tilde{\alpha}_{l_4}=2,~\nonumber\\
    \beta_{l_1}&=\beta_{l_2}=\frac{3}{2},~
    \tilde{\beta}_{l_3}=\tilde{\beta}_{l_4}=\frac{3}{2},\nonumber\\
    B(\beta_{1},\cdots,\beta_{L})&=B(\tilde{\beta}_{1},\cdots,\tilde{\beta}_{L})\nonumber\\
    &=B\left(\frac{1}{2},\cdots,\frac{3}{2},\cdots,\frac{3}{2},\cdots,\frac{1}{2}\right)\nonumber\\
    &=\frac{\left[\Gamma\left(\frac{1}{2}\right))\right]^{L-2}\left[\Gamma\left(\frac{3}{2}\right)\right]^{2}}{\Gamma\left(\frac{L+4}{2}\right)}\nonumber\\
    &=\frac{\pi^{\frac{L}{2}}}{4\Gamma\left(\frac{L+4}{2}\right)}.
\end{align}
We find
\begin{align}
    S^{(4)}=&\frac{4}{\mathcal{V}^2}\times6\times\frac{1}{2}\times\frac{\pi^{L}}{4^2\Gamma^2\left(\frac{L+4}{2}\right)}\sum_{\substack{l_1,l_2,l_3,l_4=1\\l_1\neq l_3,l_2\neq l_4}}^{L}T_{l_1l_2}^2T_{l_3l_4}^2.
\end{align}
		
(iii) For $t_{l_1,l_2}=t_{l_3,l_4}=t_{l_5,l_6}=t_{l_7,l_8}=1$, only one case yields finite values of the integrals, $t_{l_1,l_2}=t_{l_1,l_3}=1$, $t_{l_4,l_2}=t_{l_4,l_3}=1$. In this case, we have
\begin{align}
    \alpha_{l_1}&=t_{l_1,l_2}+t_{l_1,l_3}=2,\nonumber\\
    \alpha_{l_2}&=t_{l_4,l_2}+t_{l_4,l_3}=2,\nonumber\\
    \tilde{\alpha}_{l_2}&=t_{l_1,l_2}+t_{l_4,l_2}=2,\nonumber\\
    \tilde{\alpha}_{l_3}&=t_{l_1,l_3}+t_{l_4,l_3}=2,\nonumber\\
    B(\beta_1,\cdots,\beta_{L})&=B\left(\frac{1}{2},\cdots,\frac{3}{2},\cdots,\frac{3}{2},\cdots,\frac{1}{2}\right)\nonumber\\
    &=\frac{\pi^{\frac{L}{2}}}{4\Gamma\left(\frac{L+4}{2}\right)},\nonumber\\
    B(\tilde{\beta}_1,\cdots,\tilde{\beta}_{L})&=B\left(\frac{1}{2},\cdots,\frac{3}{2},\cdots,\frac{3}{2},\cdots,\frac{1}{2}\right)\nonumber\\
    &=\frac{\pi^{\frac{L}{2}}}{4\Gamma\left(\frac{L+4}{2}\right)}.
\end{align}
Therefore, the fourth moment turns to
\begin{align}
    S^{(4)}=&\frac{4}{\mathcal{V}^2}\times24\times\frac{1}{4}\times\frac{\pi^{L}}{4^2\Gamma^2\left(\frac{L+4}{2}\right)}\nonumber\\
    &\times\sum_{\substack{l_1,l_2,l_3,l_4=1\\l_2\neq l_3,l_1\neq l_4}}^{L}T_{l_1l_2}T_{l_1l_3}T_{l_4l_2}T_{l_4l_3}.
\end{align}
		
In summary, we conclude
\begin{align}
    S^{(4)}=W&\Big[3\sum_{l_1,l_2=1}^{L}T_{l_1l_2}^4
    +3\sum_{\substack{l_1,l_2,l_3=1\\l_2\neq l_3}}^{L}T_{l_1l_2}^2T_{l_1l_3}^2\Big.\nonumber\\
    &\Big.+3\sum_{\substack{l_1,l_2,l_3=1\\l_1\neq l_2}}^{L}T_{l_1l_3}^2T_{l_2l_3}^2
    +\sum_{\substack{l_1,l_2,l_3,l_4=1\\l_1\neq l_3,l_2\neq l_4}}^{L}T_{l_1l_2}^2T_{l_3l_4}^2+\Big.\nonumber\\
    &\Big.2\sum_{\substack{l_1,l_2,l_3,l_4=1\\l_2\neq l_3,l_1\neq l_4}}^{L}T_{l_1l_2}T_{l_1l_3}T_{l_4l_2}T_{l_4l_3}\Big],
\end{align}
where the constant
\begin{align}
    W&=\frac{4}{\mathcal{V}^2}\times\frac{3\pi^L}{4^2\Gamma^2\Big(\frac{L+4}{2}\Big)}
    =\frac{1}{\mathcal{V}^2}\frac{3\pi^L}{4\Gamma^2\Big(\frac{L+4}{2}\Big)}\nonumber\\
    &=\frac{\Gamma^2(\frac{L}{2})}{4\pi^L}\frac{3\pi^L}{4\Gamma^2\Big(\frac{L+4}{2}\Big)}
    =\frac{3\Gamma^2(\frac{L}{2})}{16\Gamma^2\Big(\frac{L+4}{2}\Big)}.
\end{align}
		
Extract the submatrix $T_{\mathrm{R}}=(T_{j,k})$ with $j,k\in\{1,2,\cdots,L\}$ and denote its singular values $\tau_j$, $j=1,\cdots,L$. The second and fourth moments can be expressed as
\begin{align}
    S^{(2)}&=V\sum_{j=1}^{L}\tau_j^2,\\
    S^{(4)}&=W\Big[2\sum_{j=1}^{L}\tau_j^4+\Big(\sum_{j=1}^{L}\tau_j^2\Big)^2\Big]\nonumber\\
    &=W\Big[2\sum_{j=1}^{L}\tau_j^4+L^4\left[S^{(2)}\right]^2\Big].
\end{align}
Note that $\sum_{j=1}^{L}\tau_j^4=\sum_{j,k=1}^{L}T_{jk}^4$ and $\sum_{j=1}^{L}\tau_j^2=\sum_{j,k=1}^{L}T_{jk}^2$.
		
Now, we construct a suitable observable such that $Q^{(2)}=S^{(2)}$. Suppose the diagonal and traceless observable $M_A=M_B=\mathrm{diag}\left(k_1,k_2,\cdots,k_d\right)$ with real numbers $k_j$ and $\sum_{j=1}^{d}k_j=0$, $j=1,2,\cdots,d$. After calculating $Q^{(2)}$, we find that suitable numbers $k_j$ should satisfy the condition $\sum_{j=1}^{d}k_j^2=d$. The proof is the same as the calculation of $Q^{(2)}$ in Appendix~\ref{Moments}. For example, $M_A=M_B=\mathrm{diag}\left(\sqrt{3/2},0,-\sqrt{3/2}\right)$ for $d=3$.
		
It is possible to construct suitable observables such that $Q^{(4)}=S^{(4)}$. For different $d$, we numerical text that the following observables constructed in Ref.~\cite{wyderka2023probing} are suitable, 
\begin{align}
    &d\!=\!3,M\!=\!\mathrm{diag}\left(\sqrt{3/2},0,-\sqrt{3/2}\right),\nonumber\\
    &d\!=\!4,M\!=\!\mathrm{diag}\left(1.357,0.400,-0.400,-1.357\right),\nonumber\\
    &d\!=\!5,M\!=\!\mathrm{diag}\left(1.444, 0.644,0,-0.644,-1.444\right),\nonumber\\
    &d\!=\!6,M\!=\!\mathrm{diag}\left(1.4966,0.8719,-1.4966, -0.8719,0,0\right),\nonumber\\
    &d\!=\!7,M\!=\!\mathrm{diag}\left(1.5041,1.1125,-1.5041,-1.1125,0,0,0\right),
\end{align}
where $M_A=M_B=M$.
		
\section{Other numerical results and analytic lower bound}\label{D_Numerical}
Here we provide several examples for evaluating our entanglement detection criterion presented in the main text.
		
For two-qutrit states, we consider Bell diagonal states \cite{moerland2024bound}. Given the generalized Bell state $|\phi^{00}\rangle=\frac{1}{\sqrt{d}}\sum_{j=0}^{d-1}|jj\rangle$, we define a basis of a bipartite Hilbert space $|\phi^{\alpha\beta}\rangle=\left(Z^{\alpha}\otimes X^{\beta}\right)|\phi^{00}\rangle$ for parameters $\alpha,\beta=0,1,\cdots,d-1$. The shift operator $X:|j\rangle\leftarrow|j\oplus1\rangle$ and the clock operator $Z:|j\rangle\leftarrow\omega^{j}|j\rangle$, where $\oplus$ denotes addition modulo $d$ and $\omega=e^{2\pi i/d}$. For a given probability distribution $\{p_{\alpha\beta}|p_{\alpha\beta}\geq0,\sum_{\alpha,\beta}p_{\alpha\beta}=1\}$, we define the generalized Bell diagonal state
\begin{align}
    \vr_{P}=\sum_{\alpha,\beta=0}^{d-1}p_{\alpha\beta}|\phi^{\alpha\beta}\rangle\langle\phi^{\alpha\beta}|.
\end{align}
Fig.~\ref{SFig1} (a) indicates that $\vr_P$ with the following probability $\vr_P$ is detectable from RRMs,
\begin{align}
    P_1&\!=\!\begin{pmatrix}
		0 & 0 & 0 \\
		0 & 0.18 & 0.82 \\
		0 & 0 & 0
		\end{pmatrix},
    P_2\!=\!\begin{pmatrix}
		0.52 & 0 & 0 \\
		0 & 0 & 0 \\
		0 & 0.48 & 0
		\end{pmatrix},\nonumber\\
    P_3&\!=\!\begin{pmatrix}
        0.5 & 0 & 0 \\
        0 & 0 & 0 \\
        0 & 0 & 0.5
        \end{pmatrix},
    P_4\!=\!\begin{pmatrix}
        0.045 & 0 & 0 \\
        0 & 0.2055 & 0 \\
        0 & 0 & 0.7495
        \end{pmatrix}.
\end{align}
		
For the $4\otimes4$ bound state, the following Piani state cannot be detected by RRMs \cite{piani2007class}. Consider the orthogonal projections $P_{jk}=|\Phi_{jk}\rangle\langle\Phi_{jk}|$, where $|\Phi_{jk}\rangle=\left(\eins\otimes\sigma_{j}\otimes\sigma_k\right)\sum_{k=0}^{3}\frac{1}{2}|kk\rangle$ with Pauli operators $\sigma_j$, $\sigma_k$. With these projections, the Piani state is defined as
\begin{align}
    \vr=\frac{1}{6}\left(P_{02}+P_{11}+P_{23}+P_{31}+P_{32}+P_{33}\right).
\end{align}
Fig.~\ref{SFig1} (b) indicates that the Piani state cannot be detected.
		
\begin{figure*}[ht]
    \centering
    \includegraphics[scale=0.5]{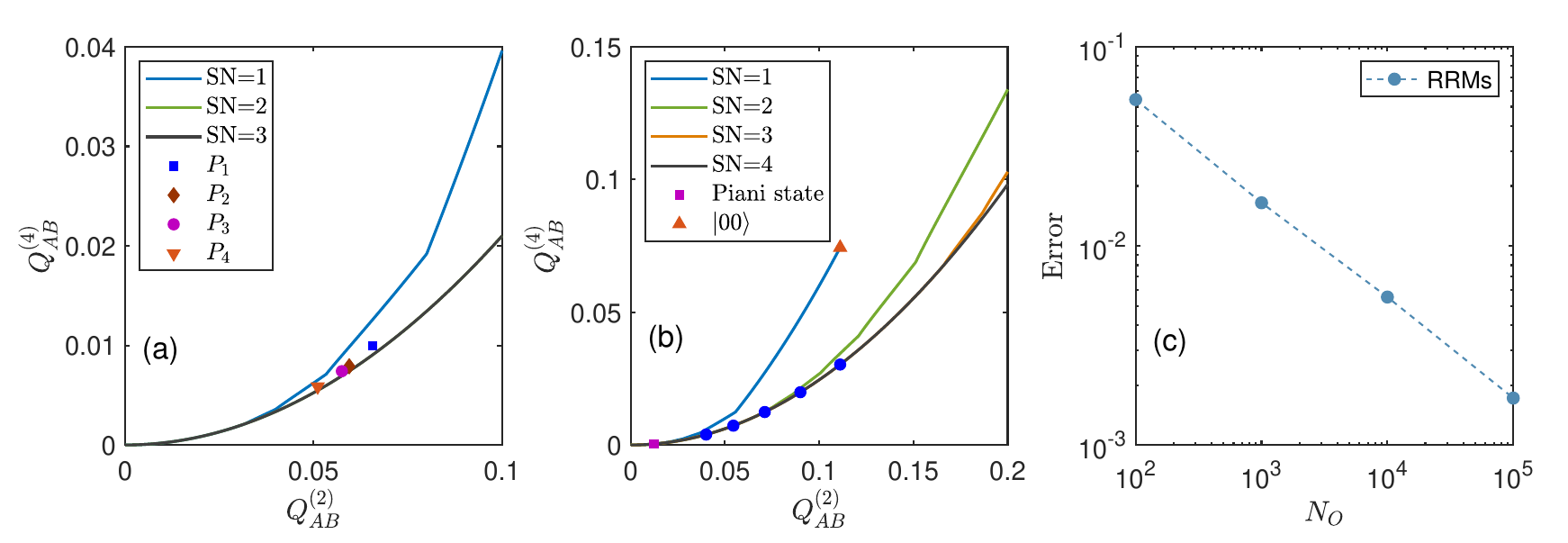}
    \caption{Entanglement dimensionality criterion based on second and fourth moments of RRMs for (a) two-qutrit and (b) $4\otimes4$ systems. (a) The entangled states ($P_1,P_2,P_3,P_4$) are outside the separable area, meaning that our criterion can detect these states. (b) shows our criterion can detect different states. The blue dots from right to left denote the $4\otimes4$ isotropic state $\vr_{\mathrm{iso}}$ with $p=1,0.9,0.8,0.7,0.6$. (c) The average error for estimating the overlap $\tr(\vr_1\vr_2)$ of two $5\otimes5$ states $\vr_1=\vr(0.1)$ and $\vr_2=\vr(0.9)$.}
    \label{SFig1}
\end{figure*}

\begin{figure}[ht]
    \centering
    \includegraphics[scale=0.5]{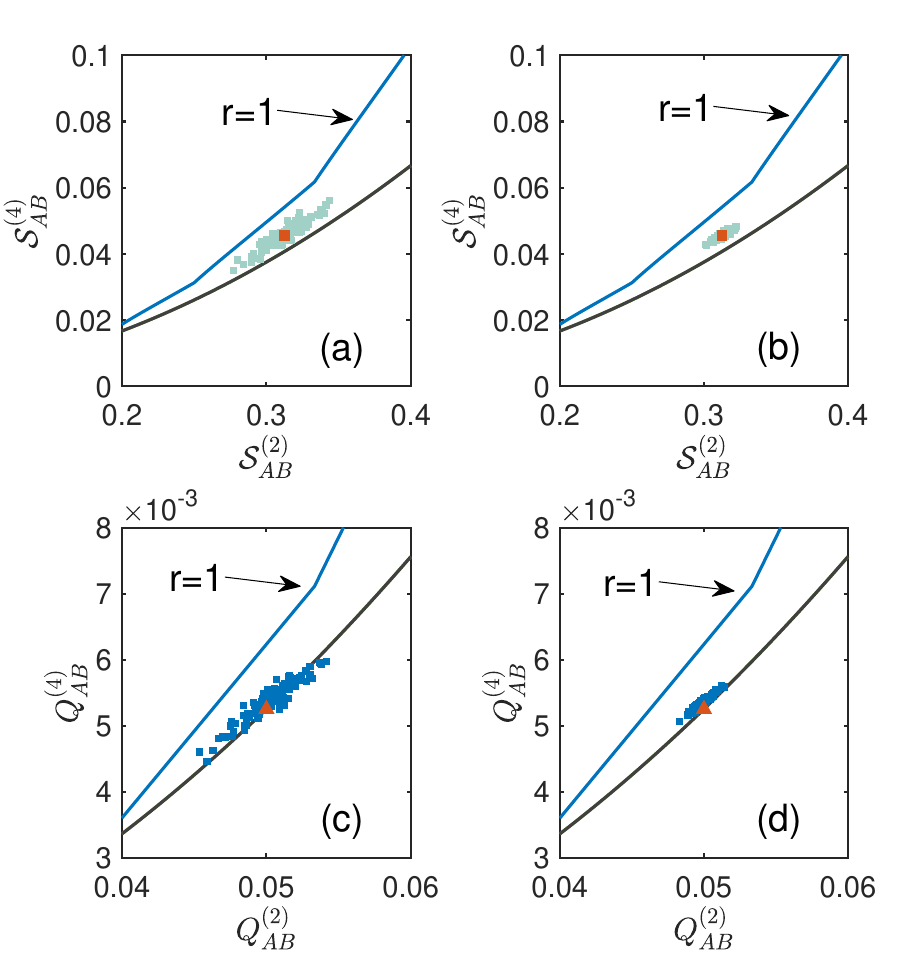}
    \caption{Comparison between RMs and RRMs for detecting the chess board state $\vr_{cbs}$. (a) and (b) show $100$ simulation points by using $10^3$ and $10^4$ random unitaries in RMs, respectively. (c) and (d) show $100$ simulation points by using $10^3$ and $10^4$ random orthogonal matrices in RRMs, respectively. The triangle and square denote the exact point for $\vr_{cbs}$.}
    \label{SFig2}
\end{figure}

We present details about Fig. 3 in the main text. For Fig. 3, we repeat the classical shadow procedure as shown in observation 4 $N_{\mathrm{iter}}=100$ times and obtain $N_{\mathrm{iter}}$ errors $E(i)=|f_{\mathrm{es}}(i)-f_{\mathrm{ex}}|$, $i=1,\cdots,N_{\mathrm{iter}}$. In Fig. 3, we plot the average of the error $\sum_{i=1}^{N_{\mathrm{iter}}}E(i)/N_{\mathrm{iter}}$ for a $5$-qubit $\vr_p$ with classical shadows obtained from RMs and RRMs.

Finally, we compare the sample number of unitary or orthogonal matrices for detecting a specific state. Fig.~\ref{SFig2} shows the comparison between RMs and RRMs for the chess board states $\vr_{cbs}$. The results indicate that RMs and RRMs can detect the state $\vr_{cbs}$ with the same number of random operations.

\section{Generalized results}\label{E_Generalized}
In this section, we present several generalized results, including the detection of entanglement and imaginarity, the estimation of the overlap, and the classical shadow.
		
\subsection{Entanglement criterion of $N$-qudit state}\label{Generalized:entanglement}
For multipartite systems, denoting an $N$-qudit $\vr=\frac{1}{d^N}\sum_{j_1,\cdots,j_N=0}^{d^2-1}T_{j_1\cdots j_N}\lambda_{j_1}\otimes\cdots\otimes\lambda_{j_N}$, the second moment of RRMs is $Q^{(2)}=L^{-N}\sum_{j_1,\cdots,j_N=1}^{L}T_{j_1\cdots j_N}^2$. Based on the second moment, we present an entanglement criterion explained by the following observation.
\begin{observation}\label{Ob:EnDetectN}
Any separable $N$-qudit state $\vr$ obeys $Q^{(2)}\leq\left[\frac{2}{d+2}\right]^N$.
\end{observation}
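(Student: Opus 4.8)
The plan is to reduce the claim to pure product states and then exploit the factorization of the correlation tensor. Recall from Appendix~\ref{Moments:multi} that $Q^{(2)}=L^{-N}\sum_{j_1,\cdots,j_N=1}^{L}T_{j_1\cdots j_N}^2$, so since $L=(d-1)(d+2)/2$ the target inequality is equivalent to $\sum_{j_1,\cdots,j_N=1}^{L}T_{j_1\cdots j_N}^2\leq(d-1)^N$. Writing $T_{\mathrm R}=(T_{j_1\cdots j_N})$ for the sub-tensor whose indices all range over the real GGMs, this reads $\|T_{\mathrm R}\|_{\mathrm{HS}}\leq(d-1)^{N/2}$, where $\|\cdot\|_{\mathrm{HS}}$ denotes the Frobenius norm of the tensor (the square root of the sum of squares of its entries).

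First I would handle a pure product state $\vr=\bigotimes_{s=1}^{N}|\psi_s\rangle\langle\psi_s|$. Here the correlation tensor factorizes, $T_{j_1\cdots j_N}=\prod_{s=1}^{N}t_{j_s}^{(s)}$ with $t_j^{(s)}=\tr(|\psi_s\rangle\langle\psi_s|\lambda_j)$, so that $\|T_{\mathrm R}\|_{\mathrm{HS}}=\prod_{s=1}^{N}\bigl(\sum_{j=1}^{L}(t_j^{(s)})^2\bigr)^{1/2}$. For each single-qudit factor I would invoke purity to fix the full Bloch length: from $\tr(|\psi_s\rangle\langle\psi_s|^2)=\frac{1}{d}\bigl(1+\sum_{j=1}^{d^2-1}(t_j^{(s)})^2\bigr)=1$ one obtains $\sum_{j=1}^{d^2-1}(t_j^{(s)})^2=d-1$, and dropping the nonnegative imaginary-GGM contributions gives $\sum_{j=1}^{L}(t_j^{(s)})^2\leq d-1$. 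Multiplying over the $N$ sites yields $\|T_{\mathrm R}\|_{\mathrm{HS}}\leq(d-1)^{N/2}$ for every pure product state.

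The extension to arbitrary separable states is where convexity must be applied with care. Because $\sum T^2$ is quadratic in $\vr$, one cannot simply average the pure-state bound; instead I would use that $T_{\mathrm R}$ is \emph{linear} in $\vr$ while $\|\cdot\|_{\mathrm{HS}}$ is a genuine norm. For a separable state $\vr=\sum_k p_k\vr_k$ with each $\vr_k$ a pure product state, linearity gives $T_{\mathrm R}=\sum_k p_k T_{\mathrm R}^{(k)}$, and the triangle inequality then yields $\|T_{\mathrm R}\|_{\mathrm{HS}}\leq\sum_k p_k\|T_{\mathrm R}^{(k)}\|_{\mathrm{HS}}\leq\sum_k p_k(d-1)^{N/2}=(d-1)^{N/2}$. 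Squaring and multiplying by $L^{-N}$ gives $Q^{(2)}\leq[(d-1)/L]^N=[2/(d+2)]^N$, as claimed.

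The main obstacle is precisely this last step: since $Q^{(2)}$ depends quadratically on the state, the pure-state estimate does not survive averaging by itself. The resolution—passing to the Frobenius norm of the correlation sub-tensor, which is at once linear in $\vr$ and subadditive—is the crux that lets the product-state bound propagate to the entire separable set. The remaining ingredients (the factorization of product correlation tensors and the purity identity $\sum_{j}(t_j)^2=d-1$) are routine.
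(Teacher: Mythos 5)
Your proof is correct, but it takes a more self-contained route than the paper. The paper's own argument is a two-line affair: it invokes, as a black box, the known sector-length bound $\mathcal{S}_N=\sum_{j_1,\cdots,j_N=1}^{d^2-1}T_{j_1\cdots j_N}^2\leq(d-1)^N$ for separable states (citing Refs.~\cite{aschauer2004local,cieslinski2024analysing}), and then bounds the real-GGM restricted sum $\sum_{j_1,\cdots,j_N=1}^{L}T_{j_1\cdots j_N}^2$ by the full sum, i.e., it drops the imaginary-GGM contributions globally. You instead prove the restricted bound from first principles: purity fixes the single-site Bloch norm $\sum_{j=1}^{d^2-1}(t_j)^2=d-1$, you drop the imaginary-GGM terms site by site, use factorization of the correlation tensor for pure product states, and then handle convex mixtures via the triangle inequality for the Frobenius norm of the correlation sub-tensor. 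That last step is exactly the technical crux hidden inside the paper's citation --- the sector-length bound for separable states is itself proved in the literature by essentially this linearity-plus-norm-convexity argument --- and you were right to flag that a naive averaging of the quadratic quantity $Q^{(2)}$ would fail. What each approach buys: the paper's version is shorter and makes transparent that the result is just a truncation of a standard separability bound; yours is fully self-contained, requires no external reference, and in effect re-derives (the real-GGM restriction of) that cited inequality along the way.
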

\begin{proof}
For all separable $N$-qudit states, the sector length satisfies $\mathcal{S}_N=\sum_{j_1,\cdots,j_N=1}^{d^2-1}T_{j_1\cdots j_N}^2\leq(d-1)^N$ \cite{aschauer2004local,cieslinski2024analysing}. The second moment of RRMs has an upper bound
\begin{align}
    Q^{(2)}&=\left[\frac{2}{(d-1)(d+2)}\right]^N\sum_{j_1,\cdots,j_N=1}^{L}T_{j_1\cdots j_N}^2\\
    &\leq \left[\frac{2}{(d-1)(d+2)}\right]^N\sum_{j_1,\cdots,j_N=1}^{d^2-1}T_{j_1\cdots j_N}^2\\
    &=\left[\frac{2}{d+2}\right]^N.
\end{align}
\end{proof}
		
The violation of this bound (observation~\ref{Ob:EnDetectN}) implies that the state is entangled.

\subsection{Imaginarity detection and quantification of multi-qudit states}\label{Generalized:imaginarity}
In the main text, we present the imaginarity detection and quantification of two-qudit states. Next, we first consider the single-qudit states.
\begin{observation}\label{Ob:ImagSingle}
    Let $\vr_s$ be a single-qudit state and denote its second moments of RMs and RRMs as $R_{s}^{(2)}$ and $Q_{s}^{(2)}$. We define a gap,
    \begin{align}
        G_{s}=(d^2-1)R_{s}^{(2)}-LQ_{s}^{(2)}.
    \end{align}
    Then we have (a) $\vr_s$ is an imaginary state if $G_{s}>0$. Otherwise, $\vr_s$ is a real state. (b) $G_{s}$ induces a lower bound of the robustness of the imaginarity measure $\mathcal{F}_{R}(\vr_{AB})=\min_{\tau}\left\{\mu\geq0:\frac{\vr_{AB}+\mu\tau}{1+\mu}\in\mathscr{R}\right\}$~\cite{wu2021operational}, such that $\sqrt{{G_{s}}/{d}}\leq \mathcal{F}_{R}(\vr_s)$, where the equality holds if and only if $\vr_s$ is a real state and $\mathscr{R}$ denotes the set of all real states.
\end{observation}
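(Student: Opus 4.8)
The plan is to specialize the argument already carried out for Observation~\ref{Ob:ImagTwo} to a single party, reducing both claims to the single-qudit second-moment formulas of Appendix~\ref{Moments:single}. Writing $\varrho_s = \frac{1}{d}\sum_{j=0}^{d^2-1} T_j \lambda_j$ and inserting $Q_s^{(2)} = L^{-1}\sum_{j=1}^{L} T_j^2$ together with the single-qudit normalization $R_s^{(2)} = (d^2-1)^{-1}\sum_{j=1}^{d^2-1} T_j^2$ (consistent with the marginal moments used in the proof of Observation~\ref{Ob:ImagTwo}), the contributions of the real GGMs cancel and leave
\begin{equation}
    G_s = (d^2-1)R_s^{(2)} - L Q_s^{(2)} = \sum_{j=L+1}^{L+\hat{L}} T_j^2,
\end{equation}
exactly the sum of the squared imaginary-GGM coefficients. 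This is the single-qudit counterpart of the identity $G_A = \sum_{j=L+1}^{L+\hat{L}} T_{j0}^2$ obtained there.

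Part (a) is then immediate: since $\lambda_j^\top = -\lambda_j$ precisely for $j>L$, the state satisfies $\varrho_s = \varrho_s^\top$ if and only if $T_j=0$ for every imaginary GGM, i.e.\ if and only if $G_s=0$. Because $G_s\geq 0$ manifestly, $\varrho_s$ is imaginary exactly when $G_s>0$.

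For part (b) I would invoke the closed form $\mathcal{F}_R(\varrho_s)=\tfrac{1}{2}\trnorm{\varrho_s-\varrho_s^\top}$ from Ref.~\cite{wu2021operational} and lower-bound the trace norm by the Hilbert--Schmidt norm, $\trnorm{A}\geq\norm{A}_{\mathrm{HS}}$. Only the imaginary GGMs survive in $\varrho_s-\varrho_s^\top=\frac{2}{d}\sum_{j=L+1}^{L+\hat{L}} T_j\lambda_j$, so the orthogonality relation $\tr(\lambda_j\lambda_k)=d\delta_{jk}$ gives $\norm{\varrho_s-\varrho_s^\top}_{\mathrm{HS}}^2=\frac{4}{d}G_s$, whence $\mathcal{F}_R(\varrho_s)\geq\sqrt{G_s/d}$.

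The only genuine obstacle is the equality condition. Naively $\trnorm{A}=\norm{A}_{\mathrm{HS}}$ merely requires $A$ to have rank at most one, which would not single out real states. The resolution is structural: $A=\varrho_s-\varrho_s^\top=2i\,\mathrm{Im}(\varrho_s)$ with $\mathrm{Im}(\varrho_s)$ a real antisymmetric matrix (since $\varrho_s$ Hermitian gives $\varrho_s^\top=\overline{\varrho_s}$), and a real antisymmetric matrix has singular values occurring in equal pairs. Hence $A$ can never have rank exactly one, so the trace-norm/Hilbert--Schmidt equality forces $A=0$, i.e.\ $\varrho_s$ real. I expect stating this pairing argument carefully to be the main point; the remainder is a direct specialization of the two-qudit computation.
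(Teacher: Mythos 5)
Your proposal is correct and follows essentially the same route as the paper's own proof: the same Bloch decomposition into real and imaginary GGMs, the same identity $G_s=\sum_{j=L+1}^{L+\hat{L}}T_j^2=\frac{d}{4}\norm{\vr_s-\vr_s^{\top}}_{\mathrm{HS}}^2$, and the same chain $\mathcal{F}_{R}(\vr_s)=\frac{1}{2}\trnorm{\vr_s-\vr_s^{\top}}\geq\frac{1}{2}\norm{\vr_s-\vr_s^{\top}}_{\mathrm{HS}}=\sqrt{G_s/d}$.

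The one place where you diverge is the equality condition in part (b), and there your treatment is actually more careful than the paper's. The paper asserts that $\trnorm{A}\geq\norm{A}_{\mathrm{HS}}$ holds with equality if and only if $A$ has all zero singular values; as a general statement this is false, since any rank-one operator also attains equality. Your observation that $\vr_s-\vr_s^{\top}=2i\,\mathrm{Im}(\vr_s)$ with $\mathrm{Im}(\vr_s)$ real antisymmetric (because $\vr_s$ is Hermitian, so $\vr_s^{\top}=\overline{\vr_s}$), so that its singular values occur in pairs and rank one is impossible, is exactly the ingredient needed to upgrade the generic equality condition (rank at most one) to the claimed equivalence with $\vr_s=\vr_s^{\top}$. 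The paper's conclusion is correct, but your pairing argument is the rigorous justification for it; the rest of your proof matches the paper step for step.
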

\begin{proof}
    First of all, we establish a connection between the gap $G_{s}$ and the Hilbert-Schmidt norm $\|\vr_s-\vr_s^{\top}\|_{\mathrm{HS}}^2$, where $\vr_s^{\top}$ denotes the transpose of the state $\vr_s$. Divide $\vr_s$ into two parts associated with real GGMs and imaginary GGMs, $\vr_s=\frac{1}{d}\eins_d+\frac{1}{d}\sum_{j=1}^{L}T_j\lambda_j+\frac{1}{d}\sum_{k=L+1}^{L+\hat{L}}T_k\lambda_k$. We define a matrix
    \begin{align}
    \vr_s-\vr_s^{\top}
    &=\frac{1}{d}\sum_{j=1}^{L}T_j\lambda_j
    +\frac{1}{d}\sum_{k=L+1}^{L+\hat{L}}T_k\lambda_k\nonumber\\
    &-\frac{1}{d}\sum_{j=1}^{L}T_j\lambda_j^{\top}
    -\frac{1}{d}\sum_{k=L+1}^{L+\hat{L}}T_k\lambda_k^{\top}\nonumber\\
    &=\frac{2}{d}\sum_{k=L+1}^{L+\hat{L}}T_k\lambda_k,
    \end{align}
    The square of the Hilbert-Schmidt norm of matrix $\vr_s-\vr_s^{\top}$ is
    \begin{align}
        \|\vr_s-\vr_s^{\top}\|_{\mathrm{HS}}^2
        &=\left\|\frac{2}{d}\sum_{j=L+1}^{L+\hat{L}}T_j\lambda_j\right\|_{\mathrm{HS}}^2\nonumber\\
        &=\tr\left[\frac{4}{d^2}\sum_{j,k=L+1}^{L+\hat{L}}T_jT_k\lambda_j\lambda_k\right]\nonumber\\
        &=\frac{4}{d}\sum_{j=L+1}^{L+\hat{L}}T_j^2.
    \end{align}
    The gap is defined as
    \begin{align}
        G_{s}&=(d^2-1)R_{s}^{(2)}-LQ_{s}^{(2)}\nonumber\\
        &=\sum_{j=1}^{d^2-1}T_{j}^2-\sum_{k=1}^{L}T_{k}^2\nonumber\\
        &=\sum_{j=L+1}^{L+\hat{L}}T_{j}^2\nonumber\\
        &=\hat{L}\hat{Q}_s^{(2)}\geq0.
    \end{align}
    Therefore, $G_{s}=\frac{d}{4}\|\vr_s-\vr_s^{\top}\|_{\mathrm{HS}}^2$. It is easy to check that $G_{s}=0$ if and only if $\vr_s=\vr_s^{\top}$, which is a real state. Otherwise, $G_{s}>0$ indicates $\vr$ is an imaginary state. We complete the proof of result (a).
			
    For result (b), the inequality $\|A\|_{\tr}\geq\|A\|_{\mathrm{HS}}$ holds, and the equality is true if and only if the operator $A$ has all zero singular values. Thus, the robustness of the imaginarity measure has a lower bound
    \begin{align}
        \mathcal{F}_{R}(\vr_s)=\frac{1}{2}\|\vr_s-\vr_s^{\top}\|_{\tr}&\geq\frac{1}{2}\|\vr_s-\vr_s^{\top}\|_{\mathrm{HS}}\nonumber\\
        &=\frac{1}{2}\sqrt{\frac{4G_{s}}{d}}=\sqrt{\frac{G_{s}}{d}}.
    \end{align}
    The equality holds if and only if $\vr_s-\vr_s^{\top}$ has zero singular values and, equivalently, $\vr_s$ is a real state.
\end{proof}
		
The next observation shows that if we only utilize the moments of RMs and RRMs, we can obtain a sufficient condition for detecting the imaginarity of two-qudit states.
\begin{observation}\label{Ob:ImagTwo1}
    Let $\vr_{AB}$ be a two-qudit quantum state. Then $\vr_{AB}$ is an imaginary state if one of the following conditions is true, (a) $G_A>0$, (b) $G_B>0$, where $G_A$ and $G_B$ denote the gap of reduced states $\vr_{A}$, $\vr_{B}$ defined in observation~\ref{Ob:ImagSingle}.
\end{observation}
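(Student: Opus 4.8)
The plan is to prove the statement as the contrapositive, leaning on Observation~\ref{Ob:ImagSingle}. By that observation the single-qudit gap satisfies $G_A\geq0$ with $G_A>0$ if and only if the reduced state $\vr_A=\tr_B(\vr_{AB})$ is imaginary, i.e. $\vr_A\neq\vr_A^{\top}$; the same holds for $G_B$ and $\vr_B$. It therefore suffices to show that imaginarity of either marginal forces $\vr_{AB}$ to be imaginary, or equivalently that $\vr_{AB}=\vr_{AB}^{\top}$ implies $\vr_A=\vr_A^{\top}$ and $\vr_B=\vr_B^{\top}$.

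The key step I would establish is the operator identity that the marginal of the transpose equals the transpose of the marginal,
\begin{equation}
    \tr_B\!\left(\vr_{AB}^{\top}\right)=\left(\tr_B\vr_{AB}\right)^{\top}=\vr_A^{\top}.
\end{equation}
This follows by writing the full transpose as a composition of partial transposes, $\vr_{AB}^{\top}=\vr_{AB}^{\top_A\top_B}$, using that the partial transpose on $B$ does not alter the partial trace over $B$ (it leaves the $B$-diagonal untouched), and that $\top_A$ commutes with $\tr_B$ since they act on different subsystems. Subtracting gives $\vr_A-\vr_A^{\top}=\tr_B(\vr_{AB}-\vr_{AB}^{\top})$, so $\vr_{AB}=\vr_{AB}^{\top}$ forces $\vr_A=\vr_A^{\top}$; the analogous argument with $A,B$ exchanged gives $\vr_B=\vr_B^{\top}$.

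Taking the contrapositive and combining with the equivalences $G_A>0\Leftrightarrow\vr_A$ imaginary and $G_B>0\Leftrightarrow\vr_B$ imaginary from Observation~\ref{Ob:ImagSingle} completes the argument. An alternative route that avoids the operator identity and stays closer to the paper's notation is the Bloch decomposition: writing $\vr_A=\tfrac{1}{d}(\eins_d+\sum_{j=1}^{d^2-1}T_{j0}\lambda_j^A)$, one has $G_A=\sum_{j=L+1}^{L+\hat{L}}T_{j0}^2$, and each such $T_{j0}$ with imaginary $j$ is an imaginary--real correlation of $\vr_{AB}$ that must vanish whenever $\vr_{AB}=\vr_{AB}^{\top}$ (since the transpose flips the sign of imaginary GGMs and leaves the identity on $B$ unchanged). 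Hence $G_A>0$ already certifies imaginarity of $\vr_{AB}$.

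The only real obstacle is the bookkeeping in $\tr_B(\vr_{AB}^{\top})=\vr_A^{\top}$: one must ensure the transpose is taken in the \emph{same} fixed basis $\{|j\rangle\}$ in which imaginarity is defined, so that the notion of being real is consistent between $\vr_{AB}$ and its marginals. Once this basis consistency is pinned down, the remaining steps are immediate and, unlike the robustness bound in Observation~\ref{Ob:ImagTwo}, require no optimization.
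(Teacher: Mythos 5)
Your proof is correct and follows the same skeleton as the paper's own argument: both reduce the claim to Observation~\ref{Ob:ImagSingle} via the marginals, using $G_s>0\Leftrightarrow\vr_s$ imaginary. The difference is one of rigor rather than route: the paper's proof simply \emph{asserts} that imaginarity of a reduced state implies imaginarity of $\vr_{AB}$ (equivalently, that a real global state has real marginals), whereas you actually prove this step, either through the identity $\tr_B(\vr_{AB}^{\top})=(\tr_B\vr_{AB})^{\top}$ --- which holds because transposition in the fixed computational basis commutes with partial trace --- or through the Bloch-coefficient argument that $\vr_{AB}=\vr_{AB}^{\top}$ forces $T_{j0}=-T_{j0}=0$ for every imaginary index $j$, hence $G_A=0$. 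Both justifications are valid, and your closing remark that the transpose must be taken in the \emph{same} fixed basis in which imaginarity is defined is precisely the point that makes the paper's unproven assertion true. In this sense your proposal is a strictly more complete version of the paper's proof, at the cost of no additional machinery.
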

		
\begin{proof}
    Consider a two-qudit quantum state
    \begin{align}
        \vr_{AB}=\frac{1}{d^2}\sum_{j,k=0}^{d^2-1}T_{jk}\lambda_j^A\otimes\lambda_k^B.
    \end{align}
    Two reduced states are
    \begin{align}
        \vr_A&=\frac{1}{d}\sum_{j=0}^{d^2-1}T_{j0}\lambda_j^A,~
        \vr_B=\frac{1}{d}\sum_{k=0}^{d^2-1}T_{0k}\lambda_k^B.
    \end{align}
    If one of the reduced states $\vr_s$ is an imaginary state, then state $\vr_{AB}$ is also an imaginary state, where $s\in\{A,B\}$. $\vr_s$ is an imaginary state equivalent to the corresponding quantity $G_s>0$. Thus, if $G_s>0$, then $\vr$ is an imaginary state.
\end{proof}
We remark that $\vr_{AB}$ may be an imaginary state even if both conditions (a,b) in observation~\ref{Ob:ImagTwo1} are not true. The correlation of $\vr_{AB}$ contains imaginarity, which also implies $\vr_{AB}$ is imaginary. Observation~\ref{Ob:ImagN} is a direct generalization of observation~\ref{Ob:ImagTwo1} to $N$-qudit states. 
\begin{observation}\label{Ob:ImagN}
    Let $\rho$ be an $N$-qudit quantum state and partition the whole system into $m$ nonempty disjoint subsystems $A_j$, $j=1,\cdots,m$. Then $\vr$ is an imaginary state if the second moment of the reduced state $\vr_{s}=\tr_{\bar{s}}(\vr)$, $G_s>0$ for a fixed set $s$ and $\bar{s}=\{A_1,\cdots,A_{\alpha-1},A_{\alpha},\cdots,A_m\}$.
\end{observation}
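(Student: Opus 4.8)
The plan is to reduce Observation~\ref{Ob:ImagN} to a statement about a single marginal and then invoke the gap criterion already established in Observation~\ref{Ob:ImagSingle}. The conceptual core is that \emph{reality is preserved under partial trace}: if the global $N$-qudit state $\vr$ is real, then every reduced state $\vr_s=\tr_{\bar s}(\vr)$ is real as well. Taking the contrapositive, any imaginary marginal certifies that the full state is imaginary, which is exactly the asserted sufficient condition. The statement is the $N$-partite analogue of Observation~\ref{Ob:ImagTwo1}, so I would follow the same logic while allowing $s$ to be an arbitrary block of qudits.

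First I would fix the product computational basis and recall that $\vr$ is real iff $\vr=\vr^{\top}$, equivalently iff all its matrix elements in this basis are real (using Hermiticity, $\vr^{\top}=\vr^{*}$). The key identity to verify is
\[
(\vr_s)^{\top}=\tr_{\bar s}(\vr^{\top}),
\]
which follows by writing $\vr^{\top}=\vr^{\top_s\top_{\bar s}}$, using that tracing out a subsystem is invariant under transposing that same subsystem, $\tr_{\bar s}(X^{\top_{\bar s}})=\tr_{\bar s}(X)$ (a consequence of $\tr(N^{\top})=\tr(N)$ applied to $\bar s$), and that $\top_s$ commutes with $\tr_{\bar s}$. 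Equivalently and more elementarily, the matrix element $[\tr_{\bar s}(\vr)]_{ij}=\sum_{k}\vr_{(ik),(jk)}$ is a real-linear combination of entries of $\vr$, so $\tr_{\bar s}$ maps matrices with real entries to matrices with real entries.

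Using this identity, if $\vr=\vr^{\top}$ then $(\vr_s)^{\top}=\tr_{\bar s}(\vr^{\top})=\tr_{\bar s}(\vr)=\vr_s$, so $\vr_s$ is real. Contrapositively, an imaginary marginal $\vr_s$ forces $\vr$ to be imaginary. To connect this to the measurable quantity, I would treat the block $s$ as a single composite qudit of effective dimension $d^{|s|}$ and apply Observation~\ref{Ob:ImagSingle}, which gives $G_s>0$ if and only if $\vr_s$ is imaginary, together with the relation $G_s=\hat L\,\hat Q_s^{(2)}$ at that effective dimension. Hence the hypothesis $G_s>0$ yields that $\vr_s$, and therefore $\vr$, is imaginary, completing the argument.

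I expect no essential obstacle: the only delicate points are the index bookkeeping in the transpose--partial-trace identity and checking that the single-system gap criterion of Observation~\ref{Ob:ImagSingle} applies verbatim when $s$ is a composite block, i.e.\ that the real/imaginary Gell--Mann decomposition and the identity $G_s=\hat L\,\hat Q_s^{(2)}$ hold with $d$ replaced by $d^{|s|}$. Both are routine once the bases on $s$ and $\bar s$ are matched, so the proof is short and structural rather than computational.
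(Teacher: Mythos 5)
Your proposal is correct and follows essentially the same route as the paper: the paper's proof (via Observation~\ref{Ob:ImagTwo1}) likewise argues that an imaginary marginal forces the global state to be imaginary and then invokes the single-system gap criterion $G_s>0 \Leftrightarrow \vr_s$ imaginary from Observation~\ref{Ob:ImagSingle}. The only difference is that you explicitly justify the two steps the paper takes for granted --- the identity $(\vr_s)^{\top}=\tr_{\bar s}(\vr^{\top})$ showing reality is preserved under partial trace, and the fact that the gap criterion applies to a composite block of effective dimension $d^{|s|}$ --- which is a useful but non-essential elaboration.
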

The proof is similar to Observation~\ref{Ob:ImagTwo1}. Observation~\ref{Ob:ImagN} is also a sufficient condition.
		
\subsection{Classical shadow with RRMs}\label{Generalized:classical}
The classical shadow is a method for predicting properties of quantum systems based on RMs \cite{huang2020predicting}. One applies a random local or global unitary $U$ on an $N$-qudit quantum state $\vr$ and then performs projective measurement in the computational basis $\{|b\rangle\}$ with outcome bit string $b=(b_1,\cdots,b_N)$ and $b_n\in\{0,1,\cdots,d-1\}$ for $n=1,\cdots,N$. The above procedure is repeated $N_U$ independently unitaries $U$. One can use the obtained data to extract interesting information about quantum systems. The transformation $U$ is picked from the Clifford group or the Pauli group for $d=2$~\cite{elben2023randomized}. We here find that RRMs with global orthogonal matrices are sufficient for analyzing real states.
		
The classical shadow with RRMs employs a global orthogonal matrix $O$ to rotate the state, $\vr\rightarrow O\vr O^{\top}$. After measuring the updated state in the basis $\{|b\rangle\}$, we store a classical snapshot $O^{\top}|b\rangle\langle b|O$ whose expectation value over the random orthogonal matrix and probability distribution can be linked with the measurement channel
\begin{align}
    \mathcal{M}(\rho)=\sum_{b}\int dO\langle b|O\vr O^{\top}|b\rangle O^{\top}|b\rangle\langle b|O.
\end{align}
Assuming that $\mathcal{M}$ is invertible, $\tilde{\vr}:=\mathcal{M}^{-1}(O^{\top}|b\rangle\langle b|O)$ is the classical shadow of $\vr$ and $\mathbb{E}(\tilde{\vr})=\vr$. Below, we observe that the measurement channel $\mathcal{M}(\cdot)$ is related to the second moment of RRMs.

Therefore, the classical shadow for the $m$th measurement setting $O^{(m)}$ with $K$ outcomes $|b^{(m)}\rangle=|b_1^{(m)}\cdots b_N^{(m)}\rangle$, is
\begin{align}
    \tilde{\vr}^{(m)}:=\frac{d^N+2}{2}O^{(m)\top}|b^{(m)}\rangle\langle b^{(m)}|O^{(m)}-\frac{1}{2}\eins_{d^{N}}.
\end{align}
The collection $\{\tilde{\vr}^{(m)}\}_{m=1\cdots,N_O}$ is called a classical shadow of $\vr$. We can further predict properties such as (i) the expectation value of observable $\mathcal{G}$, $g=\tr(\mathcal{G}\tilde{\vr})$, (ii) quadratic functions of the quantum state.
		
Here, two questions arise. First, how about the upper bounds for the sample complexity of the above tasks? However, this problem involves the $3$-design of the orthogonal group. We will leave this question for future work. Second, does it still work for local orthogonal matrices in the classical shadow? The answer is case by case. The proof of this claim follows. Suppose we perform a local orthogonal matrix $O=\bigotimes_{j=1}^{N}O_j$. For elementary tensor products $X=X_1\otimes\cdots\otimes X_N$ and each $X_j$ is a $d\times d$ matrix, we obtain the measurement channel is
\begin{widetext}
    \begin{align}
    \mathcal{M}(X)&=\sum_{b}\int dO\langle b|O(X_1\otimes\cdots\otimes X_N)O^{\top}|b\rangle O^{\top}|b\rangle\langle b|O\nonumber\\
    &=\bigotimes_{j=1}^{N}\sum_{b_j\in\{0,\cdots,d-1\}}\int dO_j\langle b_j|O_jX_jO_j^{\top}|b_j\rangle O_j^{\top}|b_j\rangle\langle b_j|O_j\nonumber\\
    &=\bigotimes_{j=1}^{N}\frac{1}{d(d+2)}\sum_{b_j\in\{0,\cdots,d-1\}}\tr_1\left[X_j\otimes\eins_d+(X_j\otimes\eins_d)\mathbb{S}+(X_j\otimes\eins_d)\Pi\right]\nonumber\\
    &=\bigotimes_{j=1}^{N}\frac{\tr(X_j)\eins_d+X_j+X_j^{\top}}{d+2}.
\end{align}
\end{widetext}
It is clear to check that if $X_j=X_j^{\top}$ for all $j$, the above equation turns to
\begin{align}
    \mathcal{M}(X)&=\bigotimes_{j=1}^{N}\frac{\tr(X_j)\eins_d+2X_j}{d+2}.
\end{align}
Thus, the classical shadows of $X$ are deduced via the inverse of the channel,
\begin{align}
    \tilde{X}&=\mathcal{M}^{-1}(O^{\top}|b\rangle\langle b|O)\nonumber\\
    &=\bigotimes_{j=1}^{N}\frac{d+2}{2}O_j^{\top}|b_j\rangle\langle b_j|O_j-\frac{1}{2}\eins_{d}.
\end{align}
Extending linearly to all quantum states in $\mathcal{L}(\mathds{C}^{d^N})$, we find that the type of quantum states that are implementable for local orthogonal matrices are the PTI states. Here, we define the PTI states as $\vr=\vr^{\top_j}$ for any $j$.
		
\subsection{Cross-platform verification of quantum devices}\label{Cross-platformV}
We next present a protocol with RRMs and PRRMs for cross-platform verification of quantum computers \cite{elben2020cross}. In particular, we will estimate the fidelity $\mathcal{F}(\vr_1,\vr_2)=\frac{\tr(\vr_1\vr_2)}{\max\{\tr(\vr_1^2),\tr(\vr_2^2)\}}$ of two-qudit real states $\vr_1$, $\vr_2$ to verify whether two quantum devices have prepared the same quantum state~\cite{liang2019quantum}. Following (i-iii) in {Assumptions}, we present a strategy to estimate the overlap $\tr(\vr_1\vr_2)$ as well as the purities of states.
		
\subsubsection{Estimation of the overlap between bipartite states from local orthogonal matrices}
From Assumptions (i) and (ii), we first apply both $\vr_1$, $\vr_2$ to the same random local orthogonal matrix, $O=O_A\otimes O_B$, where $O_A$ and $O_B$ are sampled via the Haar measure on orthogonal group $\boldsymbol{\mathrm{O}}(d)$. Consider two $d^2\times d^2$ real local observables:
\begin{align}
    &M_1=\left(\alpha_1|0\rangle\langle0|+\alpha_2|d-1\rangle\langle d-1|\right)^{\otimes 2},\nonumber\\
    &M_2=\left(\beta_1|0\rangle\langle0|+\beta_2|d-1\rangle\langle d-1|\right)^{\otimes 2},
\end{align}
where real numbers $\alpha_1$, $\alpha_2$, $\beta_1$, and $\beta_2$ to satisfy the following conditions:
\begin{align}
    (d+1)(\alpha_1+\alpha_2)(\beta_1+\beta_2)&=2(\alpha_1\beta_1+\alpha_2\beta_2).\nonumber\\
    d(\alpha_1\beta_1+\alpha_2\beta_2)&\neq(\alpha_1+\alpha_2)(\beta_1+\beta_2).
\end{align}
We estimate two expectation values $E_k=\tr\left(O\vr_kO^{\top}M_k\right)$ for $k=1,2$. Repeating the above procedure for different orthogonal matrices $O$, we find the quantity, $\mathcal{D}=\int dOE_1E_2$.
		
Similarly, we consider two imaginary observables
\begin{align}
    \hat{M}_1=\hat{M}_2=(-i|0\rangle\langle d-1|+i|d-1\rangle\langle0|)^{\otimes2},
\end{align}
and find another quantity $\hat{\mathcal{D}}=\int dO\hat{E}_1\hat{E}_2$, where expectation values $\hat{E}_k=\tr(O\vr_kO^{\top}\hat{M}_1)$ for $k=1,2$. Using $\mathcal{D}$ and $\hat{\mathcal{D}}$, we present the following result:
\begin{observation}\label{Ob:OverlapEstimation}
    Consider any two-qudit state $\vr_1$ and a real two-qudit state $\vr_2$ and apply RRMs and PRRMs. The overlap is estimated by
    \begin{align}
        \tr(\vr_1\vr_2)=\frac{\eta^2\mathcal{D}+\gamma^2\hat{\mathcal{D}}}{4\gamma^2\eta^2},
    \end{align}
    where parameters $\gamma=\frac{-(\alpha_1+\alpha_2)(\beta_1+\beta_2)+d(\alpha_1\beta_1+\alpha_2\beta_2)}{d\left(d-1\right)\left(d+2\right)}$ and $\eta=\frac{2}{d(d-1)}$.
\end{observation}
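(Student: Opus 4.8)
The plan is to reduce the Haar average over the two local orthogonal groups to the second-moment formula of Lemma~\ref{Lemma1} and then read off which blocks of the correlation matrices survive. First I would expand both states in the GGM basis, $\vr_k=\frac{1}{d^2}\sum_{j,l=0}^{d^2-1}T_{jl}^{(k)}\lambda_j\otimes\lambda_l$ for $k=1,2$, and write the overlap as $\tr(\vr_1\vr_2)=\frac{1}{d^2}\sum_{j,l}T_{jl}^{(1)}T_{jl}^{(2)}$ using $\tr(\lambda_j\lambda_{j'})=d\delta_{jj'}$. Since $\vr_2=\vr_2^{\top}$, every mixed (real--imaginary and imaginary--real) entry $T_{jl}^{(2)}$ vanishes, so the overlap splits cleanly into a symmetric--symmetric part $\mathcal{O}_{\mathrm{sym}}$ (both indices from $\{0\}\cup\boldsymbol{\lambda}$) and an antisymmetric--antisymmetric part $\mathcal{O}_{\mathrm{asym}}$ (both indices from $\hat{\boldsymbol{\lambda}}$). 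These are exactly the two pieces that RRMs and PRRMs can respectively access.

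Next I would evaluate $\mathcal{D}=\int dO_A\,dO_B\,E_1E_2$. Because $O=O_A\otimes O_B$ and $M_k=m_k\otimes m_k$ with $m_1=\alpha_1|0\rangle\langle0|+\alpha_2|d-1\rangle\langle d-1|$ and $m_2$ its analogue with $\beta$'s, the product $E_1E_2$ factorizes into independent second-moment integrals over $O_A$ and over $O_B$, each of the form $\int dO_A\,\tr(O_A\lambda_jO_A^{\top}m_1)\tr(O_A\lambda_{j'}O_A^{\top}m_2)$. Applying Eq.~(\ref{SecondOrder}) together with Lemmas~\ref{Lemma2} and~\ref{Lemma3}, this integral is nonzero only for $j=j'$; its value is $(\alpha_1+\alpha_2)(\beta_1+\beta_2)$ for $j=0$, equals $2d\gamma$ for $\lambda_j$ real, and \emph{vanishes} for $\lambda_j$ imaginary (the last because the diagonal $m_k$ are symmetric, so $\tr(\Pi\,m_1\otimes m_2)=\tr(\mathbb{S}\,m_1\otimes m_2)$ and the imaginary weight $\gamma_2-\gamma_3$ is killed). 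The crux is the first condition $(d+1)(\alpha_1+\alpha_2)(\beta_1+\beta_2)=2(\alpha_1\beta_1+\alpha_2\beta_2)$: inserting it into $2d\gamma$ collapses the real-GGM factor to exactly $(\alpha_1+\alpha_2)(\beta_1+\beta_2)$, the same value as the $j=0$ factor. Hence every symmetric index (identity, local, and real--real) carries the identical weight $2d\gamma$ on each subsystem, yielding $\mathcal{D}=4\gamma^2\,\mathcal{O}_{\mathrm{sym}}$; the second condition $d(\alpha_1\beta_1+\alpha_2\beta_2)\neq(\alpha_1+\alpha_2)(\beta_1+\beta_2)$ merely guarantees $\gamma\neq0$ so that this weight is invertible.

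The computation of $\hat{\mathcal{D}}=\int dO_A\,dO_B\,\hat{E}_1\hat{E}_2$ runs in parallel with the roles reversed. With $\hat{m}=-i|0\rangle\langle d-1|+i|d-1\rangle\langle0|$ one has $\tr(\hat{m})=0$, $\hat{m}^{\top}=-\hat{m}$, and $\tr(\hat{m}^2)=2$, so the analogous single-subsystem integral vanishes for $j=0$ (traceless observable) and for $\lambda_j$ real (where $\gamma_2=\gamma_3$), while for $\lambda_j$ imaginary it equals $2(\gamma_2-\gamma_3)=4/(d-1)$. This gives $\hat{\mathcal{D}}=4\eta^2\,\mathcal{O}_{\mathrm{asym}}$ with $\eta=2/[d(d-1)]$. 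Combining the two expressions with the splitting established in the first paragraph, $\tr(\vr_1\vr_2)=\mathcal{O}_{\mathrm{sym}}+\mathcal{O}_{\mathrm{asym}}=\mathcal{D}/(4\gamma^2)+\hat{\mathcal{D}}/(4\eta^2)$, which is precisely the claimed $(\eta^2\mathcal{D}+\gamma^2\hat{\mathcal{D}})/(4\gamma^2\eta^2)$.

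I expect the main obstacle to be bookkeeping rather than conceptual: one must track the identity index $j=0$ alongside the traceless GGMs and verify that the first condition is exactly what equalizes the $j=0$ and real-GGM weights, so that $\mathcal{D}$ captures the \emph{entire} symmetric block of the overlap and not merely the real--real correlations. The reality assumption $\vr_2=\vr_2^{\top}$ is used at exactly one point—it removes the mixed blocks that neither RRMs nor PRRMs can detect—so that the two measured quantities $\mathcal{D}$ and $\hat{\mathcal{D}}$ together reconstruct the full overlap.
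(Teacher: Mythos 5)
Your proof is correct, and it reorganizes the argument in a genuinely different way from the paper, even though both ultimately rest on the same orthogonal second-moment formula (Lemma~\ref{Lemma1}). You work in the GGM-coefficient picture: writing $\tr(\vr_1\vr_2)=\frac{1}{d^2}\sum_{j,l}T^{(1)}_{jl}T^{(2)}_{jl}$, you use $\vr_2=\vr_2^{\top}$ exactly once, at the outset, to kill the mixed blocks, and then show index by index (via Lemmas~\ref{Lemma2} and~\ref{Lemma3}) that the RRM integral assigns every symmetric index the common weight $2d\gamma=(\alpha_1+\alpha_2)(\beta_1+\beta_2)$ while annihilating imaginary indices, and dually for PRRMs; your reading of the constraint $(d+1)(\alpha_1+\alpha_2)(\beta_1+\beta_2)=2(\alpha_1\beta_1+\alpha_2\beta_2)$ as exactly the condition that equalizes the $j=0$ weight with the real-GGM weight is correct, and is a point the paper never makes explicit. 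This yields the intermediate identities $\mathcal{D}=4\gamma^2\mathcal{O}_{\mathrm{sym}}$ and $\hat{\mathcal{D}}=4\eta^2\mathcal{O}_{\mathrm{asym}}$, valid for arbitrary pairs of states, with reality needed only to conclude $\mathcal{O}_{\mathrm{sym}}+\mathcal{O}_{\mathrm{asym}}=\tr(\vr_1\vr_2)$. The paper instead stays at the operator level: after imposing $c_1=0$ it writes the local integrals as $\gamma\left(\mathbb{S}_k+\Pi_k\right)$ and $\eta\left(\mathbb{S}_k-\Pi_k\right)$, expands $\mathcal{D}/\gamma^2$ and $\hat{\mathcal{D}}/\eta^2$ into four $\mathbb{S}/\Pi$ tensor terms (equivalently $\tr(\vr_1\vr_2)$, $\tr(\vr_1\vr_2^{\top_A})$, $\tr(\vr_1\vr_2^{\top_B})$, and $\tr(\vr_1\vr_2^{\top})$), observes that the two partial-transpose cross terms cancel in the weighted sum, and only at the end uses $\vr_2=\vr_2^{\top}$ to turn $\tr(\vr_1\vr_2^{\top})$ into $\tr(\vr_1\vr_2)$. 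The two routes are dual to each other; yours buys sharper intermediate statements (each measured quantity is separately identified with a block of the overlap, and the role of each hypothesis is isolated), while the paper's operator bookkeeping is more compact and transfers without index gymnastics to the global-orthogonal and $N$-partite estimators treated later in the same appendix.
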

\begin{proof}
    Consider two-qudit states $\vr_1$, $\vr_2$, we can estimate the overlap $\tr(\vr_1\vr_2)$ by RRMs and PRRMs with local orthogonal matrices. Construct real observables for real parameters $\alpha_1,\alpha_2,\beta_1,\beta_2$ (determined later),
    \begin{align}
        M_1&=M_1^A\otimes M_1^B\nonumber\\
        &=\left(\alpha_1|0\rangle\langle0|+\alpha_2|d-1\rangle\langle d-1|\right)^{\otimes 2},\nonumber\\
        M_2&=M_2^A\otimes M_2^B\nonumber\\
        &=\left(\beta_1|0\rangle\langle0|+\beta_2|d-1\rangle\langle d-1|\right)^{\otimes 2}.
    \end{align}
    
    \onecolumngrid
    We have
    \begin{align}
        \tr(M_1^k\otimes M_2^k)&=(\alpha_1+\alpha_2)(\beta_1+\beta_2),\nonumber\\
        \tr(\mathbb{S}M_1^k\otimes M_2^k)&=\tr(M_1^kM_2^k)=\alpha_1\beta_1+\alpha_2\beta_2,\nonumber\\
        \tr(\Pi M_1^k\otimes M_2^k)&=\tr\left(\mathbb{S}\hat{M}_1^k\otimes\left(\hat{M}_2^k\right)^{\top}\right)
        =\tr(\mathbb{S}M_1^k\otimes M_2^k)
        =\alpha_1\beta_1+\alpha_2\beta_2,
    \end{align}
    for $k=A,B$. In this scenario, the integral is
    \begin{align}
        \mathcal{D}&=\int dO\tr\left(O\vr_1O^{\top}M_1\right)\tr\left(O\vr_2O^{\top}M_2\right)\nonumber\\
        &=\tr\left[\int dO(O^{\top}\otimes O^{\top})(M_1\otimes M_2)(O\otimes O)(\vr_1\otimes\vr_2)\right]\nonumber\\
        &=\tr\left[\int dO_A\int dO_B(O_A^{\top}\otimes O_B^{\top}\otimes O_A^{\top}\otimes O_B^{\top})(M_1^A\otimes M_1^B\otimes M_2^A\otimes M_2^B)(O_A\otimes O_B\otimes O_A\otimes O_B)(\vr_1\otimes\vr_2)\right]\nonumber\\
        &=\tr\left[\bigotimes_{k=A,B}\int dO_k(O_k^{\top}\otimes O_k^{\top})(M_1^k\otimes M_2^k)(O_k\otimes O_k)(\vr_1\otimes\vr_2)\right].
    \end{align}
    Using Lemma~\ref{Lemma1}, we obtain the following quantity,
    \begin{align}
        \int dO_k(O_k^{\top}\otimes O_k^{\top})(M_1^k\otimes M_2^k)(O_k\otimes O_k)&=c_1\eins_{d^2}+c_2\mathbb{S}_k+c_3\Pi_k,
    \end{align}
    where the coefficients are
    \begin{align}
        c_1&=\frac{(d+1)\tr(M_1^k\otimes M_2^k)-\tr(\mathbb{S}M_1^k\otimes M_2^k)-\tr(\Pi M_1^k\otimes M_2^k)}{d(d-1)(d+2)}\nonumber\\
        &=\frac{(d+1)(\alpha_1+\alpha_2)(\beta_1+\beta_2)-2(\alpha_1\beta_1+\alpha_2\beta_2)}{d(d-1)(d+2)},\nonumber\\
        c_2&=\frac{-\tr(M_1^k\otimes M_2^k)+(d+1)\tr(\mathbb{S}M_1^k\otimes M_2^k)-\tr(\Pi M_1^k\otimes M_2^k)}{d(d-1)(d+2)}\nonumber\\
        &=\frac{-(\alpha_1+\alpha_2)(\beta_1+\beta_2)+d(\alpha_1\beta_1+\alpha_2\beta_2)}{d(d-1)(d+2)},\nonumber\\
        c_3&=\frac{-\tr(M_1^k\otimes M_2^k)-\tr(\mathbb{S}M_1^k\otimes M_2^k)+(d+1)\tr(\Pi M_1^k\otimes M_2^k)}{d(d-1)(d+2)}\nonumber\\
        &=c_2.
    \end{align}
    Hence, the quantity $\mathcal{D}$ is
    \begin{align}
		\mathcal{D}&=\tr\left[\bigotimes_{k=A,B}\left(c_1\eins_{d^2}+c_2\mathbb{S}_k+c_2\Pi_k\right)(\vr_1\otimes\vr_2)\right]\nonumber\\
        &=\gamma^{2}\tr\left[\bigotimes_{k=A,B}\left(\mathbb{S}_k+\Pi_k\right)(\vr_1\otimes\vr_2)\right]\nonumber\\
        &=\gamma^{2}\tr\left[\left(\mathbb{S}_A\otimes\mathbb{S}_B+\mathbb{S}_A\otimes\Pi_B+\Pi_A\otimes\mathbb{S}_B+\Pi_A\otimes\Pi_B\right)(\vr_1\otimes\vr_2)\right],
    \end{align}
    To remove the tensor item that includes single operators $\mathbb{S}_k$ and $\Pi_k$, we let $c_1=0$ and parameters $\alpha_1,\alpha_2,\beta_1,\beta_2$ should satisfy
    \begin{align}
        (d+1)(\alpha_1+\alpha_2)(\beta_1+\beta_2)=2(\alpha_1\beta_1+\alpha_2\beta_2).
    \end{align}
    The parameter $\gamma=c_2$. From the calculation, we can conclude that the quantity $\mathcal{D}$ is meaningful only when the parameter $\gamma$ is nonzero. Thus the free parameters $\alpha_j$ and $\beta_j$, $j=1,2$ have another constraints such that
    \begin{align}
        d(\alpha_1\beta_1+\alpha_2\beta_2)\neq(\alpha_1+\alpha_2)(\beta_1+\beta_2).
    \end{align}
			
    Now we define two two-qudit imaginary observables,
    \begin{align}
        &\hat{M}_1=\hat{M}_1^A\otimes \hat{M}_1^B=(-i|0\rangle\langle d-1|+i|d-1\rangle\langle0|)^{\otimes2},\nonumber\\
        &\hat{M}_2=\hat{M}_2^A\otimes \hat{M}_2^B=(-i|0\rangle\langle d-1|+i|d-1\rangle\langle0|)^{\otimes2}.
    \end{align}
    We have
    \begin{align}
        \tr(\hat{M}_1^k\otimes \hat{M}_2^k)&=0,\nonumber\\
        \tr(\mathbb{S}\hat{M}_1^k\otimes \hat{M}_2^k)&=\tr(\hat{M}_1^k\hat{M}_2^k)=2,\nonumber\\
        \tr(\Pi \hat{M}_1^k\otimes \hat{M}_2^k)&=\tr\left(\mathbb{S}\hat{M}_1^k\otimes\left(\hat{M}_2^k\right)^{\top}\right)
        =-\tr(\mathbb{S}\hat{M}_1^k\otimes \hat{M}_2^k)=-2.
    \end{align}
    In this scenario, the integral is
    \begin{align}
        \hat{\mathcal{D}}&=\int dO\tr\left(O\vr_1O^{\top}\hat{M}_1\right)\tr\left(O\vr_2O^{\top}\hat{M}_2\right)\nonumber\\
        &=\tr\left[\bigotimes_{k=A,B}\int dO_k(O_k^{\top}\otimes O_k^{\top})(\hat{M}_1^k\otimes \hat{M}_2^k)(O_k\otimes O_k)(\vr_1\otimes\vr_2)\right].
    \end{align}
    Using Lemma~\ref{Lemma1}, we obtain the following quantity,
    \begin{align}
        \int dO_k(O_k^{\top}\otimes O_k^{\top})(\hat{M}_1^k\otimes \hat{M}_2^k)(O_k\otimes O_k)&=\hat{c}_1\eins_{d^2}+\hat{c}_2\mathbb{S}_k+\hat{c}_3\Pi_k,
    \end{align}
    where the coefficients are
    \begin{align}
        \hat{c}_1&=\frac{(d+1)\tr(\hat{M}_1^k\otimes \hat{M}_2^k)-\tr(\mathbb{S}\hat{M}_1^k\otimes \hat{M}_2^k)-\tr(\Pi \hat{M}_1^k\otimes \hat{M}_2^k)}{d(d-1)(d+2)}=0,\nonumber\\
        \hat{c}_2&=\frac{-\tr(\hat{M}_1^k\otimes \hat{M}_2^k)+(d+1)\tr(\mathbb{S}\hat{M}_1^k\otimes \hat{M}_2^k)-\tr(\Pi \hat{M}_1^k\otimes \hat{M}_2^k)}{d(d-1)(d+2)}=\frac{2}{d(d-1)},\nonumber\\
        \hat{c}_3&=\frac{-\tr(\hat{M}_1^k\otimes \hat{M}_2^k)-\tr(\mathbb{S}\hat{M}_1^k\otimes \hat{M}_2^k)+(d+1)\tr(\Pi \hat{M}_1^k\otimes \hat{M}_2^k)}{d(d-1)(d+2)}
        =\frac{-2}{d(d-1)}=-\hat{c}_2.
    \end{align}
    The integral turns to
    \begin{align}
        \hat{\mathcal{D}}&=\eta^{2}\tr\left[\bigotimes_{k=A,B}\left(\mathbb{S}_k-\Pi_k\right)(\vr_1\otimes\vr_2)\right]\nonumber\\
        &=\eta^{2}\tr\left[\left(\mathbb{S}_A\otimes\mathbb{S}_B-\mathbb{S}_A\otimes\Pi_B-\Pi_A\otimes\mathbb{S}_B+\Pi_A\otimes\Pi_B\right)(\vr_1\otimes\vr_2)\right],
    \end{align}
    where the parameter $\eta=\frac{2}{d(d-1)}$.
			
    Add the above two integrals with suitable coefficients, and we have
    \begin{align} \nonumber
        \frac{\mathcal{D}}{\gamma^2}+\frac{\hat{\mathcal{D}}}{\eta^2}
        &=2\tr\left[\left(\mathbb{S}_A\otimes\mathbb{S}_B+\Pi_A\otimes\Pi_B\right)(\vr_1\otimes\vr_2)\right]\\ \label{eq:balbalatest}
        &=4\tr\left[\left(\mathbb{S}_A\otimes\mathbb{S}_B\right)(\vr_1\otimes\vr_2)\right]\nonumber\\
        &=4\tr\left[\vr_1\vr_2\right].
    \end{align}
    In Eq.(\ref{eq:balbalatest}), we use the fact that for any state $\vr_1$ and the real state $\vr_2$, $\vr_2=\vr^{\top}$, the following equation is true
    \begin{align}
        \tr\left[\left(\Pi_A\otimes\Pi_B\right)\left(\vr_1\otimes\vr_2\right)\right]
        &=\tr\left[\left(\mathbb{S}_A^{\top_2}\otimes\mathbb{S}_B^{\top_2}\right)\left(\vr_1\otimes\vr_2\right)\right]\nonumber\\
        &=\tr\left[\left(\mathbb{S}_A^{\top_2}\otimes\mathbb{S}_B^{\top_2}\right)\left(\sum_{\boldsymbol{l},\boldsymbol{m},\boldsymbol{j},\boldsymbol{k}}\vr_1^{\boldsymbol{l}\boldsymbol{m}}\vr_2^{\boldsymbol{j}\boldsymbol{k}}|l_A\rangle\langle\tilde{l}_A|\otimes|m_B\rangle\langle\tilde{m}_B|\otimes|j_A\rangle\langle\tilde{j}_A|\otimes|k_B\rangle\langle\tilde{k}_B|\right)\right]\nonumber\\
        &=\tr\left[\left(\mathbb{S}_A\otimes\mathbb{S}_B\right)\left(\sum_{\boldsymbol{l},\boldsymbol{m},\boldsymbol{j},\boldsymbol{k}}\vr_1^{\boldsymbol{l}\boldsymbol{m}}\vr_2^{\boldsymbol{j}\boldsymbol{k}}|l_A\rangle\langle\tilde{l}_A|\otimes|m_B\rangle\langle \tilde{m}_B|\otimes|j_A\rangle\langle\tilde{j}_A|^{\top}\otimes|k_B\rangle\langle\tilde{k}_B|^{\top}\right)\right]\nonumber\\
        &=\tr\left[\left(\mathbb{S}_A\otimes\mathbb{S}_B\right)\left(\sum_{\boldsymbol{l},\boldsymbol{m},\boldsymbol{j},\boldsymbol{k}}\vr_1^{\boldsymbol{l}\boldsymbol{m}}\vr_2^{\boldsymbol{j}\boldsymbol{k}}|l_A\rangle\langle\tilde{l}_A|\otimes|m_B\rangle\langle \tilde{m}_B|\otimes|j_A\rangle\langle\tilde{j}_A|\otimes|k_B\rangle\langle\tilde{k}_B|\right)\right]\nonumber\\
        &=\tr\left[\left(\mathbb{S}_A\otimes\mathbb{S}_B\right)\left(\vr_1\otimes\vr_2\right)\right].
    \end{align}
    We here expand $\vr_1$ and $\vr_2$ in the computational basis  of subsystems A and B, such that
    \begin{align}\label{TwoStates}
        \vr_1=\sum_{\boldsymbol{l},\boldsymbol{m}}\vr_1^{\boldsymbol{l}\boldsymbol{m}}|l_A\rangle\langle \tilde{l}_A|\otimes|m_B\rangle\langle\tilde{m}_B|,
        \vr_2=\sum_{\boldsymbol{j},\boldsymbol{k}}\vr_2^{\boldsymbol{j}\boldsymbol{k}}|j_A\rangle\langle \tilde{j}_A|\otimes|k_B\rangle\langle\tilde{k}_B|.
    \end{align}
    For the real state $\vr_2$, we have
    \begin{align}
        \vr_2^{\top}=\sum_{\boldsymbol{j},\boldsymbol{k}}\vr_2^{\boldsymbol{j}\boldsymbol{k}}|j_A\rangle\langle \tilde{j}_A|^{\top}\otimes|k_B\rangle\langle\tilde{k}_B|^{\top}
        =\vr_2=\sum_{\boldsymbol{j},\boldsymbol{k}}\vr_2^{\boldsymbol{j}\boldsymbol{k}}|j_A\rangle\langle \tilde{j}_A|\otimes|k_B\rangle\langle\tilde{k}_B|.
    \end{align}
    Finally, we obtain the overlap
    \begin{align}
        \tr\left[\vr_1\vr_2\right]=\frac{\eta^2\mathcal{D}+\gamma^2\hat{\mathcal{D}}}{4\gamma^2\eta^2}.
    \end{align}
    \end{proof}
    
The proof indicates that our protocol can estimate the overlap $\tr(\vr_1\vr_2)$ where $\vr_2$ must be a real state and $\vr_1$ can be any quantum state.
		
\subsubsection{Estimation of the overlap between multipartite states from global and local orthogonal matrices}\label{Generalized:estimation}
We present protocols to estimate the overlap $\tr(\vr_1\vr_2)$ of two $N$-qudit states $\vr_1$ and $\vr_2$ by RRMs and PRRMs with global and local orthogonal matrices.
		
\textbf{Estimation from global orthogonal matrix}
		
Given two $N$-qudit states $\vr_1$ and $\vr_2$, we first apply to both $\vr_1$, $\vr_2$ the same random global orthogonal matrix, $O$. Here $O$ is sampled via the Haar measure on the orthogonal group $\boldsymbol{\mathrm{O}}(d)$. We calculate the expectation value on both states with two $d^N\times d^N$ diagonal observables
\begin{align}
    &M_1=\alpha_1|0\rangle\langle0|^{\otimes N}+\alpha_2|d-1\rangle\langle d-1|^{\otimes N},
    M_2=\beta_1|0\rangle\langle0|^{\otimes N}+\beta_2|d-1\rangle\langle d-1|^{\otimes N},
\end{align}
where $\{|j\rangle\}$ is the computational basis in a $d$-dimension space and nonzero real numbers $\alpha_1$, $\alpha_2$, $\beta_1$, and $\beta_2$ satisfy
\begin{align}
    \left(d^N+1\right)(\alpha_1+\alpha_2)(\beta_1+\beta_2)=2(\alpha_1\beta_1+\alpha_2\beta_2),
\end{align}
which enables the second moment operator to be only spanned by the operators $\mathbb{S}$ and $\Pi$. We then find that
\begin{align}
    \tr(M_1\otimes M_2)=(\alpha_1+\alpha_2)(\beta_1+\beta_2),
    \tr(\mathbb{S}M_1\otimes M_2)=\tr(\Pi M_1\otimes M_2)=\alpha_1\beta_1+\alpha_2\beta_2.
\end{align}
Hence, the integral turns to
\begin{align}
    \int dO\tr\left(O\vr_1O^{\top}M_1\right)\tr\left(O\vr_2O^{\top}M_2\right)
    &=\int dO\tr\left[\left(O\otimes O\right)\left(\vr_1\otimes \vr_2\right)\left(O^{\top}\otimes O^{\top}\right)\left(M_1\otimes M_2\right)\right]\nonumber\\
    &=\int dO\tr\left[\left(O^{\top}\otimes O^{\top}\right)\left(M_1\otimes M_2\right)\left(O\otimes O\right)\left(\vr_1\otimes \vr_2\right)\right]\nonumber\\
    &=\tr\left[\int dO\left(O^{\top}\otimes O^{\top}\right)\left(M_1\otimes M_2\right)\left(O\otimes O\right)\left(\vr_1\otimes \vr_2\right)\right].
\end{align}
Using Lemma~\ref{Lemma1}, we find
    \begin{align}
        \int dO(O^{\top}\otimes O^{\top})(M_1\otimes M_2)(O\otimes O)&=c_1\eins_{d^N}+c_2\mathbb{S}+c_3\Pi,
    \end{align}
where the coefficients are
    \begin{align}
        c_1&=\frac{(d^N+1)\tr(M_1\otimes M_2)-\tr(\mathbb{S}M_1\otimes M_2)-\tr(\Pi M_1\otimes M_2)}{d^N(d^N-1)(d^N+2)}\nonumber\\
        &=\frac{(d^N+1)(\alpha_1+\alpha_2)(\beta_1+\beta_2)-2(\alpha_1\beta_1+\alpha_2\beta_2)}{d^N(d^N-1)(d^N+2)},\nonumber\\
        c_2&=\frac{-\tr(M_1\otimes M_2)+(d^N+1)\tr(\mathbb{S}M_1\otimes M_2)-\tr(\Pi M_1\otimes M_2)}{d^N(d^N-1)(d^N+2)}\nonumber\\
        &=\frac{-(\alpha_1+\alpha_2)(\beta_1+\beta_2)+d^N(\alpha_1\beta_1+\alpha_2\beta_2)}{d^N(d^N-1)(d^N+2)},\nonumber\\
        c_3&=\frac{-\tr(M_1\otimes M_2)-\tr(\mathbb{S}M_1\otimes M_2)+(d^N+1)\tr(\Pi M_1\otimes M_2)}{d^N(d^N-1)(d^N+2)}\nonumber\\
        &=c_2.
    \end{align}
Let $c_1=0$ such that $(d^N+1)(\alpha_1+\alpha_2)(\beta_1+\beta_2)=2(\alpha_1\beta_1+\alpha_2\beta_2)$. The integral is
\begin{align}
    \int dO(O^{\top}\otimes O^{\top})(M_1\otimes M_2)(O\otimes O)&=\gamma\left(\mathbb{S}+\Pi\right),~\gamma=c_2.
\end{align}
Now, we have
\begin{align}
    \int dO\tr\left(O\vr_1O^{\top}M_1\right)\tr\left(O\vr_2O^{\top}M_2\right)
    &=\tr\left[\gamma\left(\mathbb{S}+\Pi\right)\left(\vr_1\otimes \vr_2\right)\right]
    =2\gamma\tr\left[\mathbb{S}\left(\vr_1\otimes \vr_2\right)\right]
    =2\gamma\tr\left(\vr_1\vr_2\right).
\end{align}
The second equation holds since
\begin{align}
    \tr\left[\Pi\left(\vr_1\otimes\vr_2\right)\right]&=\tr\left[\mathbb{S}\left(\vr_1\otimes\vr_2^{\top}\right)\right]=\tr\left[\mathbb{S}\left(\vr_1\otimes\vr_2\right)\right]
\end{align}
for the real state $\vr_2=\vr_2^{\top}$. Thus, the overlap is estimated by
\begin{align}
    \tr(\vr_1\vr_2)=\frac{d^N\left(d^N-1\right)\left(d^N+2\right)}{-2(\alpha_1+\alpha_2)(\beta_1+\beta_2)+2d^N(\alpha_1\beta_1+\alpha_2\beta_2)}\int dO\tr\left(O\vr_1O^{\top}M_1\right)\tr\left(O\vr_2O^{\top}M_2\right).
\end{align}
In summary, the estimation from a global orthogonal matrix requires one of the quantum states to be a real state.
		
\textbf{Estimation from local orthogonal matrices}
Suppose we first apply to both $N$-qudit states $\vr_1$, $\vr_2$ the same random local orthogonal matrix, $O=\bigotimes_{k=1}^{N}O_k$. We construct real local observables
\begin{align}\label{LocalObservables1}
    M_1=\bigotimes_{k=1}^{N}M_1^{k},M_2=\bigotimes_{k=1}^{N}M_2^{k},
\end{align}
where $d\times d$ real observables are
\begin{align}
    M_1^{k}=\alpha_1|0\rangle\langle0|+\alpha_2|d-1\rangle\langle d-1|,
    M_2^{k}=\beta_1|0\rangle\langle0|+\beta_2|d-1\rangle\langle d-1|.
\end{align}
Thus, the integral is
\begin{align}\label{LocalOrthogonal}
    \int dO\tr\left(O\vr_1O^{\top}M_1\right)\tr\left(O\vr_2O^{\top}M_2\right)
    &=\tr\left[\int dO(O^{\top}\otimes O^{\top})(M_1\otimes M_2)(O\otimes O)(\vr_1\otimes\vr_2)\right]\nonumber\\
    &=\tr\left[\bigotimes_{k=1}^{N}\int_{dO_k}(O_k^{\top}\otimes O_k^{\top})(M_1^{k}\otimes M_2^{k})(O_k\otimes O_k)(\vr_1\otimes\vr_2)\right]\nonumber\\
    &=\gamma^{N}\tr\left[\bigotimes_{k=1}^{N}\left(\mathbb{S}_k+\Pi_k\right)(\vr_1\otimes\vr_2)\right]\nonumber\\
    &=(2\gamma)^{N}\tr\left[\bigotimes_{k=1}^{N}\mathbb{S}_k(\vr_1\otimes\vr_2)\right]\nonumber\\
    &=(2\gamma)^{N}\tr\left[\mathbb{S}(\vr_1\otimes\vr_2)\right]\nonumber\\
    &=(2\gamma)^{N}\tr\left[\vr_1\vr_2\right].
\end{align}
The fourth equation is true if and only if $\vr_2$ are partial transpose invariant (PTI) such that $\vr_2=\vr_2^{\top}=\vr_2^{\top_k}$ for $k=1,\cdots,N$. In this case, the overlap is
\begin{align}
    \tr\left[\vr_1\vr_2\right]=\left[\frac{1}{2\gamma}\right]^{N}\int dO\tr\left(O\vr_1O^{\top}M_1\right)\tr\left(O\vr_2O^{\top}M_2\right).
\end{align}
		
In conclusion, the estimation from the local orthogonal matrix requires one of the quantum states to be a PTI state. For example, $3\otimes3$ bound states such as chessboard and UPB states are PTI states.
		
Here, we prove the fourth equality of Eq.~(\ref{LocalOrthogonal}) for $N=2$ and rewrite two systems as $A,B$. The proof for general $N$ is similar to $N=2$. Using Eq.~(\ref{TwoStates}), we have
\begin{align}
    \tr\left[(\mathbb{S}_A+\Pi_A)\otimes(\mathbb{S}_B+\Pi_B)(\vr_1\otimes\vr_2)\right]=\tr\left[\left(\mathbb{S}_A\otimes\mathbb{S}_B+\mathbb{S}_A\otimes\Pi_B+\Pi_A\otimes\mathbb{S}_B+\Pi_A\otimes\Pi_B\right)(\vr_1\otimes\vr_2)\right].
\end{align}
Without loss of generality, we calculate
\begin{align}
    \tr\left[\left(\mathbb{S}_A\otimes\Pi_B\right)\left(\vr_1\otimes\vr_2\right)\right]
    &=\tr\left[\left(\mathbb{S}_A\otimes\mathbb{S}_B^{\top_2}\right)\left(\vr_1\otimes\vr_2\right)\right]\nonumber\\
    &=\tr\left[\left(\mathbb{S}_A\otimes\mathbb{S}_B^{\top_2}\right)\left(\sum_{\boldsymbol{l},\boldsymbol{m},\boldsymbol{j},\boldsymbol{k}}\vr_1^{\boldsymbol{l}\boldsymbol{m}}\vr_2^{\boldsymbol{j}\boldsymbol{k}}|l_A\rangle\langle\tilde{l}_A|\otimes|m_B\rangle\langle\tilde{m}_B|\otimes|j_A\rangle\langle\tilde{j}_A|\otimes|k_B\rangle\langle\tilde{k}_B|\right)\right]\nonumber\\
    &=\tr\left[\left(\mathbb{S}_A\otimes\mathbb{S}_B\right)\left(\sum_{\boldsymbol{l},\boldsymbol{m},\boldsymbol{j},\boldsymbol{k}}\vr_1^{\boldsymbol{l}\boldsymbol{m}}\vr_2^{\boldsymbol{j}\boldsymbol{k}}|l_A\rangle\langle\tilde{l}_A|\otimes|m_B\rangle\langle \tilde{m}_B|\otimes|j_A\rangle\langle\tilde{j}_A|\otimes|k_B\rangle\langle\tilde{k}_B|^{\top}\right)\right]\nonumber\\
    &=\tr\left[\left(\mathbb{S}_A\otimes\mathbb{S}_B\right)\left(\sum_{\boldsymbol{l},\boldsymbol{m},\boldsymbol{j},\boldsymbol{k}}\vr_1^{\boldsymbol{l}\boldsymbol{m}}\vr_2^{\boldsymbol{j}\boldsymbol{k}}|l_A\rangle\langle\tilde{l}_A|\otimes|m_B\rangle\langle \tilde{m}_B|\otimes|j_A\rangle\langle\tilde{j}_A|\otimes|k_B\rangle\langle\tilde{k}_B|\right)\right]\nonumber\\
    &=\tr\left[\left(\mathbb{S}_A\otimes\mathbb{S}_B\right)\left(\vr_1\otimes\vr_2\right)\right],
\end{align}
where the fourth equality holds if $\vr_2=\vr_2^{\top_B}$. Similarly, $\tr\left[\left(\Pi_A\otimes\mathbb{S}_B\right)\left(\vr_1\otimes\vr_2\right)\right]=\tr\left[\left(\mathbb{S}_A\otimes\mathbb{S}_B\right)\left(\vr_1\otimes\vr_2\right)\right]$ means $\vr_2=\vr_2^{\top_A}$ and $\tr\left[\left(\Pi_A\otimes\Pi_B\right)\left(\vr_1\otimes\vr_2\right)\right]=\tr\left[\left(\mathbb{S}_A\otimes\mathbb{S}_B\right)\left(\vr_1\otimes\vr_2\right)\right]$ means $\vr_2=\vr_2^{\top}$. In summary, the following equality is true
\begin{align}
    \tr\left[(\mathbb{S}_A+\Pi_A)\otimes(\mathbb{S}_B+\Pi_B)(\vr_1\otimes\vr_2)\right]=4\tr\left[\left(\mathbb{S}_A\otimes\mathbb{S}_B\right)(\vr_1\otimes\vr_2)\right],
\end{align}
if and only if $\vr_2$ is a PTI state.
		
To test our protocol, we estimate the overlap $\tr(\vr_1\vr_2)$ of two $5\otimes5$ states $\vr_1=\vr(0.1)$ and $\vr_2=\vr(0.9)$, where the parameterized quantum state $\vr(p)=p|\psi\rangle\langle\psi|+(1-p)\frac{\eins_5\otimes\eins_5}{5^2}$, and the pure state $|\psi\rangle=\frac{1}{\sqrt{5}}\sum_{i=0}^{4}|ii\rangle$. We utilize the approach in observation~\ref{Ob:OverlapEstimation} and obtain an overlap estimator $f_{\mathrm{es}}$, and then find an estimated error $E=|\tr(\vr_1\vr_2)-f_{\mathrm{es}}|$. Repeating the above procedure $N_{\mathrm{iter}}$ times, we obtain $N_{\mathrm{iter}}$ estimated errors $E(i)$, $i=1,\cdots,N_{\mathrm{iter}}$. Fig.~\ref{SFig1} (c) presents the average error $\sum_{i=1}^{N_{\mathrm{rm}}}E(i)/N_{\mathrm{iter}}$ with $N_{\mathrm{iter}}=500$ and different number of random orthogonal matrices. The parameters in observables are set as $\alpha_1=0$, $\alpha_2=1$, $\beta_1=1$, and $\beta_2=-\frac{3}{2}$. The results indicate that $10^4$ orthogonal can obtain an overlap estimation up to an error $5.5\times10^{-3}$.

\twocolumngrid
\bibliography{RRMs}
\end{document}